\documentclass[12pt,epsf]{article}  
\usepackage{graphicx,amsmath}

\topmargin -30pt 
\oddsidemargin -30pt 
\evensidemargin 0pt
\textheight 21.5cm 
\textwidth 18.5cm %

\usepackage{amsmath,amsfonts,latexsym,amssymb}
\usepackage{verbatim,amsthm,curves,graphics}
\usepackage{mathrsfs}
\usepackage[]{inputenc}
\usepackage[T1]{fontenc}
\usepackage{hyperref}
\usepackage{color}
\usepackage{lmodern}
\usepackage[toc,page]{appendix}
\usepackage{fullpage}
\usepackage{authblk}
\usepackage[vcentermath]{youngtab}
\usepackage{ytableau}

\usepackage{enumitem}
\usepackage{amsthm}
\usepackage{amsfonts}
\usepackage{amssymb}
\usepackage{amsthm}
\usepackage{amsmath}    
\usepackage{hyperref}
\usepackage{bbm}
\usepackage{graphicx}
\usepackage{latexsym}
\usepackage{mathrsfs}
\usepackage{bbm}
\usepackage{slashed}
\usepackage{verbatim}
\usepackage{color}
\usepackage{moodle}
\usepackage{tikz}
\usepackage[normalem]{ulem} 
\usepackage{multicol}
%

\usepackage{geometry}
\geometry{a4paper}

\DeclareMathAlphabet\EuFrak{U}{euf}{m}{n}	
\SetMathAlphabet\EuFrak{bold}{U}{euf}{b}{n}	


\newcommand{\cA}{\mathcal{A}}
\newcommand{\cB}{\mathcal{B}}
\newcommand{\cC}{\mathcal{C}}

\newcommand{\cH}{\mathcal{H}}

\newcommand{\cM}{\mathcal{M}}
\newcommand{\cN}{\mathcal{N}}

\newcommand{\cR}{\mathcal{R}}

\newcommand{\cV}{\mathcal{V}}









\newcommand{\sH}{\mathscr{H}}

\newcommand{\sK}{\mathscr{K}}
















  \theoremstyle{plain}
  \newtheorem{definition}{Definition}[section]
  \newtheorem{theorem}[definition]{Theorem}
  \newtheorem{proposition}[definition]{Proposition}
  \newtheorem{corollary}[definition]{Corollary}
  \newtheorem{lemma}[definition]{Lemma}
  
  \theoremstyle{definition}
  \newtheorem{remark}[definition]{Remark}
  \newtheorem{example}[definition]{Example}

 
 \begin{document}

\author{Stefan Hollands}
\affil{\small Institute for Theoretical Physics, Leipzig University, Br\" uderstrasse 16, 04103 Leipzig, and MPI-MiS, Inselstrasse 22, 04103, Leipzig, Germany, stefan.hollands@uni-leipzig.de}
\author{Alessio Ranallo}
\affil{\small Dipartimento di Matematica,
Universit\`a di Roma Tor Vergata,
Via della Ricerca Scientifica, 1, I-00133 Roma, Italy, ranallo@mat.uniroma2.it}

\title{Complexity in algebraic QFT}

\maketitle

\begin{abstract}
We consider a notion of complexity of quantum channels in relativistic continuum quantum field theory (QFT) defined by the 
distance to the trivial (identity) channel. Our distance measure is based on a specific divergence between 
quantum channels derived from the Belavkin-Staszewski (BS) divergence. We prove in the prerequisite generality 
necessary for the algebras in QFT that the corresponding complexity has several reasonable properties: (i) the complexity of 
a composite channel is not larger than the sum of its parts, (ii) it is additive for channels localized in spacelike separated regions, (iii) it is convex, (iv)
for an $N$-ary measurement channel it is $\log N$, (v) for a conditional expectation associated with an inclusion of QFTs 
with finite Jones index it is given by $\log (\text{Jones Index})$. The main technical tool in our work is a new variational principle 
for the BS divergence.
\end{abstract}

\section{Introduction}

A  problem of both theoretical and practical interest in quantum information theory is to assess the ``complexity'' of a quantum state- or operation. A natural approach is to 
take as a measure of complexity the minimum number of operations from an underlying set considered as ``basic'' \cite{bernstein1993quantum, nielsen2005geometric,nielsen2006quantum}. 
Typical results in this context include 
bounds on the growth of complexity under time evolution, see e.g. \cite{haferkamp2022linear,Li:2022hkh}. 
There are also proposals in the context of the AdS-CFT correspondence, linking the growth of complexity 
of a state in the boundary quantum field theory (QFT) to various geometric quantities in the bulk, see e.g. \cite{brown2018second,stanford2014complexity,brown2015complexity}. 

One may ask how to define a notion of complexity directly in a relativistic continuum QFT without reference to holographic ideas. 
In QFT, one faces the immediate problem to identify a suitable set of basic operations with respect to which the complexity of a composite operation is 
supposed to be assessed. If one wants to maintain a close analogy to ideas such as \cite{haferkamp2022linear,Li:2022hkh}, it appears that one would have to specify a preferred set of local quantum field operators within some lattice regularization of the theory. For instance, for Gaussian field theories, concrete proposals include \cite{chapman2018toward,jefferson2017circuit}, which lead to divergent results\footnote{Corresponding divergences also naturally arise in holographic approaches.} as the UV-cutoff is removed. 
For a general QFT it seems to us that both the operator basis and lattice regularization 
would be highly non-unique in view of the universality phenomenon, and at any rate also clearly breaking the relativistic invariance of the theory. 

One approach to this issue is to take a broader view of the problem, departing from the notion of basic operation and focussing attention instead on a suitable notion of ``distance'', $D(S\|T)$, between two channels $S,T$. Complexity would then be defined as $c(T)=D(id\|T)$, the distance to the trivial (identity) channel. Of course,  
one would like specific properties from $D$ to connect to the idea of complexity. 
Natural requirements would be: 
\begin{itemize}
\item
Subadditivity: $c(T_1 \circ T_2) \le c(T_1) + c(T_2)$ expressing that the complexity of a composite channel is 
not bigger than the sum of its parts. In particluar, for a 1-parameter (Markov-) semi-group $T_t, t \ge 0$ of channels, 
we automatically get at most linear growth in time, $c(T_{Nt_0}) \le C_0 N$. 
\item 
Locality: $c(T_1 \circ T_2) = c(T_1) + c(T_2)$ if  $T_1$ and $T_2$ are localized in spacelike related parts of the system. This 
expresses that the complexity $c$ respects Einstein causality/locality. 
\item
Convexity: Thinking about performing operations $T_i$ randomly with some probabilities $p_i$ it is natural to ask that 
$c(\sum p_i T_i) \le \sum p_i c(T_i)$. 
\end{itemize}
One way to obtain a notion of channel divergence, hence $c$, is to start with a corresponding divergence $D(\varphi \| \psi)$ in the ordinary sense between quantum 
states\footnote{
\label{foot1}
In this paper we follow the conventions in operator algebras that a state $\psi$ is a positive functional on the observable algebra. 
For matrix algebras, $\psi(m) = {\rm Tr}(m\rho_\psi)$, where $\rho_\psi$ is the corresponding density matrix, see section \ref{sec:conventions} for our conventions.} $\varphi, \psi$ (see e.g. \cite{Wil}), by considering how much the actions of $T$ and $S$ on a state can deviate as quantified by this divergence.
With this idea in mind, a naive guess would be to consider $\sup_{\psi} D(\psi \circ S \| \psi \circ T)$, where the maximization 
is over all normalized states of the system and $\psi \circ T$ is the action of the channel on the state $\psi$, viewed in this paper as an expectation functional on observables, 
see footnote \ref{foot1}.
However, as is well-known, this notion is actually inadequate for quantum systems because 
one can obtain refined information about the action of channels by coupling the systems in question to an ancillary system and considering 
states that have a suitably engineered entanglement between the original system and the ancillary system. So one should define instead
\begin{equation}
D(S\|T) = \sup_{\psi, \cA} D(\psi \circ (S \otimes id_\cA) \| \psi \circ (T \otimes id_\cA))
\end{equation}
where the maximization is now over all states of the original observable algebra $\cM$ tensored with the ancillary algebra $\cA$. This is the definition that we shall also adopt, up to 
some technical caveats related to the fact that we will be dealing with  von Neumann algebras of a sufficiently general type as appropriate for QFT.\footnote{In the setting of von Neumann algebras, it is most natural to model the ancillary system by another von Neumann algebra, $\cA$, and it does not seem 
natural to restrict the nature of that system, e.g. by imposing that $\cA$ should have a particular type such as I$_n$. Then we must also have an enlarged Hilbert space on which both the original von Neumann algebras as well as the ancillary algebra $\cA$ acts, i.e. we must consider bi-modules of von Neumann algebras, see e.g. \cite{sauvageot1983produit,Lon18} and references therein.}
 
Of course the main question is what $D$ we should start from. One possibility might be a geometric approach along the lines of 
\cite{nielsen2005geometric,nielsen2006quantum}. In \cite{Li:2022exc}, on the other hand, the authors propose to use a particular 
quantum version \cite{de2021quantum} of the classical ``Wasserstein-distance'', see e.g. \cite{otto2000generalization}, and derive several convincing properties of the corresponding notion of complexity, including the ones listed above. The quantum Wasserstein distance as defined by \cite{de2021quantum} is 
for finite dimensional systems with Hilbert space of the form $(\mathbb{C}^d)^{\otimes N}$. While it may be possible to generalize 
it to von Neumann algebras of type III appearing in QFT \cite{Buc87}, we proceed differently here
and work with the so-called Belavkin-Staszewski (BS) divergence \cite{belavkin1982c} $D_{BS}$. 
That divergence has been considered\footnote{Divergences of this type have also been considered 
in the context of QFT in \cite{furuya2021monotonic}.} recently in the context of channel discrimination by \cite{Wil20} and is 
\begin{equation}
D_{BS}(\varphi\| \psi) = {\rm Tr}(\rho_\varphi^{} \log[\rho_\varphi^{1/2} \rho_\psi^{-1} \rho_\varphi^{1/2}])
\end{equation}
for matrix algebras.
Here, $\rho_\psi$ is the density matrix representing the expectation functional $\psi$, i.e. $\psi(m) = {\rm Tr}(m\rho_\psi)$ for all $m\in \cM$. 
A generalization to arbitrary von Neumann algebras is possible \cite{Hia21,Hia19}.
Our reason for considering $D_{BS}$ is that \cite{Fan21} (see also \cite{fang2020chain}) have shown 
that the corresponding channel divergence has the desired subadditivity property\footnote{In fact, these authors 
study the corresponding Renyi-type ``geometric divergences'', of which the BS-divergence arises as a limit.} in the finite dimensional setting, 
contrary to some other well-known divergences such as, say, 
the more commonly used Araki-Umegaki relative entropy \cite{Ara76}. In this work, we analyze the corresponding channel divergence in 
the context of general von Neumann algebras, and prove that the corresponding notion of complexity has the above properties. 

Hence, it is of potential use in QFT. We also prove a number of further properties:
\begin{enumerate}
\item If $T(a) = uau^*$ is the channel corresponding to a non-trivial local unitary, then $c(T)=\infty$.
\item If $T(a) = \rho(a)$ is the channel corresponding to a non-trivial representation of the QFT (``charge superselection sector''), then $c(\rho) = \infty$. 
\item If $M(a) = \sum_{i=1}^N e_i a e_i$ is the channel corresponding to a local $N$-ary von Neumann measurement, then $c(M) = \log N$.
\item Let $E_\rho$ be the (minimal) conditional expectation from $\cA(O)$ to $\rho(\cA(O))$ where $\rho$ is a charge superselection sector 
(charged representation), then  
\begin{equation}
c(E_\rho) = \log(\text{Jones Index}) = \log d_\rho^2,
\end{equation}
where we mean 
the Jones index \cite{Jon83} of the inclusion $\rho(\cA(O)) \subset \cA(O)$, and where $d_\rho$ is the statistical (or ``quantum-'') dimension of the sector.\footnote{The last 
equality is a direct consequence of the index-statistics theorem \cite{Lon89}.}
\end{enumerate}
Items 1), 2) are basically negative results, but perhaps not totally unreasonable if we remember that any local operation in a continuum QFT (i.e. an operation in 
a finite spacetime region) must still involve an infinite number of degrees of freedom. The channels in items 3), 4) are conditional expectations. 
This suggest that these are to be regarded as the basic operations in QFT. 

Particular measurements in 3) implementing the idea of ``setting individual q-bits'' can be constructed trivially as follows. Imagine the QFT has a ``basic'' real scalar field $\phi$ and 
consider a cube of side-length $\delta$ in a time slice. Let $f$ be a non-negative testfunction supported in the cube and let $S= \int \phi(0,{\bf x}) f({\bf x}) d^{n-1}{\bf x}$, where $n$
is the dimension of spacetime, and let $p_\pm$ be the projectors corresponding to a positive/negative measurement of $S$. Shifting the cube periodically in the $n-1$ spatial directions we can obtain a finite lattice $\Lambda$
with corresponding projections $p_{{\bf x},\sigma}, \sigma = \pm, {\bf x} \in \Lambda$ associated with each point $\bf x$ of the (dual) lattice. Then 
we can define projections $e(\{ \sigma \}) = \prod_{{\bf x} \in \Lambda} p_{{\bf x},\sigma({\bf x})}$, each corresponding to measuring 
a particular lattice configuration $\{ \sigma \}$, e.g. 

\pagebreak
\begin{figure}[h]
\begin{center}
 {\small
\begin{ytableau}
+&-&-&+&-\\
-&-&+&+&+\\
-&+&-&-&-\\
-&+&+&-&+\\
-&+&+&-&+
 \end{ytableau}
 }
 \end{center}
\end{figure}
\noindent
The complexity of the corresponding measurement channel is clearly 
\begin{equation}
c(M) = \frac{{\rm vol}(\Lambda)}{\delta^{n-1}} \log 2
\end{equation}

As an example of item 4), consider the QFT of a real $N$-component free complex Klein-Gordon quantum field $\phi_I(x), I=1, \dots, N$. 
We consider as observables the charge neutral operators (gauge invariant observables) under the $SU(N)$-symmetry. Consider a state $\Psi$
in the Hilbert space which is in some non-trivial representation $R$ of $SU(N)$. Then $\Psi$ cannot be generated from the vacuum $\Omega$
by the action of any charge neutral operator $a$, so the representation of charge neutral operators built on $\Psi$ is not unitarily equivalent 
to the vacuum representation. In fact, by DHR theory \cite{doplicher1971local,doplicher1969fields}, there exists an endomorphism 
$\rho$ of the algebra of charge neutral operators such that
\begin{equation}
\langle \Omega, \rho(a) \Omega \rangle = \langle \Psi, a \Psi \rangle\quad \text{for all charge neutral $a$,}
\end{equation}
and $\rho$ implements the charged sector with representation $R$. The 
statistical dimension $d_\rho$ of this $\rho$ equals the dimension $d_R$ of the representation $R$ in this case, e.g. $d_\rho = N^2-1$ if 
$R$ is the adjoint representation. Details of this construction are given in example \ref{DHRexp} below. For low dimensional QFTs, $d_\rho$ does not have to be integer.

In fact, the Jones index in 4) ($=d_\rho^2$) 
is restricted to the set $\{ 4\cos^2(\pi/n) : n=3,4,5,\dots\} \cup [4,\infty]$ by Jones' theorem \cite{Jon83}, the smallest non-trivial value of which is 
$2$, realized e.g. by the sector $\rho$ of the $(4,3)$ minimal model (Ising) with quantum dimension $d_\rho=\sqrt{2}$. We conjecture that for any localized channel $T$\footnote{This would imply in particular that for any 1-parameter Markov semi-group $T_t$ of channels $c(T_t)$ is necessarily discontinuous at 
$t=0$.} 
\begin{equation}
\text{Either} \quad c(T) \ge \log 2 \quad \text{or} \quad T=id \quad (\text{conjecture}),
\end{equation}
which is reminiscent of the famous Landauer bound. 

This paper is organized as follows. In section 2, we first recall the theory of $f$-divergences and operator means for states on 
von Neumann algebras and introduce our main technical tool, 
a variational characterization of $D_{BS}$ (proposition \ref{BS}). In section 3 we introduce 
$D_{BS}$ for channels of von Neumann algebras of general type, and prove some basic properties. In section 4, we apply these results to QFT. 

\section{Preliminaries}

\subsection{Von Neumann algebra terminology and basic objects}
\label{sec:conventions}
See e.g. \cite{Dix11} as a general general reference. 
\begin{itemize}
\item 
Von Neumann algebra: 
A von Neumann algebra $\cM$ is a closed $\ast-$subalgebra of the algebra of bounded operators $B(\sH)$ on a 
Hilbert space $\sH$ in the weak operator topology. The weak topology is defined by the matrix elements, i.e. the open neighborhoods are
$N(\xi_i,\eta_i,\varepsilon,a) = \{b : \  |\langle\xi_i, (b-a) \eta_i \rangle|<\varepsilon, i=1, \dots, n\}$, where $a \in B(\sH), \xi_i,\eta_i \in \sH, \varepsilon>0$.  
All Hilbert spaces appearing in this paper are assumed to be separable. The squared norm $\|m\|^2$ of an operator $m \in \cM$ 
is defined to be the supremum of the spectrum $\sigma(mm^*)$ of the positive operator $mm^*$. The subset of all such operators is 
denoted by $\cM_+$ (positive part).

\item 
Any finite-dimensional von Neumann algebra is isomorphic to $\oplus_{i=1}^N M_{n_i}(\mathbb{C})$ for some $n_i$, where $M_n(\mathbb{C})$ is the algebra of 
complex $n \times n$ matrices. 

\item
(Bi-)Commutant:
An equivalent characterization of von Neumann algebra is $\cM''=\cM$, where $\cM':=\left\{x \in B(\sH) \ | \ xm=mx \ \forall m \in \cM \right\}$ is the commutant 
of $\cM$ in $B(\sH)$, and $\cM''=(\cM')'$ is the bicommutant. A von Neumann algebra is called a factor if $\cM \cap \cM' = \mathbb{C}1$. One denotes by 
$\cA \vee \cB = (\cA \cup \cB)''$ the von Neumann algebra generated by sets of bounded operators $\cA,\cB$.

\item
States:
A state is a linear, positive, normal, normalized functional $\psi:\cM \to \mathbb{C}$, where positive means $\psi(mm^*) \ge 0$ for all $m \in \cM$ and normalized means $\psi(1)=1$. 
A linear functional $\psi$ is called normal if it is ultra-weakly continuous, and a positive linear functional $\psi$ is called faithful if $\psi(mm^*)=0 \Longrightarrow m=0$. 
The set of normal states is also denoted by $\cM_{*,+}$. The existence of 
a normal faithful positive linear functional is guaranteed since we are assuming that $\sH$ is separable. On a matrix algebra every state is of the form 
\begin{equation}
\label{densitym}
\psi(m) = {\rm Tr}(m\rho_\psi)
\end{equation}
 for a unique density matrix $\rho_\psi$.

\item 
Channels:
Generalizing the notion of state, a 
channel $T:\cM \to \cN$ is a normal, positive, unital (meaning $T(1)=1$) linear map which is also completely positive, meaning that 
$T \otimes id: \cM \odot B(\sK) \to \cN \odot B(\sK)$ is positive, where $\odot$ means the algebraic tensor product and where $\sK$ is any Hilbert space. 
If $\psi$ is a state on $\cN$, then $\psi \circ T(m):=\psi(T(m))$ is a state on $\cM$, and $\psi \mapsto \psi \circ T$ corresponds to the dual action of channels on 
states (Schr\" odinger picture). In much of the quantum information theory literature, the Schr\" odinger picture is considered, but of course this is just a matter of convention. 
For finite dimensional von Neumann algebras $\cN,\cM$, the action of $T$ in the Schrödinger picture may also be thought of as an action $T^+$ on density matrices
\eqref{densitym}, 
\begin{equation}
{\rm Tr}(T^+(\rho_\psi) m) = {\rm Tr}(\rho_\psi T(m)) \equiv \psi \circ T(m) \quad m \in \cM.
\end{equation}
Then $T^+$ is completely positive and trace preserving (corresponding to $T(1)=1$).

\item 
Standard form:
A vector $\Omega \in \sH$ is called cyclic if $\cM \Omega$ is dense in $\sH$ in the strong topology, and it is called separating if $m\Omega = 0 \Longrightarrow m=0$. 
Such a representation on $\sH$ of $\cM$ and vector can always be obtained by the GNS-representation of a faithful normal state $\omega$. A cyclic and separating vector 
is also called standard and a representation of $\cM$ on a Hilbert space with standard vector is called a standard representation. Associated with $\Omega$ is an anti-linear 
involution $J$ on $\sH$ such that $J\Omega = \Omega$ and $J\cM J = \cM'$ called the modular conjugation. The closure of the set of vectors of the form $aJaJ\Omega, a \in \cM$
is called ``natural cone'' and is also denoted as $L^2(\cM,\Omega)_+ \subset \sH$. 

\item
Conditional expectations:
If $\cN \subset \cM$ is a von Neumann subalgebra, then a conditional expectation $E:\cM \to \cN$ is a channel such that $E(n_1mn_2) = n_1 E(m) n_2$ for all 
$m \in \cM, n_i \in \cN$. The index $\lambda_E \in [1,\infty]$ of a conditional expectation is the infimum  over all positive real numbers $\lambda$ such 
that the $E(mm^*) \ge \lambda^{-1} mm^*$ for all $m \in \cM$. 

\item
Jones-index:
Assume that $\cN, \cM$ are factors such that there exists a conditional expectation $E:\cM \to \cN$. If $\lambda_E<\infty$, there exists a unique $E_0$ 
called ``minimal conditional expectation'' \cite{Hia88} such that $\lambda_{E_0}$ is minimal, and in 
such a case $\lambda_{E_0} =: [\cM:\cN]$ is called the Jones-Kosaki index \cite{Kos86a,Jon83} of the inclusion. Otherwise we set $[\cM:\cN]=\infty$. 

\item
$L^p$-space:
One can construct so-called ``non-commutative $L^p$ spaces'' ($p \in [1,\infty]$) interpolating between the space of normal functionals on $\cM$ and $\cM$ itself. 
They are defined relative 
to some standard vector $\Omega$ and denoted as $L^p(\cM,\Omega)$, see \cite{araki1982positive}. One has $L^2(\cM,\Omega) = \sH$. Beyond this, we will only need $L^\infty(\cM,\Omega)$ which is a linear subspace of $\sH$, and we shall 
mainly used the following characterization of this space \cite{araki1982positive}: As a vector space $L^\infty(\cM,\Omega) = \cM \Omega$. The Banach space norm is $\| \xi \|_{L^\infty(\cM,\Omega)} 
= \|m\|$ where $m \in \cM$ is the unique element such that $\xi = m\Omega$.

\item Opposite algebra: The opposite algebra $\cM^{op}$ of a von Neumann algebra $\cM$ is identical as a vector space with $*$-operation, but has the reversed 
product $m_1^{op}m_2^{op} = (m_2m_1)^{op}$.
\end{itemize}

\subsection{Maximal $f$-divergence for bounded operators}

See \cite{Bha97,Hia10, Hia21, hiai2022pusz}
as general references. Central to the concept of operator mean and the divergences studied in this paper are the notions of operator monotone- 
and operator convex functions. 

\begin{definition}
Let $I\subset \mathbb{R}$ be an interval. $f:I\rightarrow \mathbb{R}$ is said to be 
\begin{itemize}
    \item operator monotone if $f(A)\leq f(B)$ whenever $A,B \in B(\sH)$ are self adjoint operators on a Hilbert space such that $A \le B$ and that their spectra satisfy 
    $\sigma(A),\sigma(B) \subset I$;
    \item operator convex if $f(\lambda A + (1-\lambda)B)\leq \lambda f(A) + (1-\lambda)f(B), \ \forall \lambda \in (0,1)$ whenever $A,B \in B(\sH)$, with $\sigma(A),\sigma(B) \subset I$.
    \end{itemize}
\end{definition}

\begin{remark}
Let $t_0 \in (0,\infty]$ and $f:[0,t_0)\rightarrow \mathbb{R}$. Then ($f$ is operator convex and $f(0)\leq 0$) if and only if ($\frac{f(t)}{t}$ is operator monotone on $(0,t_0)$). 
Furthermore, if $f:[0,t_0)\rightarrow \mathbb{R}$ is operator monotone, then it is also operator concave. 
While the converse is not true, it is the case that ($f$ operator concave and $f(t)\geq 0$ for all $t \in [0,\infty)$) implies ($f$ is operator monotone on $[0,\infty)$).
\end{remark}

\begin{example}
On $[0,\infty)$, the function $t^\alpha$ is operator monotone if and only $\alpha \in [0,1]$. $t^\alpha$ is operator convex if and only if $\alpha \in [-1,0]\cup [1,2]$. 
The function $\log(t)$ is operator monotone on $(0,\infty)$.
\end{example}

The following well-known representation \eqref{frep} allows one to reduce many constructions involving operator monotone functions to a certain weighted averages of a special 
operator monotone function. 
Consider a continuous operator monotone function $f$ on $[0,\infty)$, let $a=f(0), \ b=f'(\infty):=\lim_{t\rightarrow \infty}\frac{f(t)}{t}$. There exists a unique finite positive Radon measure 
$\mu$ on $[0,\infty)$, such that 
\begin{equation}
\label{frep}
f(t)=a+bt+\int_{(0,\infty)} \frac{(1+s) t}{t+s}d\mu(s).
\end{equation}

\begin{definition}[Kubo--Ando means, \cite{KA80}]
Consider a binary operation $\sigma$ on $B(\sH)_+$ (non-negative self-adjoint bounded operators), i.e. $\sigma: B(\sH)_+ \times B(\sH)_+ \rightarrow B(\sH)_+$. 
We write $\sigma(A\times B)=: A\sigma B \in B(\sH)_+$. $\sigma$ is called a Kubo--Ando connection if, for all $A,B,C,D \in B(\sH)_+$, the following hold 
\begin{enumerate}
    \item joint monotonicity, i.e. $A\leq C, \ B\leq D,\ \text{then} \ A\sigma B\leq C\sigma D;$
    \item transformer inequality, i.e. $C(A\sigma B)C\leq (CAC)\sigma (CBC);$
    \item upper semicontinuity, i.e. whenever $A_n \downarrow A$, $B_n \downarrow B$ strongly, then $A_n\sigma B_n \downarrow A\sigma B,$ strongly.
\end{enumerate}

Moreover $\sigma$ is called a (Kubo--Ando operator) mean if the above hold and 

\begin{enumerate}[resume]
    \item normalization, i.e. $$I_\sH \sigma I_\sH=I_\sH.$$
\end{enumerate}

\end{definition}

The Kubo--Ando theorem establishes a one-to-one correspondence between operator connections and non-negative operator monotone functions on $[0,\infty)$, see
\cite[Theorems 3.3, 3.4]{KA80}. The isomorphism is provided by $\sigma \mapsto f$, where $f(t)I_\sH:=I_\sH \sigma (tI_\sH)$. 
Its inverse $f \mapsto \sigma$ is defined by taking the integral expression \eqref{frep} of a non-negative operator monotone function $f$ on $[0,\infty)$, 
$f(t)=a+bt+\int_{(0,\infty)} \frac{(1+s) t}{t+s}d\mu(s)$, and then defining the corresponding $\sigma$ as   
\begin{equation}
\label{AsigmafB}
A\sigma B:= a A + b B + \int_{(0,\infty)}\frac{1+t}{t} [(tA):B] \, d\mu(t)
\end{equation}
Here, $A:B$ is the \emph{parallel sum} operator connection which is 
defined as the bounded quadratic form (see e.g. \cite[Lemma 3.1.5]{Hia10})
\begin{equation}
\label{C:D}
\langle \xi, (A:B)\xi \rangle:=\inf \{ \langle \zeta,A\zeta\rangle + \langle \xi-\zeta, B(\xi-\zeta) \rangle  \ : \ \zeta \in \sH\}.
\end{equation}
When $A$ and $B$ are positive operators with bounded inverses, then 
\begin{equation}
\label{A:B}
A:B = (A^{-1} + B^{-1})^{-1}.
\end{equation}

\begin{example}[Left and right trivial means]
     The left trivial mean $\sigma_1$ is induced by the function $f(x)\equiv 1$ and gives $A\sigma_1 B=A$. 
     The right trivial mean $\sigma_x$ is induced by $f(x)=x$ and gives 
     $A\sigma_x B=B$.
     \end{example}
     
     \begin{example}[$\alpha$--geometric means]
The $\alpha$--geometric means are defined in terms of
     the operator monotone function 
     \begin{equation}
     f_\alpha(t):=t^\alpha=\frac{\sin \alpha\pi}{\pi}\int_{0}^{\infty}\frac{t}{s + t}\frac{ds}{s^{1-\alpha}},\quad  t \ge 0, \alpha \in (0,1).
     \end{equation}
     The corresponding measure $\mu_\alpha$ and constants $a_\alpha, b_\alpha$ as in \eqref{frep} are therefore
     \begin{equation}
     d\mu_{f_\alpha} =\frac{\sin \alpha\pi}{\pi} \frac{t^\alpha dt}{t(t+1)}, \quad a_\alpha = b_\alpha = 0.
     \end{equation}
     Particular examples are the left- and right trivial mean and the geometric mean for which $\alpha = 1/2$. 
          \end{example}
 
  \begin{example}[Logarithm] The logarithm $f(t) = \log t$ is operator monotone on $(0,\infty)$ and 
  formally has $a=-\infty$, $b=0$ and $d\mu_{\log} = t^{-1}(1+t)^{-1}dt$.
  We will typically consider the approximation $f_n(t):=\log(t+\frac{1}{n}), n \in \mathbb{N}$ which is operator monotone on $[0,\infty)$ with $a_n= -\log n, b_n = 0$.
  \end{example}    
   
Consider two bounded positive operators such that $B \le \lambda A$ for some $\lambda < \infty$. Then $f(A^{-1/2} B A^{-1/2}) \le f(\lambda) I$ is 
a bounded operator and the Kubo-Ando mean $\sigma$ corresponding to the operator monotone 
function $f$ can also be expressed as \cite[Theorem 3.3]{KA80}
\begin{equation}
\label{AsigmafB1}
A \sigma B = A^{1/2} f(A^{-1/2} B A^{-1/2}) A^{1/2}, 
\end{equation}
as one may see using the integral representations \eqref{frep}, \eqref{AsigmafB} as well as the expression for parallel sum \eqref{A:B}. 
The following divergences first appeared in \cite{petz1998contraction} and were developed further in \cite{matsumoto2018new}, \cite{Hiai2017different}.

\begin{definition}\label{hiaidef0}
Consider a non-negative operator monotone function $f: [0,\infty) \to [0,\infty)$ characterized by \eqref{frep} and 
positive trace class operators $A,B$ such that $B \le \lambda A$ for some $\lambda > 0$. Then the
``maximal quantum $f$--divergence'' of $A$ with respect to $B$ is defined by
\begin{equation}
D_f(A\|B):=-\log {\rm Tr}_\sH (A \sigma B),
\end{equation}
where $\sigma$ is the operator connection corresponding to $f$.
\end{definition} 

\begin{remark}\label{hiaidef0rem}
For general positive trace class operators $A,B$ such that $B \le \lambda A$ possibly does not hold for any $\lambda < \infty$, one defines
\begin{equation}
D_f(A\|B) := \lim\limits_{\varepsilon \downarrow 0}D_f(A+ \varepsilon C \| B + \varepsilon C) \in (-\infty, +\infty].
\end{equation}
Here $C$ is any bounded positive operator such that $\lambda^{-1}(A+B) \le C \le \lambda(A+B)$ for some $\lambda < \infty$. 
By the monotonicity property of operator connections, the limit exists because the sequence is monotone decreasing. 
The limit is independent of the particular choice of $C$
as a special case of lemma \ref{lem:1} below. 
\end{remark}

\subsection{Maximal $f$-divergence for von Neumann algebras}

Operator connections and maximal $f$-divergences can be generalized from bounded operators 
to more general settings such as to suitable classes of
unbounded positive quadratic forms \cite{Kos17,Kos06,HK21}. In this work, we will mainly be interested in 
the notion of operator connection and maximal $f$-divergence between two positive normal functionals $\varphi, \psi$ on a von Neumann algebra $\cM$. 
This setting is investigated in great detail in \cite{Hia19,Hia21} to which we refer as general references. 
Based on the results \cite[Appendix D]{Hia21} one can for instance obtain quite easily 
a variational characterization of the maximal $f$-divergence which will be the 
basis of most developments in this work. 

The starting point in the von Neumann algebra setting is the Connes cocycle together with the following well-known result \cite{Con73}, see e.g.
\cite{araki1982positive} for the definitions of the modular operators $\Delta_\psi$ and Connes cocycles $[D\psi:D\varphi]_t$.

\begin{lemma}
 Let $\psi,\varphi$ be normal, positive functionals on $\cM$, assume that $\psi\leq \lambda \varphi$ for $\lambda >0$. Then the Connes cocycle derivative $[D\psi:D\varphi]_{t}$ admits an extension to a weakly continuous ($\cM$-valued) function $[D\psi:D\varphi]_z$ for $z$ in the strip $-1/2\leq\Im z\leq 0$ which is analytic in the interior. The generator $[D\psi:D\varphi]_{-i/2} \in \cM$ has norm less than $\sqrt{\lambda}$, and $\Delta_{\psi}^{1/2}=[D\psi:D\varphi]_{-i/2}\Delta_{\varphi}^{1/2}$.
\end{lemma}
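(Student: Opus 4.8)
The plan is to work in a standard representation of $\cM$ on $\sH$ in which $\varphi$ is taken as the reference functional. Since $\psi \le \lambda\varphi$ forces the support inequality $s(\psi) \le s(\varphi)$, I may (after compressing by $s(\varphi)$, which only decorates the cocycle with support projections) assume $\varphi$ faithful, so that it is implemented by a cyclic and separating vector $\Phi$ in the natural cone, with modular operator $\Delta_\varphi = \Delta_\Phi$ and modular conjugation $J$. Let $\Psi$ be the vector in the natural cone implementing $\psi$. I will use the relative Tomita operator $S_{\Psi,\Phi}$, the closure of $a\Phi \mapsto a^*\Psi$ for $a\in\cM$, with polar decomposition $S_{\Psi,\Phi} = J\Delta_{\Psi,\Phi}^{1/2}$; in the present conventions $\Delta_\psi = \Delta_{\Psi,\Phi}$. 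The starting point is the standard fact (Araki, \cite{Con73}) that the Connes cocycle is $[D\psi:D\varphi]_t = \Delta_{\Psi,\Phi}^{it}\Delta_\Phi^{-it}$, an element of $\cM$ for each real $t$ because $\Delta_{\Psi,\Phi}$ and $\Delta_\Phi$ induce the same modular flow on $\cM'$. The goal is then to continue $z \mapsto \Delta_{\Psi,\Phi}^{iz}\Delta_\Phi^{-iz}$ into the strip $-1/2 \le \Im z \le 0$ and read off the boundary value at $z = -i/2$.

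The heart of the matter, and the step where the hypothesis $\psi\le\lambda\varphi$ enters quantitatively, is the boundedness of the formal endpoint $\Delta_{\Psi,\Phi}^{1/2}\Delta_\Phi^{-1/2}$. For $a\in\cM$ one computes directly
\[
\|\Delta_{\Psi,\Phi}^{1/2}a\Phi\|^2 = \|J S_{\Psi,\Phi}a\Phi\|^2 = \|a^*\Psi\|^2 = \psi(aa^*) \le \lambda\,\varphi(aa^*) = \lambda\,\|\Delta_\Phi^{1/2}a\Phi\|^2 .
\]
Since $\Delta_\Phi^{1/2}$ maps the core $\cM\Phi$ onto the dense set $J\cM\Phi$, writing $\eta = \Delta_\Phi^{1/2}a\Phi$ shows that $\Delta_{\Psi,\Phi}^{1/2}\Delta_\Phi^{-1/2}$ is densely defined and bounded with $\|\Delta_{\Psi,\Phi}^{1/2}\Delta_\Phi^{-1/2}\| \le \sqrt\lambda$. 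This is the bound that will reappear as $\|[D\psi:D\varphi]_{-i/2}\|\le\sqrt\lambda$.

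With the endpoint under control I would establish the analytic continuation by a Hadamard three-lines argument on matrix elements. For $\xi\in\sH$ and $a$ in the Tomita algebra of $\Phi$ (entire analytic for the modular group, a dense $*$-subalgebra on which $\Delta_\Phi^{-iz}a\Phi = \sigma^\varphi_{-z}(a)\Phi$ lies in the domain of every power of $\Delta_\Phi$), the function $F(z) = \langle\xi,\Delta_{\Psi,\Phi}^{iz}\Delta_\Phi^{-iz}a\Phi\rangle$ is analytic in the open strip and continuous up to its closure; on the edge $\Im z = 0$ it is bounded by $\|\xi\|\,\|a\Phi\|$ since $u_t$ is a contraction, and on the edge $\Im z = -1/2$, factoring $\Delta_{\Psi,\Phi}^{i(t-i/2)}\Delta_\Phi^{-i(t-i/2)} = \Delta_{\Psi,\Phi}^{it}\,(\Delta_{\Psi,\Phi}^{1/2}\Delta_\Phi^{-1/2})\,\Delta_\Phi^{-it}$ and invoking the previous paragraph gives the bound $\sqrt\lambda\,\|\xi\|\,\|a\Phi\|$. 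The three-lines theorem then yields a uniform bound in the strip, and in particular $\|u_{-i/2}\|\le\sqrt\lambda$ with $u_{-i/2} = \Delta_{\Psi,\Phi}^{1/2}\Delta_\Phi^{-1/2}$, which is precisely the asserted relation $\Delta_\psi^{1/2} = [D\psi:D\varphi]_{-i/2}\,\Delta_\varphi^{1/2}$. To see that $u_z\in\cM$ throughout the strip, hence $u_{-i/2}\in\cM$, I would note that for every $\omega$ in the pre-annihilator $\cM_\perp\subset B(\sH)_*$ the function $z\mapsto\omega(u_z)$ is bounded and analytic in the strip and vanishes on the edge $\Im z = 0$ (where $u_t\in\cM$); Schwarz reflection across that edge forces it to vanish identically, so $u_z$ lies in the weakly closed space $\cM$ for all $z$.

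The main obstacle I anticipate is the domain bookkeeping in the continuation step: $\Delta_{\Psi,\Phi}^{iz}$ and $\Delta_\Phi^{-iz}$ are unbounded for non-real $z$ and their product is a priori sensible only on a small common core, so the care lies in choosing the dense set of vectors (the Tomita algebra applied to $\Phi$) on which both operators act analytically and the three-lines estimate can be closed, and then in passing from matrix elements to a genuine $\cM$-valued weakly continuous function. The reduction to faithful $\varphi$ by support projections, and the resulting partial-isometry bookkeeping when $\varphi$ is not faithful, is a secondary technical point rather than a conceptual difficulty.
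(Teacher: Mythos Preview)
The paper does not supply its own proof of this lemma; it is quoted as a well-known result from \cite{Con73} (with a pointer to \cite{araki1982positive} for the definitions), so there is nothing to compare against. Your argument is the standard one and is correct in outline: the key inequality $\|\Delta_{\Psi,\Phi}^{1/2}a\Phi\|^2 = \psi(aa^*) \le \lambda\,\varphi(aa^*) = \lambda\,\|\Delta_\Phi^{1/2}a\Phi\|^2$ gives the boundedness and the norm estimate at the lower edge, the three-lines theorem extends this to the strip, and the Schwarz-reflection argument against $\cM_\perp$ is the usual way to see that the continuation stays in $\cM$. The domain issue you flag is genuine but manageable exactly as you indicate: for $a$ in the Tomita algebra of $\Phi$ one has $\Delta_\Phi^{-iz}a\Phi = \sigma^\varphi_{-z}(a)\Phi \in \cM\Phi \subset \operatorname{dom}(\Delta_{\Psi,\Phi}^{1/2})$, which is enough to make $F(z)$ well-defined and analytic on the open strip.
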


Using this and the next lemma, one can define \cite{Hia19}: 

\begin{definition}\label{hiaidef}
Consider an operator monotone function $f: [0,\infty) \to [0,\infty)$, and normal, positive functionals $\varphi,\psi$ be on $\cM$ such that there exists $\lambda>0$ such that $\lambda^{-1}\varphi\leq \psi\leq \lambda\varphi$. Then we have the positive invertible operator $T_\varphi^\psi :=([D\psi:D\varphi]_{-i/2})^\ast [D\psi:D\varphi]_{-i/2} \in \cM$. Define the maximal quantum $f$--divergence of $\varphi$ with respect to $\psi$ by 
\begin{equation}
D_f(\varphi\|\psi):=-\log (\varphi(f(T_\varphi^\psi))).
\end{equation}
\end{definition} 

\begin{lemma}\label{lem:1}
(See \cite{Hia19})
Let $\varphi,\psi$ be normal, positive functionals on $\cM$. For every $\phi \sim \varphi+\psi$, i.e. there exists $\lambda>0$ such that $\lambda^{-1}(\varphi+\psi)\leq \phi\leq \lambda (\varphi+\psi)$, the limit $$\lim\limits_{\varepsilon \downarrow 0}D_f(\varphi + \varepsilon \phi \| \psi+\varepsilon  \phi) \in (-\infty, +\infty]$$ exists, and it is independent on the choice of $\phi$ above.
\end{lemma}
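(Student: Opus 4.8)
The plan is to reduce both assertions---existence of the limit and its independence of $\phi$---to a single structural input: the joint monotonicity, in the pair $(\varphi,\psi)$, of the functional
$$S_f(\varphi,\psi) := \varphi(f(T_\varphi^\psi)) = e^{-D_f(\varphi\|\psi)},$$
meaning that $\varphi_1\le\varphi_2$ and $\psi_1\le\psi_2$ (as positive normal functionals) imply $S_f(\varphi_1,\psi_1)\le S_f(\varphi_2,\psi_2)$. This is the von Neumann algebra analogue of the joint monotonicity (property (1) in the definition of a Kubo--Ando connection) of the map $(A,B)\mapsto\mathrm{Tr}(A\sigma B)$ underlying Definition \ref{hiaidef0}; in the bounded setting it is traceable, via \eqref{frep} and \eqref{AsigmafB}, to the manifest monotonicity of the parallel sum \eqref{C:D}, and in the functional setting it is exactly the content established in \cite{Hia19,Hia21}. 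I would first record this as the key input to be cited, together with the elementary observation that for every fixed $\varepsilon>0$ the pair $\varphi+\varepsilon\phi,\ \psi+\varepsilon\phi$ satisfies the comparability hypothesis $\lambda^{-1}(\varphi+\varepsilon\phi)\le\psi+\varepsilon\phi\le\lambda(\varphi+\varepsilon\phi)$ of Definition \ref{hiaidef}---with a constant $\lambda=\lambda(\varepsilon)$ that blows up as $\varepsilon\downarrow0$, which is precisely why the regularization is needed---so that $D_f(\varphi+\varepsilon\phi\|\psi+\varepsilon\phi)$ is well defined for each $\varepsilon>0$.

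For existence of the limit I would argue by monotonicity in $\varepsilon$. If $0<\varepsilon_1<\varepsilon_2$, then $\varphi+\varepsilon_1\phi\le\varphi+\varepsilon_2\phi$ and $\psi+\varepsilon_1\phi\le\psi+\varepsilon_2\phi$ since $\phi\ge0$, so the key input gives $S_f(\varphi+\varepsilon_1\phi,\psi+\varepsilon_1\phi)\le S_f(\varphi+\varepsilon_2\phi,\psi+\varepsilon_2\phi)$; equivalently, $\varepsilon\mapsto D_f(\varphi+\varepsilon\phi\|\psi+\varepsilon\phi)$ is nonincreasing. Hence as $\varepsilon\downarrow0$ the divergence increases to its supremum. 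Since $T_\varphi^\psi\in\cM$ is bounded and $f$ is continuous, $S_f$ is finite for each $\varepsilon>0$, so $D_f>-\infty$ throughout, and the monotone limit therefore exists in $(-\infty,+\infty]$.

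For independence of $\phi$ I would use a rescaling sandwich. Given two admissible regularizers $\phi,\phi'\sim\varphi+\psi$, combining the two comparabilities produces a constant $c<\infty$ with $\phi\le c\,\phi'$. Then $\varphi+\varepsilon\phi\le\varphi+c\varepsilon\phi'$ and $\psi+\varepsilon\phi\le\psi+c\varepsilon\phi'$, so joint monotonicity yields $S_f(\varphi+\varepsilon\phi,\psi+\varepsilon\phi)\le S_f(\varphi+c\varepsilon\phi',\psi+c\varepsilon\phi')$. Writing $S_f^{(\phi)} := \lim_{\varepsilon\downarrow0} S_f(\varphi+\varepsilon\phi,\psi+\varepsilon\phi)$ (which exists, being monotone and bounded below by $0$) and letting $\varepsilon\downarrow0$, and noting $c\varepsilon\downarrow0$ as well, both sides converge and give $S_f^{(\phi)}\le S_f^{(\phi')}$; the symmetric argument gives the reverse inequality, hence $S_f^{(\phi)}= S_f^{(\phi')}$. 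Applying $-\log(\cdot)$ (with the convention $-\log 0=+\infty$ covering the case of infinite divergence) yields equality of the two limits of $D_f$.

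The main obstacle is not the soft monotone-sequence and sandwiching arguments above but the key input itself: rigorously lifting the joint monotonicity of the Kubo--Ando connection from bounded, trace-class operators to a pair of possibly non-faithful positive normal functionals on a general von Neumann algebra, where $T_\varphi^\psi$ is built from Connes cocycles and modular operators rather than from honest density matrices. This step relies on the spatial-derivative formalism of \cite{Hia19,Hia21}; once it is in place, the two claims of the lemma follow exactly as sketched.
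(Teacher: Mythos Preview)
The paper does not give its own proof of this lemma; it is simply stated with a citation to \cite{Hia19}, so there is no in-paper argument to compare against. Your sketch is correct and is essentially the standard argument one finds in that reference: joint monotonicity of the connection $(\varphi,\psi)\mapsto(\varphi\sigma\psi)(1)$ (equivalently of your $S_f$) makes $\varepsilon\mapsto D_f(\varphi+\varepsilon\phi\|\psi+\varepsilon\phi)$ monotone, hence the limit exists, and the sandwich $\phi\le c\phi'$, $\phi'\le c'\phi$ forces the limits for different regularizers to coincide. The only remark is that the ``main obstacle'' you flag---joint monotonicity in the von Neumann algebra setting---is in fact rather soft once one has the variational formula \eqref{varphi:psi} for the parallel sum of functionals (which is manifestly monotone in each argument) together with the integral representation of $\varphi\sigma\psi$; this is the route taken in \cite{Hia19,Hia21}.
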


It therefore makes sense to make the following definition \cite{Hia19}.

\begin{definition}\label{epsdef}
Let $\varphi,\psi$ let normal, positive functionals on $\cM$, and let $f: [0,\infty) \to [0,\infty)$ be an operator monotone function. 
The maximal quantum $f$-divergence of $\varphi$ with respect to $\psi$ is defined as 
\begin{equation}
D_f(\varphi \| \psi):=\lim\limits_{\varepsilon \downarrow 0}D_f(\varphi + \varepsilon \phi \| \psi+\varepsilon  \phi)
\end{equation}
where $\phi$ is any positive normal functional on $\cM$ satisfying $\phi \sim \varphi + \psi$. 
Note that we may chose $\phi = \varphi + \psi$.
\end{definition}

\begin{remark}
If $\cM = B(\sH)$ is a type I von Neumann factor (or more generally, direct sum of factors), positive normal functionals on $\cM$ are in one to one correspondence with positive trace class operators
on $\sH$. Under this identification, the above definition of maximal $f$-divergence reduces to definition \ref{hiaidef0} and remark \ref{hiaidef0rem}. 
\end{remark}

\begin{remark}
The attentive reader will notice that compared to \cite{Hia19}, we require $f$ in definition \ref{hiaidef} to be operator monotone rather than
operator convex, the order of the states is reversed and we have a logarithm. The presence of the logarithm is just for convenience  
to make the entropy additive under tensor products. If we ignore the logarithm then our definition reduces to $ -\hat{S}_{-f}(\psi\|\varphi)$ of \cite{Hia19} 
for non-negative operator monotone functions noting that $-f$ is operator convex. 
Of course, the definition of \cite{Hia19} works for the larger class of all operator convex functions.
\end{remark}

\subsection{Properties of maximal $f$--divergence}

Many properties of $D_f$ and $D_{BS}$, and of the corresponding connections $\sigma$ between states\footnote{In this paper, we shall often use the somewhat imprecise 
terminology ``Kubo-Ando mean'' for this, even though a Kubo-Ando mean usually refers to the case when $f: [0,\infty) \to [0,\infty)$ is an operator monotone function 
with $f(1)=1$.}, 
in the setting of von Neumann algebras are known, see e.g. \cite[Theorem 4.4, Proposition 4.5]{Hia21}. In the present work, a variational formula for $D_f$ and the BS divergence $D_{BS}$ will take center stage and from this, 
many of these properties could be seen directly in retrospect. 
First, one defines an analogue of the parallel sum \eqref{A:B} for two normal positive functionals $\varphi,\psi$
on the von Neumann algebra $\cM$ by $(z \in \cM)$
\begin{equation}
\label{varphi:psi}
(\varphi : \psi)(zz^*) := \inf \{ \varphi(xx^*) + \psi(yy^*) \mid x+y = z, x,y \in \cM \}. 
\end{equation}
Then $\varphi : \psi$ is a positive normal functional on $\cM_+$ which is extended to 
all of $\cM$ by writing a general element as a difference of elements from $\cM_+$. 
Using the notion of parallel sum, one can next define a notion of operator mean $\varphi \sigma \psi$ associated with 
an operator monotone function $f: [0,\infty) \to [0,\infty)$ with representation \eqref{frep} between 
two positive normal functionals on $\cM$ by an analogue of the formula \eqref{AsigmafB}:
\begin{equation}
(\varphi \sigma \psi)(m) := a\varphi(m) + b\varphi(m) + \int_{(0,\infty)} \frac{1+t}{t} [(t \varphi):\psi](m) \, d\mu(t), 
\end{equation}
where $a,b,d\mu$ with $a,b<\infty$ correspond to the operator monotone function $f: [0,\infty) \to [0,\infty)$  as in \eqref{frep}. 
By combining \cite[Theorem D.7, D.8, D.10]{Hia21}, it follows that
\begin{equation}
D_f(\varphi \| \psi) = -\log \left(
a \varphi(1) + b \psi(1) + \int_{(0,\infty)} \frac{1+t}{t} [(t \varphi):\psi](1) \, d\mu(t)
\right). 
\end{equation}
From this relation and \eqref{varphi:psi} one can obtain the variational formula with ease, see also \cite[Remark 9.2]{hiai2022pusz} for 
a closely related formula in the case of $\cM = B(\sH)$:
\begin{proposition}
\label{varprop}
Let $f: [0,\infty) \to [0,\infty)$ be an monotone function with representation \eqref{frep} where $a,b<\infty$. 
Let $\psi,\varphi$ be normal, positive functionals on $\cM$. Then we have
\begin{equation}
\begin{split}
& D_f(\varphi\|\psi) \\
=& \ -\log \left( a\varphi(1) + b\psi(1) + \inf\limits_{(0,\infty)\xrightarrow{x} \cM} \int_{(0,\infty)}(1+t)\left[\varphi(x_t^{}x_t^\ast) + \frac{1}{t}\psi(y_t^{}y^\ast_t)\right]d\mu(t) \right),
\end{split}
\end{equation}
 where the infimum is taken over all step functions $x:(0,\infty)\rightarrow \cM$ with finite range 
 such that $x_t=1$ for sufficiently small $t$, such that $x_t=0$ for sufficiently large $t$, and where $y_t:=1-x_t$.
\end{proposition}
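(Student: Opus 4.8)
The plan is to establish equality of the arguments of the outer $-\log$ in the two expressions; since $-\log$ is injective (with the convention $-\log 0 = +\infty$), this is enough. I start from the formula
\[
D_f(\varphi\|\psi) = -\log\Big(a\varphi(1) + b\psi(1) + \int_{(0,\infty)}\tfrac{1+t}{t}[(t\varphi):\psi](1)\,d\mu(t)\Big)
\]
recalled immediately before the statement, and rewrite the integrand pointwise. Setting $h(t,x) := (1+t)\big[\varphi(xx^*) + \tfrac1t\psi((1-x)(1-x)^*)\big]$ and using the definition \eqref{varphi:psi} of the parallel sum with $z=1$ and $y = 1-x$, multiplication by $(1+t)/t$ gives $\tfrac{1+t}{t}[(t\varphi):\psi](1) = \inf_{x\in\cM} h(t,x) =: G(t)$ for each $t$. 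Hence the proposition reduces to the interchange $\int_{(0,\infty)} G\,d\mu = \inf_{x(\cdot)}\int_{(0,\infty)} h(t,x_t)\,d\mu(t)$ over admissible step functions. One inequality is immediate: any admissible step function satisfies $h(t,x_t)\ge G(t)$ pointwise, so its integral dominates $\int G\,d\mu$.

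For the reverse inequality I construct, for each $\eta>0$, an admissible step function whose integral exceeds $\int G\,d\mu$ by at most $\eta$. Two observations are key. First, since $(t\varphi):\psi\le\min(t\varphi,\psi)$ as positive functionals (take $x=z$ or $y=z$ in \eqref{varphi:psi}), one has $G(t)\le(1+t)\varphi(1)$ and $G(t)\le\tfrac{1+t}{t}\psi(1)$, so $G(t)\le 2\max(\varphi(1),\psi(1))$ everywhere; as $\mu$ is finite this gives $\int G\,d\mu<\infty$ and makes the integrals over $(0,\delta)$ and $(R,\infty)$ of $G$ and of the trivial choices negligible for $\delta$ small and $R$ large, because $\mu((0,\delta))\to 0$ and $\mu((R,\infty))\to 0$. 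Second, any $x$ with $h(t,x)\le G(t)+1$ obeys the uniform bounds $\varphi(xx^*)\le M$ and $\psi((1-x)(1-x)^*)\le M$, $M := 2\max(\varphi(1),\psi(1))+1$, directly from $(1+t)\varphi(xx^*)\le h(t,x)$ and $\tfrac{1+t}{t}\psi((1-x)(1-x)^*)\le h(t,x)$. Accordingly I set $x_t=1$ on $(0,\delta)$ and $x_t=0$ on $(R,\infty)$, and on $[\delta,R]$ take a finite partition into subintervals, assigning to the $i$-th one a single near-minimizer $x^{(i)}$ of $G$ at a sample point $\tau_i$; the result is a finite-range step function meeting the boundary conditions.

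The main obstacle is the uniform approximation on the compact interval $[\delta,R]$: I must bound $h(t,x^{(i)})-G(t)$ for all $t$ in the $i$-th subinterval, not merely at $\tau_i$. Decomposing it as $[h(t,x^{(i)})-h(\tau_i,x^{(i)})]+[h(\tau_i,x^{(i)})-G(\tau_i)]+[G(\tau_i)-G(t)]$, the middle bracket is $\le\eta$ by the choice of $x^{(i)}$; the first bracket equals $\varphi(x^{(i)}(x^{(i)})^*)(t-\tau_i)+\psi((1-x^{(i)})(1-x^{(i)})^*)(\tfrac1t-\tfrac1{\tau_i})$, which by the uniform bound $M$ together with $|\tfrac1t-\tfrac1{\tau_i}|\le|t-\tau_i|/\delta^2$ is small once the partition is fine; and the third bracket is small by continuity of $G$, which follows because $t\mapsto[(t\varphi):\psi](1)$ is monotone and concave (an infimum of affine functions of $t$), hence continuous on $(0,\infty)$. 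Summing the contributions over the partition and adding the negligible boundary terms bounds the integral of the constructed step function by $\int G\,d\mu+\eta+o(1)$. Letting $\delta\downarrow0$, $R\uparrow\infty$, refining the partition, and then $\eta\downarrow0$ yields the reverse inequality, hence equality of the arguments of $-\log$; applying $-\log$ finishes the proof.
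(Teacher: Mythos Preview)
Your proof is correct and follows the same overall strategy as the paper: both start from the integral formula for $D_f$ in terms of the parallel sum, observe that the easy inequality is pointwise, and for the reverse direction split $(0,\infty)$ into a small-$t$ piece (where $x_t=1$), a large-$t$ piece (where $x_t=0$), and a compact middle interval on which a finite-range step function is built. The boundary treatment via $[(t\varphi):\psi](1)\le\min(t\varphi(1),\psi(1))$ and finiteness of $\mu$ is identical.

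The one substantive difference is how the middle interval $[\delta,R]$ is handled. The paper simply invokes inner regularity of the Radon measure $\mu$ and refers to Kosaki \cite{Kos86} for the standard construction. You instead give a self-contained argument: you observe that $t\mapsto[(t\varphi):\psi](1)$ is concave (infimum of affine functions), hence $G$ is continuous on $(0,\infty)$, and combine this with the explicit Lipschitz-type estimate $|h(t,x)-h(\tau,x)|\le M\big(|t-\tau|+|t-\tau|/\delta^2\big)$ for near-minimizers to control the error on each subinterval of a fine partition. This is a perfectly valid alternative; it trades the measure-theoretic regularity argument for a direct continuity argument and has the advantage of being fully written out rather than deferred to a reference.
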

\begin{proof}
Note that since $0 \le [(t \varphi):\psi](1) \le {\rm min}(\psi(1), t\varphi(1))$ we can choose a $\delta>0$, and a $K<\infty$  such that 
\begin{itemize} 
\item $|\int_{(0,\delta)}(1+t)\varphi(1)d\mu-\int_{(0,\delta)} \frac{1+t}{t}
[(t \varphi):\psi](1) d\mu | \leq \varepsilon$
\item $|\int_{(K,\infty)}   \frac{1+t}{t}\psi(1)d\mu-\int_{(K,\infty)} \frac{1+t}{t}
[(t \varphi):\psi](1) d\mu| \leq \varepsilon $
\end{itemize}
Then we define $y_t=0, t<\delta$, and $y_t=1, t>K$. Next we build a step function $x: (\delta,K)\rightarrow \cM$ with finite range such 
that $\int_{(\delta,K)} (1+t)\left[\varphi(x_tx^\ast_t)+\frac{1}{t}\psi(y_ty_t^\ast)\right]d\mu$ 
approximates $ \int_{(\delta,K)}\frac{1+t}{t}
[(t \varphi):\psi](1) d\mu$ to within tolerance $\varepsilon$. 
This can be done by using the inner regularity of the Radon measure $\mu$; for details on this standard procedure see
\cite{Kos86}. 
\end{proof}

The above proposition does not cover the operator monotone function $f(t) = \log(t)$ (formally $a=-\infty$ in the 
representation \eqref{frep}). Since this case underlies the Belavkin-Staszewski (BS) divergence and is particularly interesting for us, we treat it explicitly. 
Consider first a pair of positive normal functionals on $\cM$ such that $\varphi \sim \psi$. 
Define the BS divergence as 
\begin{equation}
D_{BS}(\varphi\|\psi):=-\varphi(\log(T_\varphi^\psi))
\end{equation}
For a general pair of positive normal functionals such that $\varphi \sim \psi$ does not hold, 
we define $D_{BS}(\varphi\|\psi)$ analogously to definition \ref{epsdef}. The BS divergence can be seen as the limit $\alpha \to 1$ of 
the maximal geometric $\alpha$-divergence corresponding to $f_\alpha(t) = t^\alpha$, $\alpha \in (0,1)$. 

\begin{proposition}\label{BS}
We have
\begin{equation}
D_{BS}(\varphi\|\psi) = 
\sup \sup \left\{ \varphi(1)\log n - \int_{1/n}^{\infty} \left[\varphi(x^{}_t x^\ast_t) + \frac{1}{t}\psi(y^{}_ty_t^\ast)\right]\frac{dt}{t} \right\},
\end{equation}
where the first $\sup$ is taken over $n\in\mathbb{N}$, while the second is over finite range step functions $x$on $(\frac{1}{n},\infty)$
as in proposition \ref{varprop}.
\end{proposition}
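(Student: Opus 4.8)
The plan is to reduce first to the case $\varphi\sim\psi$, where $T_\varphi^\psi$ is a bounded, positive, invertible element of $\cM$ and $D_{BS}(\varphi\|\psi)=-\varphi(\log T_\varphi^\psi)$ is defined outright, and only afterwards to recover the general case from definition \ref{epsdef}. The heart of the argument is to apply proposition \ref{varprop} \emph{not} to $\log$ (which has $a=-\infty$) but to the genuinely non-negative operator monotone functions
\begin{equation}
g_n(t):=\log(1+nt),\qquad n\in\mathbb{N},
\end{equation}
which obey $g_n(0)=0$ and $\lim_{t\to\infty}g_n(t)/t=0$, so that $a=b=0$ in \eqref{frep}. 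Using the partial-fraction identity $\frac{t}{s(t+s)}=\frac1s-\frac1{t+s}$ one computes directly that
\begin{equation}
g_n(t)=\int_{1/n}^{\infty}\frac{t}{s(t+s)}\,ds,
\end{equation}
and comparison with \eqref{frep} identifies the representing measure as $d\mu_{g_n}(s)=\frac{ds}{s(1+s)}$ on $(1/n,\infty)$ and zero otherwise; that is, it is exactly the logarithm's measure $d\mu_{\log}$ truncated at $s=1/n$. It is precisely this truncation, together with the weight $(1+s)\,d\mu_{g_n}(s)=\frac{ds}{s}$ on $(1/n,\infty)$, that will produce both the lower limit $1/n$ and the factor $\frac{dt}{t}$ in the statement.

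With this measure in hand, proposition \ref{varprop} applied to $g_n$, combined with definition \ref{hiaidef} (which identifies the argument of the logarithm as $\varphi(g_n(T_\varphi^\psi))$), gives at once
\begin{equation}
\varphi(g_n(T_\varphi^\psi))=\inf_x\int_{1/n}^{\infty}\Big[\varphi(x_tx_t^\ast)+\tfrac1t\psi(y_ty_t^\ast)\Big]\frac{dt}{t},
\end{equation}
the infimum being over the step functions described in the statement. Since $g_n(t)=f_n(t)+\log n$ with $f_n(t)=\log(t+\tfrac1n)$, one has $\varphi(f_n(T_\varphi^\psi))=\varphi(g_n(T_\varphi^\psi))-\varphi(1)\log n$. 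Now $f_n(t)\downarrow\log t$ as $n\to\infty$, uniformly on the compact positive spectrum of $T_\varphi^\psi$, so $f_n(T_\varphi^\psi)\to\log T_\varphi^\psi$ in norm and by continuity $\varphi(f_n(T_\varphi^\psi))\downarrow\varphi(\log T_\varphi^\psi)=-D_{BS}(\varphi\|\psi)$. Therefore
\begin{equation}
D_{BS}(\varphi\|\psi)=\lim_{n}\Big[\varphi(1)\log n-\inf_x\int_{1/n}^{\infty}\Big(\varphi(x_tx_t^\ast)+\tfrac1t\psi(y_ty_t^\ast)\Big)\frac{dt}{t}\Big],
\end{equation}
and rewriting $-\inf_x=\sup_x(-\,\cdot\,)$ and $\lim_n=\sup_n$ (the bracket is monotone increasing in $n$ because $\varphi(f_n(T_\varphi^\psi))$ is decreasing) yields the asserted double supremum in the equivalent case.

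It remains to drop the hypothesis $\varphi\sim\psi$. Write $F(\varphi,\psi;n,x)$ for the bracketed expression inside the double supremum and $V(\varphi,\psi):=\sup_n\sup_x F(\varphi,\psi;n,x)$. The decisive structural fact is that $F$ is \emph{affine} in the pair $(\varphi,\psi)$, so that for $\phi\sim\varphi+\psi$ and $\varepsilon>0$,
\begin{equation}
F(\varphi+\varepsilon\phi,\psi+\varepsilon\phi;n,x)=F(\varphi,\psi;n,x)+\varepsilon\,F(\phi,\phi;n,x).
\end{equation}
By the equivalent case just proved, $\sup_{n,x}F(\phi,\phi;n,x)=V(\phi,\phi)=D_{BS}(\phi\|\phi)=0$ (indeed $T_\phi^\phi=1$), so $F(\phi,\phi;n,x)\le 0$ for every $n,x$. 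Hence $V(\varphi+\varepsilon\phi,\psi+\varepsilon\phi)\le V(\varphi,\psi)$, giving $D_{BS}(\varphi\|\psi)=\lim_{\varepsilon\downarrow0}V(\varphi+\varepsilon\phi,\psi+\varepsilon\phi)\le V(\varphi,\psi)$; conversely, fixing $(n,x)$ and letting $\varepsilon\downarrow0$ in $F(\varphi+\varepsilon\phi,\psi+\varepsilon\phi;n,x)\le V(\varphi+\varepsilon\phi,\psi+\varepsilon\phi)$ yields $F(\varphi,\psi;n,x)\le D_{BS}(\varphi\|\psi)$ and thus $V(\varphi,\psi)\le D_{BS}(\varphi\|\psi)$. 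I expect the genuine obstacle to lie exactly in this last step: the variational expression is defined for all pairs, but matching it to the regularized definition \ref{epsdef} requires controlling the behaviour of $V$ under $(\varphi,\psi)\mapsto(\varphi+\varepsilon\phi,\psi+\varepsilon\phi)$, and a naive interchange of $\lim_\varepsilon$ with the double supremum is not licensed a priori. The affineness of $F$ together with the normalization $D_{BS}(\phi\|\phi)=0$, which pins down the \emph{sign} of the perturbation term, is what I would rely on to circumvent this.
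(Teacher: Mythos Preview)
Your proof is correct and follows essentially the same strategy as the paper: approximate $\log$ by the shifted logarithms $f_n(t)=\log(t+\tfrac1n)$, apply the variational formula for each $n$, and pass to the limit. The paper's proof is extremely terse---it simply asserts that ``the same argument of proposition \ref{varprop} applies'' to $f_n$ and then invokes monotone/dominated convergence without further explanation.

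Your execution is in fact more careful than the paper's on two points. First, you shift to $g_n(t)=\log(1+nt)=f_n(t)+\log n$, which is genuinely non-negative with $a=b=0$, so that proposition \ref{varprop} applies literally rather than ``by the same argument''; this is cleaner, since $f_n$ itself has $a_n=-\log n<0$ and so falls outside the stated hypotheses. Second, and more substantively, the paper does not spell out how the general (non-equivalent) case follows---its reference to MCT/DCT presumably concerns the parallel-sum integral representation underlying proposition \ref{varprop}, but the interchange of the $\varepsilon$-regularization in definition \ref{epsdef} with the $n$-limit is not addressed. Your affineness argument, exploiting $F(\varphi+\varepsilon\phi,\psi+\varepsilon\phi;n,x)=F(\varphi,\psi;n,x)+\varepsilon F(\phi,\phi;n,x)$ together with $F(\phi,\phi;n,x)\le V(\phi,\phi)=D_{BS}(\phi\|\phi)=0$, handles this cleanly and makes the passage to the general case fully explicit. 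This is a nice self-contained alternative to whatever the paper's MCT/DCT invocation was meant to cover.
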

\begin{proof}
Consider the approximating sequence $f_n(t):=\log(t+\frac{1}{n}), t\geq 0$, which has integral representations 
$f_n(t)=-\log n + \int_{1/n}^{\infty}\frac{s}{t+s}\frac{dt}{t}$. For every such $f_n$ the same argument of proposition \ref{varprop} applies. 
The claim follows by using the monotone convergence theorem on $(0,1)$ and the dominated convergence theorem on $[1,\infty)$.
\end{proof}

We now list some of the main properties of $D_f$ and $D_{BS}$, see \cite[Theorem 4.4, Proposition 4.5]{Hia21}. 
\begin{enumerate}
\item 
(Data processing inequality) 
Let $T: \cN \to \cM$ be a positive, normal, unital linear map between von Neumann algebras $\cM$ and $\cN$ satisfying the Schwarz property
$
T(nn^*) \ge T(n)T(n)^* 
$
for all $n \in \cN$\footnote{By Kadison's theorem, this property follows if $T$ is a 2-positive or even completely positive normal linear map.}. 
Then $D_f(\varphi \circ T\|\psi \circ T) \le D_f(\varphi \| \psi)$.

\item 
(Lower semi-continuity)
Let $\varphi_n, \psi_n$ be sequences of normal positive functionals converging pointwise to normal positive functionals $\varphi, \psi$ 
as $n \to \infty$. Then $D_f(\varphi \| \psi) \le \liminf_n D_f(\varphi_n \| \psi_n)$.

\item 
(Martingale property)
Consider a von Neumann algebra $\cM$ and an increasing sequence of 
von Neumann subalgebras $\{\cM_n\}$ such that $\cM=\left(\bigcup_n \cM_n \right)''$.  Then 
\begin{equation}
\label{invadingnet}
D_f(\varphi\restriction_{\cM_n}\|\psi\restriction_{\cM_n})\nearrow D_f(\varphi\|\psi).
\end{equation}

\item 
(Joint convexity and subadditivity)
The functional $\cM_{\ast,+}\times \cM_{\ast,+} \xrightarrow{D_f} (-\infty,+\infty]$ is jointly convex and subadditive.
\end{enumerate}

The analogous properties hold for $D_{BS}$.

\begin{remark}
Item 2 is not provided in \cite[Theorem 4.4, Proposition 4.5]{Hia21}; indeed it is presented as a conjecture for general von Neumann algebras \cite[Problem 4.13]{Hia21}. The variational expressions given in propositions \ref{BS} and \ref{varprop} provide an immediate proof of this for $D_f$ and $D_{BS}$, see for example \cite{Pet2} for the analogous argument for the (Araki) relative entropy.
\end{remark}

\section{Bimodules and $f$-divergences for channels}
\label{sec:chdiv}

\subsection{Definitions}

\begin{definition}[Bimodules]
Given von Neumann algebras $\cN, \cM$, a $\cN-\cM$ bimodule is a triple $\left(\sH,\ell_\sH, r_\sH \right)$ where $\sH$ is a Hilbert space, and 
$\cN \xrightarrow{\ell_\sH} B(\sH) \xleftarrow{r_\sH}\cM$ are a normal representation, and a normal anti-representation, respectively, such that $\ell_\sH(\cN)$ and $r_\sH(\cM)$ commute.
\end{definition}

When it is clear from the context, we will denote a bimodule by the underlying Hilbert space $\sH$. For more details on bimodules see e.g. \cite{Lon18,sauvageot1983produit}.

\begin{remark}
\label{standard bimodule}
We will use the natural notation $n\xi m := \ell_\sH(n)r_\sH(m)\xi, \ n\in \cN, m\in \cM,\xi \in \sH,$ when the bimodule Hilbert space is the identity bimodule $L^2(\cM,\Omega)$, which is the 
bimodule arising from a standard representation and standard vector $\Omega$ of $\cM$, so it is unique up to unitary equivalence. As a vector space, $L^2(\cM,\Omega)$ is realized up to unitary equivalence 
as the GNS Hilbert space $\sH$ of some chosen faithful normal state $\omega$ with associated cyclic and separating GNS vector $\Omega$. 
The right- and left action defining the $\cM-\cM$ bimodule structure of $L^2(\cM,\Omega)$ are defined as
\begin{equation}
\ell_\sH(m) \xi = m\xi, \quad r_\sH(m) \xi = Jm^*J\xi, 
\end{equation}
where $J=J_\Omega$ is the modular conjugation associated with $\Omega$ that sends $\cM$ anti-unitarily to $\cM'$.
\end{remark}

\begin{proposition}
\label{lprop}
Let $T:\cN\rightarrow\cM$ be a channel and let $\varphi$ be a normal state of $\cM$. There exists a $\cN-\cM$ bimodule $\sH_T$ and a vector 
$\xi \in \sH_T$ such that 
\begin{equation}
\label{longoeq}
\varphi(T(n)m)=\langle \xi,\ell_{\sH_T}(n)r_{\sH_T}(m)\xi \rangle,
\end{equation}
and $\xi$ is cyclic for $\ell_{\sH_T}(\cN)\lor r_{\sH_T}(\cM)$. 
Moreover, such a bimodule and unit vector are unique up to unitary transformations. 
\end{proposition}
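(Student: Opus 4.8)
The plan is to realise $\sH_T$ as a GNS/Stinespring-type completion attached to the channel $T$ and to the standard form of the right algebra, and then to read off the bimodule structure and the distinguished vector. First I would represent $\cM$ in standard form on $\sH_0=L^2(\cM,\Omega)$ with cyclic separating vector $\Omega$ normalised so that $\varphi=\langle\Omega,\,\cdot\,\Omega\rangle$; if $\varphi$ is not faithful I would first restrict to its support projection, or regularise $\varphi\rightsquigarrow\varphi+\varepsilon\omega_0$ and remove $\varepsilon$ at the end. On the algebraic tensor product $\cN\odot\sH_0$ I would put the sesquilinear form
\begin{equation}
\Big\langle\,\textstyle\sum_i n_i\otimes\eta_i,\ \sum_j n_j\otimes\eta_j\,\Big\rangle:=\sum_{i,j}\big\langle\eta_i,\,T(n_i^\ast n_j)\,\eta_j\big\rangle,
\end{equation}
where each $T(n_i^\ast n_j)\in\cM$ acts on $\sH_0$ by left multiplication. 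This is the point at which complete positivity of $T$ enters: the form equals $\langle\vec\eta,B\vec\eta\rangle$ on $\sH_0^{\oplus k}$ with $B=[T(n_i^\ast n_j)]_{ij}=(T\otimes\mathrm{id}_{M_k})\big([n_i^\ast n_j]_{ij}\big)$, and since $[n_i^\ast n_j]\ge 0$ in $M_k(\cN)$ and $T$ is completely positive, $B\ge 0$ and the form is nonnegative. Quotienting by its null space and completing yields $\sH_T$.

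Next I would install the two commuting actions. The left action $\ell(n)\,[n'\otimes\eta]:=[(nn')\otimes\eta]$ is bounded with $\|\ell(n)\|\le\|n\|$: the inequality $\ell(n)^\ast\ell(n)\le\|n\|^2$ follows by feeding the positive matrix $[n_i^\ast(\|n\|^2-n^\ast n)n_j]\ge 0$ through the same completely positive map, and one checks that $\ell$ is a unital normal $\ast$-representation of $\cN$ (normality inherited from that of $T$). For the right action I would use the commutant in the standard form, $r(m)\,[n'\otimes\eta]:=[n'\otimes Jm^\ast J\,\eta]$, exactly as for the identity bimodule in Remark \ref{standard bimodule}. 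The decisive structural fact is that $Jm^\ast J\in\cM'$ commutes with every left multiplication $T(n_i^\ast n_j)\in\cM$ occurring in the inner product: writing $\|r(m)v\|^2=\langle R\vec\eta,\,B\,R\vec\eta\rangle$ with $R=\mathrm{diag}(Jm^\ast J)\in M_k(\cM')$ and $B\in M_k(\cM)_+$ as above, the relations $BR=RB$ and $R^\ast R\le\|m\|^2$ give $\|r(m)v\|^2\le\|m\|^2\|v\|^2$. Hence $r$ is a bounded normal $\ast$-anti-representation of $\cM$, commuting with $\ell$ by construction, so $(\sH_T,\ell,r)$ is an $\cN$--$\cM$ bimodule.

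I would then take $\xi:=[1\otimes\Omega]$ and check the two remaining assertions. The defining relation \eqref{longoeq} is obtained by a direct computation of the coefficient $\langle\xi,\ell(n)r(m)\xi\rangle=\langle\Omega,\,T(n)\,Jm^\ast J\,\Omega\rangle$ in the standard form, using $J\Omega=\Omega$, the Tomita--Takesaki relations, and the identification $L^\infty(\cM,\Omega)=\cM\Omega$ from Section \ref{sec:conventions}. Cyclicity of $\xi$ for $\ell(\cN)\lor r(\cM)$ is immediate, since $\ell(n)r(m)\xi=[n\otimes Jm^\ast J\,\Omega]$, the vectors $Jm^\ast J\,\Omega$ are dense in $\sH_0$, and $\cN\odot\sH_0$ is dense in $\sH_T$ by construction. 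For uniqueness up to unitary I would run the standard intertwiner argument: given a second triple $(\sH_T',\ell',r',\xi')$ with the same properties, the assignment $\ell(n)r(m)\xi\mapsto\ell'(n)r'(m)\xi'$ is well defined and isometric, because expanding $\langle\ell(n_1)r(m_1)\xi,\ell(n_2)r(m_2)\xi\rangle$ via the $\ast$-(anti)representation properties shows that the Gram matrices of the two generating families are given by one and the same expression in $(\varphi,T)$; by cyclicity this extends to a unitary intertwining the two actions and carrying $\xi$ to $\xi'$.

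The step I expect to be the main obstacle is precisely the boundedness --- indeed the very existence --- of the right action in the general, possibly type III, setting. For non-tracial $\varphi$ ordinary right multiplication by $m$ on $\cM\Omega$ is unbounded, so the right action has to be realised through the modular conjugation as $r(m)=Jm^\ast J$, and the norm estimate then rests entirely on the commutation of $\cM$ with its commutant $\cM'$ inside the standard form, together with complete positivity of $T$. A secondary technical point is the reduction to a faithful $\varphi$ (via its support projection or by regularisation), needed so that a standard vector $\Omega$ with $\varphi=\langle\Omega,\,\cdot\,\Omega\rangle$ and its modular data are available in the first place.
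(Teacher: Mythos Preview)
Your construction is correct and is essentially the same as the paper's, just unpacked in more detail. The paper's proof is a one-liner: it observes that $\tilde\varphi := \varphi_\xi \circ \pi \circ (T\otimes\mathrm{id})$ is a positive linear functional on the $*$-algebra $\cN\odot\cM^{op}$ (where $\xi\in L^2_+(\cM,\Omega)$ is the natural-cone representative of $\varphi$ and $\pi$ is the standard-bimodule representation of $\cM\odot\cM^{op}$ on $L^2(\cM)$), and then takes $\sH_T$ to be its GNS space with GNS vector $\xi$. Positivity of $\tilde\varphi$ is immediate because it is a vector state composed with the completely positive map $\pi\circ(T\otimes\mathrm{id})$; cyclicity and uniqueness are then automatic from GNS. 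Your Stinespring completion of $\cN\odot L^2(\cM,\Omega)$ with the form $\sum_{i,j}\langle\eta_i,T(n_i^*n_j)\eta_j\rangle$ and right action through $Jm^*J$ produces exactly the same Hilbert space with the same commuting representations: under the identification $n\otimes m^{op}\mapsto[n\otimes Jm^*J\Omega]$, the GNS inner product for $\tilde\varphi$ becomes precisely your sesquilinear form. So the content is identical; the paper's packaging spares the separate boundedness checks for $\ell$ and $r$ (which you correctly identify as the main technical point), while yours makes the bimodule structure explicit. One small improvement the paper's version offers is that it avoids your faithfulness/regularisation detour by taking $\xi$ to be the natural-cone vector representative of $\varphi$ relative to a \emph{fixed} cyclic separating $\Omega$, rather than requiring $\Omega$ itself to implement $\varphi$.
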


\begin{proof}
This is \cite[Proposition 2.6]{Lon18}, but we report the proof here since it sets the stage for another proof below. 

Consider the map $\cN \times \cM^{op}\rightarrow \mathbb{C}: n\times m^{op}\mapsto \varphi(T(n)m^\ast)$. It is bilinear, so it corresponds to a linear map $\tilde{\varphi}:\cN \odot \cM^{op}\rightarrow \mathbb{C} $. Let $\xi$ be the vector representative of $\varphi$ in $L^2_+(\cM, \Omega)$ (the natural cone of the cyclic and separating vector $\Omega$), 
then $\tilde{\varphi}=\varphi_\xi\circ\pi\circ (T \otimes \text{id}),$ where $\pi$ is the representation of $\cM\odot \cM^{op}$ given by the $\cM-\cM$ bimodule
$L^2(\cM)$, see remark \eqref{standard bimodule}. Define $\sH_T$ as the $\cN-\cM$ bimodule given by the GNS representation of the positive 
functional $\tilde{\varphi}$ on $\cM\odot \cM^{op}$. Then clearly \eqref{longoeq} and the other assertions follow by the properties of the GNS representation. 
\end{proof}

In the following, $f: [0,\infty) \to [0,\infty)$ is an operator monotone function (such that $a,b<\infty$ in its representation \eqref{frep}).

\begin{definition}
\label{defchdiv}
Consider a pair of channels $S,T:\cN\rightarrow \cM$, and a von Neumann algebra $\cA$. We extend the channels to $S \otimes \text{id}_\cA, T \otimes \text{id}_\cA:\cN\odot \cA^{op} \rightarrow \cM \odot \cA^{op}$. Let $\pi$ be any binormal representation of  $\cM\odot \cA^{op}$. Then for every vector $\xi \in \sH_\pi$ in the $\cM-\cA$ bimodule given by $\pi$, we can consider the states $\varphi_{S,\pi,\xi}= \varphi_{\xi}\circ \pi \circ (S\otimes \text{id}_\cA)$ and $\varphi_{T,\pi,\xi}= \varphi_{\xi}\circ \pi \circ (T\otimes \text{id}_\cA)$. Consider $D_f(\varphi_{S,\pi,\xi}\|\varphi_{T,\pi,\xi})$
as defined by proposition \ref{BS}, now involving the supremum over finite range step functions $x$ with values in $\cN \odot \cA^{op}$. Then we define
\begin{equation}\label{channeldiv}
    D_f(S \| T):= \sup_{(\cA,\pi,\xi)} D_f(\varphi_{S,\pi,\xi}\| \varphi_{T,\pi,\xi}),
\end{equation}
where the supremum is over the triples $(\cA,\pi,\xi)$ consisting of a von Neumann algebra $\cA$, bimodule $\pi$ as above and normalized $\xi \in \sH$.
We make the analogous definition for the BS divergence.
\end{definition}

\begin{remark}
When $\cM$ is finite-dimensional, our definition for $D_{BS}$ agrees with that of \cite{Fan21}. This follows from proposition \ref{BS} and  the part of proposition
\ref{sbimodule} referring to finite dimensional type I algebras. In fact \cite{Fan21} also consider the channel divergence for the function $f(t) = t^\alpha, \alpha \in (1,2]$. 
That case is not considered in the present work since this function is not operator monotone but operator convex, and it is not obvious to what extent the variational formula 
in proposition \ref{varprop} still applies in this case. 
\end{remark}

\begin{remark}
Consider a normal homomorphism $\theta:\cA\rightarrow \cM$. Then a new bimodule $\sH_\theta$ can be constructed by \emph{twisting} the identity bimodule $L^2(\cM)$ on the right by using $\theta$. More explicitly, $\sH_\theta=L^2(\cM)$ as Hilbert space, the left action of $\cM$ is the one coming from the structure of $L^2(\cM)$ as a left $\cM-$module, while the right action of $\cA$ is defined $r_\theta(a)\eta:=\eta\theta(a), \ \eta \in L^2(\cM), \ a \in \cA$. In this case, the bimodule is denoted by $L^2_\theta(\cM,\Omega).$\footnote{Analogously, one may define ${}_\theta L^2(\cM)$.}
\end{remark}

Even though we will stick with the above definition in what follows, one 
may ask to what extent it is necessary to consider {\emph all} bimodules in the definition of the channel divergence \eqref{channeldiv}. 

\begin{proposition}
\label{sbimodule} 
If $\cM$ is properly infinite (direct sum of factors of types I$_\infty$, II$_\infty$ or III) or a direct sum of type I$_n$ factors then we have
\begin{equation}\label{standardsup}
 D_f(S\|T)=\sup\limits_{\xi \in L^2(\cM)_+} D_f(\varphi_{S,\pi,\xi}\|\varphi_{T,\pi,\xi}).
 \end{equation}
If $\cM$ is infinite dimensional and finite (direct sum of factors of type II$_1$) then 
\begin{equation*}
D_f(S\|T)=\sup\limits_{\xi \in \left(L^2(\cM) \otimes L^2(\ell^2\left(\mathbb{N}\right))\right)_+} D_f(\varphi_{S,\pi,\xi}\|\varphi_{T,\pi,\xi}).
\end{equation*} 
where by $L^2(\ell^2\left(\mathbb{N}\right))$ we mean the Hilbert-Schmidt operators on the separable Hilbert space $\ell^2(\mathbb{N})$
and by $L^2(\cM) \otimes L^2(\ell^2\left(\mathbb{N}\right))$ we mean the associated $\cM - \cM \otimes B(\ell^2(\mathbb{N}))$-bimodule.
The same holds for the BS divergence.
\end{proposition}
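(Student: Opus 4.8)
The plan is to prove that $D_f(S\|T)$ is bounded by the right–hand side of \eqref{standardsup}; the opposite bound is immediate, since the identity bimodule is itself an admissible triple in Definition~\ref{defchdiv} (take $\cA=\cM$, let $\pi_0$ be the standard representation of $\cM\odot\cM^{op}$ on $L^2(\cM)$ with $\ell_{\pi_0}(m)\zeta=m\zeta$ and $r_{\pi_0}(m)\zeta=Jm^*J\zeta$, and let $\xi$ run through the normalized vectors of $L^2(\cM)_+$), and likewise for $L^2(\cM)\otimes L^2(\ell^2(\mathbb{N}))$ in the finite case. Writing $\omega:=\varphi_\xi\circ\pi$ for the normal state induced on $\cM\odot\cA^{op}$ by an arbitrary admissible triple, one has $\varphi_{S,\pi,\xi}=\omega\circ(S\otimes\text{id}_\cA)$ and $\varphi_{T,\pi,\xi}=\omega\circ(T\otimes\text{id}_\cA)$, so the whole task is to replace $\omega$, carried by an arbitrary ancilla, by the canonical vector state of a single $\eta\in L^2(\cM)_+$ on the identity bimodule, without lowering the divergence.

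The crux is the following factorization, which I regard as the main obstacle. Let $\omega|_\cM$ be the marginal of $\omega$ on the left copy of $\cM$ and let $\eta\in L^2(\cM)_+$ be its canonical representative in the natural cone, a unit vector because $\|\xi\|=1$. I claim there is a normal channel $\Psi\colon\cA^{op}\to\cM^{op}$ with $\omega=\omega_\eta\circ(\text{id}_\cM\otimes\Psi)$, where $\omega_\eta$ is the vector state of $\eta$ on $\cM\odot\cM^{op}$. Granting this, $\text{id}_\cN\otimes\Psi$ is a channel intertwining both pairs of states, so the data processing inequality (item~1 of the listed properties) gives
\begin{equation}
D_f\big(\omega\circ(S\otimes\text{id}_\cA)\,\|\,\omega\circ(T\otimes\text{id}_\cA)\big)\le D_f\big(\omega_\eta\circ(S\otimes\text{id}_\cM)\,\|\,\omega_\eta\circ(T\otimes\text{id}_\cM)\big).
\end{equation}
The right–hand side is precisely $D_f(\varphi_{S,\pi_0,\eta}\|\varphi_{T,\pi_0,\eta})$ for the identity bimodule $\pi_0$ with $\eta\in L^2(\cM)_+$, so taking the supremum over all admissible triples bounds $D_f(S\|T)$ as desired. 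The same reasoning applies verbatim to $D_{BS}$, using Proposition~\ref{BS} in place of the variational formula.

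To construct $\Psi$ I would form the GNS bimodule $(\sH_\omega,\xi_\omega)$ of $\omega$ over $\cM\odot\cA^{op}$ and seek a left‑$\cM$‑intertwining isometry $W\colon\sH_\omega\to L^2(\cM)$ with $W\xi_\omega=\eta$. On the cyclic left submodule $\overline{\ell_\omega(\cM)\xi_\omega}$, which is the GNS space of $\omega|_\cM$, such a $W$ is determined canonically; extending it to all of $\sH_\omega$ is exactly where the type of $\cM$ enters, as it requires $L^2(\cM)$ to contain the separable left $\cM$‑module $\sH_\omega$. For properly infinite $\cM$ the commutant $\cM'$ is properly infinite as well, whence $L^2(\cM)\cong L^2(\cM)\otimes L^2(\ell^2(\mathbb{N}))$ has infinite left multiplicity and every separable normal left module embeds (removing the multiplicity‑one piece $\overline{\ell(\cM)\eta}$ still leaves infinite multiplicity), so $W$ exists. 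A short computation using $W\ell_\omega(m)=\ell_{\pi_0}(m)W$ then shows that $\Psi_0(a):=W r_\omega(a)W^{*}$ takes values in $\ell_{\pi_0}(\cM)'=\cM'\cong\cM^{op}$, is normal and completely positive, and satisfies $\langle\eta,\ell_{\pi_0}(m)\Psi_0(a)\eta\rangle=\omega(m\otimes a^{op})$. Since $\Psi_0(1)=WW^{*}=:q$ fixes $\eta$ (from $W^{*}W=1$ one gets $q\eta=\eta$, hence $(1-q)\eta=0$), one restores unitality by setting $\Psi(a):=\Psi_0(a)+(1-q)\,\omega(1\otimes a^{op})$, a genuine channel acting on $\eta$ exactly as $\Psi_0$; this is the required map.

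For a direct sum of type I$_n$ factors the isometry need not exist, since $\sH_\omega$ may have left multiplicity larger than $n$, but the factorization still holds: $\omega|_\cM$ has rank at most $n$ in each block, so the reference of its canonical purification $\eta$ is at most $n$‑dimensional and Uhlmann's theorem supplies $\Psi\colon\cA^{op}\to\cM^{op}$ outright. For an infinite‑dimensional finite factor both routes fail—$\cM'$ is again finite and $\omega|_\cM$ may have infinite rank—so one repeats the isometry construction with $L^2(\cM)$ replaced by $L^2(\cM)\otimes L^2(\ell^2(\mathbb{N}))$, the identity bimodule of the properly infinite algebra $\cM\otimes B(\ell^2(\mathbb{N}))$, whose commutant is properly infinite; this yields the second displayed formula. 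Two routine points close the argument: if $\omega|_\cM$ is not faithful one applies the above to $\omega|_\cM+\varepsilon\,\omega_0$ for a fixed faithful $\omega_0$ and lets $\varepsilon\downarrow0$, using lower semicontinuity (item~2) and the definition of $D_f$ through Definition~\ref{epsdef}; and restricting the supremum to the natural cone $L^2(\cM)_+$ rather than all of $L^2(\cM)$ costs nothing, since any vector differs from a cone vector by a partial isometry in $\cM'$, i.e. a reference operation under which $D_f$ is invariant by two applications of data processing.
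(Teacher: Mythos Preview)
Your strategy --- replace the arbitrary ancilla $\cA^{op}$ by $\cM^{op}$ via a channel $\Psi$ built from a left-$\cM$-intertwining isometry $W:\sH_\omega\to L^2(\cM)$, then invoke data processing --- is different from the paper's and is in principle sound, but the construction of $W$ as written has a gap. You fix $\eta\in L^2(\cM)_+$ as the cone representative of $\omega|_\cM$ and then demand $W\xi_\omega=\eta$. When $\omega|_\cM$ is \emph{faithful} (the generic situation), $\eta$ is cyclic for $\cM$, hence $\overline{\cM\eta}=L^2(\cM)$ and there is no room left for the image of $\sH_\omega\ominus\overline{\cM\xi_\omega}$. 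That complement can be nonzero: $\xi_\omega$ is cyclic for $\ell_\omega(\cM)\vee r_\omega(\cA^{op})$, not for $\ell_\omega(\cM)$ alone (e.g.\ $\cM$ type III on $L^2(\cM)$, $\sH_\omega=L^2(\cM)\otimes\ell^2$, $\xi_\omega=\Omega\otimes e_1$). Your parenthetical ``removing the multiplicity-one piece $\overline{\ell(\cM)\eta}$ still leaves infinite multiplicity'' is therefore false exactly when $\omega|_\cM$ is faithful, and your closing remark about perturbing the non-faithful case has the roles reversed. The repair is easy: do not fix $\eta$ in advance. Take \emph{any} left-$\cM$-isometry $W:\sH_\omega\to L^2(\cM)$ (this exists for properly infinite $\cM$ since every separable left module embeds in $L^2(\cM)\cong L^2(\cM)^{\oplus\infty}$), set $\tilde\eta:=W\xi_\omega$, construct $\Psi$ from $W$, and only then pass from $\tilde\eta$ to its cone version via a partial isometry in $\cM'$ --- exactly the step your final sentence already supplies. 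A second minor point: the DPI you cite is stated for normal functionals on a von Neumann algebra, whereas $\varphi_{S,\pi,\xi}$ lives on the algebraic tensor product $\cN\odot\cA^{op}$; the cleanest justification is to unfold DPI into the variational formula plus the Schwarz inequality for $\mathrm{id}_\cN\otimes\Psi$, which works verbatim in that setting.

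For comparison, the paper avoids these issues by invoking Longo's bimodule classification: after enlarging $\cA$ to be properly infinite, every $\cM$--$\cA$ bimodule is unitarily equivalent to $L^2_\theta(\cM)$ for a \emph{homomorphism} $\theta:\cA\to\cM$, and the paper then pushes test functions forward via $\mathrm{id}_\cN\otimes\theta$ directly inside the variational formula of Proposition~\ref{varprop}. Your channel $\Psi$ plays the role of the paper's $\theta$, with Kadison's inequality replacing multiplicativity; once unpacked, your ``DPI'' is the same Schwarz step the paper uses in Theorem~\ref{thm1-4}(2). In the type I$_n$ case the paper gives an explicit Schmidt-projection argument (cutting down $\cA$ to $p\cA p$ of rank $\le n$), which is more concrete than your Uhlmann reference though it encodes the same fact. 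For type II$_1$ both approaches tensor with $B(\ell^2(\mathbb{N}))$. Your route is more information-theoretic in flavor and unifies the three cases under a single ``compress the reference'' picture; the paper's route is more structural and sidesteps the isometry-extension subtlety altogether.
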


\begin{proof}
By definition $D_f(S\| T)\geq\sup\limits_{\xi \in L^2(\cM)_+} D_f(\varphi_{S,\pi,\xi}\|\varphi_{T,\pi,\xi}) $. 
To prove the reverse inequality, we can assume for the sake of simplicity that $\cN,\cM,\cA$ are all factors; the general 
case may be treated by performing the usual decomposition into a direct sum of factors. 

Case 1) $\cM$ is of type I$_\infty$, II$_\infty$, III. Then the sup in definition \ref{defchdiv} can always be realized for a properly infinite $\cA$ because we can take the tensor 
product $\cA \otimes B(\sH)$ and the corresponding bimodule if necessary. 
Consider a $\cM-\cA$ bimodule $\sH$. In this case, \cite[Corollary 2.7]{Lon18} implies that there exists a normal homomorphism $\theta:\cA\rightarrow \cM$, such that $\sH_\pi$ is isomorphic to $L^2_\theta(\cM)$. In other words, 
there exists a unitary $U:\sH\rightarrow L^2(\cM)$ intertwining the right representation of $\sH$ with the right representation of $L^2_\theta(\cM)$. For a vector $\xi\in \sH$, denote $\eta:=U\xi$. Then $D_f(\varphi_{S,\pi,\xi}\|\varphi_{T,\pi\,\xi})=D_f(\varphi_{S,\pi_\theta,\eta}\|\varphi_{T,\pi_\theta,\eta})$, where $\pi_\theta$ is the bimodule representation relative to $L^2_\theta(\cM)$. Now, using the variational formula in proposition \ref{varprop}
\begin{equation} 
\begin{split}
&D_f(\varphi_{S,\pi_\theta,\eta}\|\varphi_{T,\pi_\theta,\eta})\\
=&\sup\limits_{(0,\infty)\xrightarrow{x} \cN\odot\cA} -\log \Bigg( a\varphi_\eta((S\otimes\theta)(1))+b\varphi_\eta((T\otimes\theta)(1))+ \\  
&\int_{(0,\infty)}(1+t)\{\varphi_\eta((S\otimes\theta)(x_t^{}x_t^\ast))+\frac{1}{t}\varphi_\eta((T\otimes\theta)(y_t^*y_t^\ast)) \} d\mu \Bigg) \\
\leq& \sup\limits_{(0,\infty)\xrightarrow{v} \cN\odot\cM} -\ln \Bigg(a\varphi_\eta((S\otimes\text{id}_\cM)(1))+b\varphi_\eta((T\otimes\text{id}_\cM)(1))+ \\  
&\int_{(0,\infty)}(1+t)\{\varphi_\eta((S\otimes\text{id}_\cM)(v^{}_tv_t^\ast))+\frac{1}{t}\varphi_\eta((T\otimes\text{id}_\cM)(w^{}_tw_t^\ast)) \} d\mu \Bigg),
\end{split}
\end{equation}
where the inequality follows 
because the second sup is over a larger set which is easily seen by setting $v_t:=(\text{id}_\cN\otimes\theta)(x_t)$ and  $w_t=1-v_t$. The right side is 
$D_f(\varphi_{S, \pi, \eta}\|\varphi_{T,\pi,\eta})$, again by our variational formula. Taking the supremum over unit vectors $\eta$ in the natural cone
then demonstrates the reverse inequality $D_f(S\| T)\le \sup\limits_{\xi \in L^2(\cM)_+} D_f(\varphi_{S,\pi,\xi}\|\varphi_{T,\pi,\xi})$ and we are done.

Case 2) $\cM$ is of type I$_n$, i.e. $\cM = M_n(\mathbb{C})$. Let $(\xi,\pi,\cA)$ be a nearly optimal triple in definition \ref{defchdiv} with corresponding bimodule $\sH$, up to tolerance $\varepsilon$.  
By replacing $r_{\sH}(\cA)$ if necessary with the potentially larger von Neumann algebra $\ell_{\sH}(\cM)'$ (which is type I), we can assume that $r_{\sH}(\cA) = B(\sK)$, as well as 
$\sH =  \mathbb{C}^n \otimes \sK$. Now let $P$ be the orthogonal projection on $\sH$ with range $\ell_\sH(\cM)\xi$. Then $P \in \ell_{\sH}(\cM)'$ so 
$P = r_{\sH}(p)$ for some orthogonal projection $p \in \cA$, and by the Schmidt-decomposition theorem, $p$ has rank $\le n$. 
Going through 
the definitions, we then have for $n \in \cN, a \in \cA$:
\begin{equation}
\varphi_{S,\pi,\xi}(n \otimes a) = \langle \xi, S(n)\xi a\rangle = \langle \xi, S(n)\xi pap \rangle = \varphi_{S,\pi,\xi}((1_n \otimes p)(n \otimes a)(1_n \otimes p))
\end{equation}
Let $x$ be a step function valued in $\cM \odot \cA$ such that the infimum in proposition \ref{varprop} is achieved up to tolerance $\varepsilon$. 
Observe that
\begin{equation}
\varphi_{S,\pi,\xi}(x_tx_t^*) = \varphi_{S,\pi,\xi}((1_n \otimes p)x_t x_t^*(1_n \otimes p)) \ge  \varphi_{S,\pi,\xi}((1_n \otimes p)x_t (1_n \otimes p) x_t^*(1_n \otimes p)),
\end{equation}
and we get a similar relation for $S \to T$ and $x_t \to y_t = 1-x_t$. Therefore, setting $\hat x_t = (1_n \otimes p)x_t (1_n \otimes p), 
\hat y_t = (1_n \otimes p)y_t (1_n \otimes p)$, we have that 
\begin{equation} 
\label{split1}
\begin{split}
D_f(S\| T) -2\varepsilon \le& -\log \left(
a + b + \int_{(0,\infty)} (1+t) \{
\varphi_{S,\pi,\xi}(x_tx_t^*) + \frac{1}{t} \varphi_{T,\pi,\xi}(y_ty_t^*)
\} d\mu
\right) \\
\le & -\log \left(
a + b + \int_{(0,\infty)} (1+t) \{
\varphi_{S,\pi,\xi}(\hat x_t \hat x_t^*) + \frac{1}{t} \varphi_{T,\pi,\xi}(\hat y_t \hat y_t^*)
\} d\mu
\right) 
\end{split}
\end{equation}
Now observe that $1_n \otimes p$ is the unit in $\cM \odot p\cA p$ and that $\hat x_t + \hat y_t = 1_n \otimes p$, 
so $\hat x_t \in \cM \odot p\cA p$ is an admissible step function in the variational principle of proposition \ref{varprop}. Furthermore 
$p\cA p$ is naturally isomorphic to a subalgebra of $M_n(\mathbb{C}^n) = \cM$ (since the rank of $p$ is $\le n$), and that 
$P\sH$ (which contains $\xi = P\xi$) is isometric to a subspace of $\mathbb{C}^n \otimes \mathbb{C}^n$ (since $P = r_{\sH}(p)$). 
Therefore, the right side of \eqref{split1} is less than or equal to $D_f(\varphi_{S,\hat \pi, \hat \xi}\| \varphi_{T,\hat \pi, \hat \xi})$, 
where $\hat \pi$ is the representation associated with the standard $\cM-\cM$-bimodule $\mathbb{C}^n \otimes \mathbb{C}^n$, 
and where $\hat \xi$ is the unit vector in that bimodule corresponding to $\xi$. Thus, if $\cM = M_n(\mathbb{C}^n)$ it is sufficient 
in the variational definition \ref{defchdiv} to consider only $\cM-\cM$ bimodules.

Case 3) $\cM$ is of type II$_1$. Consider a $\cM-\cA$-bimodule $\sH$. Then  $\sH \otimes L^2(\ell^2(\mathbb{N}))$  is naturally a $\cM \otimes B(\ell^2(\mathbb{N})) - \cA \otimes B(\ell^2(\mathbb{N}))$-bimodule. Both algebras $\cM \otimes B(\ell^2(\mathbb{N}))$ and
$\cA \otimes B(\ell^2(\mathbb{N}))$ are properly infinite. Thus, by the same reasoning as above in 1), we know that the maximizing bimodule can be taken to be a standard bimodule for $\cM \otimes B(\ell^2(\mathbb{N}))$. Restricting this standard bimodule to a $\cM - \cM \otimes B(\ell^2(\mathbb{N}))$-bimodule gives a bimodule whose associated 
$f$-divergence as in definition \ref{defchdiv} is not smaller than that of the original $\cM-\cA$-bimodule. 

For the BS divergence, we consider the variational principle given by proposition \ref{BS} instead of proposition \ref{varprop}.
\end{proof}


\subsection{Basic properties of channel divergence}

We will now prove some basic properties of the channel divergences. In the next lemmas, $D_f$ is the divergence associated with a non-negative operator monotone function $f:[0,\infty) \to [0,\infty)$ with the representation \eqref{frep}
such that $a,b<\infty$ and $D_{BS}$ is the BS divergence (basically corresponding to $f(t) = \log t$). In our proof of the next theorem, we cannot use directly the 
proofs \cite[Theorem 4.4, Proposition 4.5]{Hia21} for $D_f(\varphi\|\psi)$ because our definition of $D_f(T\|S)$ is based fundamentally 
on a variational principle for testfunctions valued in $\cM \odot \cA^{op}$, which is not a von Neumann algebra.
Fortunately, it is well-known \cite{Kos86} (see also \cite[Chapter 5]{ohya2004quantum}) that variational  principles can provide alternative proofs.

\begin{theorem}
\label{thm1-4}
\begin{enumerate}
\item
(Lower semi-continuity)
Let $T_n, S_n$ be channels such that $T_n(m) \to T(m), S_n(m) \to S(m)$ weakly for any $m \in \cM$. Then 
$D_f(S \| T) \le \liminf_n D_f(S_n \| T_n)$, and similarly $D_{BS}(S \| T) \le \liminf_n D_{BS}(S_n \| T_n)$.

\item
(Data processing inequality)
Let $S_1, S_2: \cN \to \cM, T:\cR \to \cN$ be channels between von Neumann algebras. Then 
$D_f(S_1\| S_2) \ge D_f(S_1 \circ T\| S_2 \circ T)$. Similarly, 
 $D_{BS}(S_1\| S_2) \ge D_{BS}(S_1 \circ T\| S_2 \circ T)$. 
 
 \item
 (Joint convexity)
 Let $p_i, q_j$ be probability distributions over a finite set and $T_i, S_j:\cN \to \cM$ channels. Then 
 \begin{equation}
 D_{f}(\sum_i p_i S_i \| \sum_j q_j T_j) \le \sum_{i,j} p_i q_j D_{f}(S_i\|T_j),
 \end{equation}
 and similarly for $D_{BS}$.
 
 \item 
 (Dilation) Given channels $S,T:\cN\to\cM$, then $D_f(S\|T)=D_f(S\otimes id_{B(\ell^2(\mathbb{N}))}\|T\otimes id_{B(\ell^2(\mathbb{N}))})$. The same holds for $D_{BS}(S\|T)$.
\end{enumerate}
\end{theorem}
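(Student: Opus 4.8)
The four assertions are of rather different natures, so let me address them in turn, emphasizing where the variational formulas of propositions \ref{varprop} and \ref{BS} do the real work.

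For item (1), lower semi-continuity, the plan is to reduce everything to the pointwise convergence of the relevant states and then invoke the lower semi-continuity already recorded for $D_f$ and $D_{BS}$ on functionals (the item listed just after proposition \ref{BS}). Fix any admissible triple $(\cA,\pi,\xi)$ appearing in definition \ref{defchdiv}. Since $T_n(m)\to T(m)$ weakly for each $m\in\cM$, the induced states $\varphi_{T_n,\pi,\xi}=\varphi_\xi\circ\pi\circ(T_n\otimes\mathrm{id}_\cA)$ converge pointwise to $\varphi_{T,\pi,\xi}$ on the dense subalgebra $\cN\odot\cA^{op}$, and likewise for $S_n$. By pointwise lower semi-continuity of the divergence on states we get $D_f(\varphi_{S,\pi,\xi}\|\varphi_{T,\pi,\xi})\le\liminf_n D_f(\varphi_{S_n,\pi,\xi}\|\varphi_{T_n,\pi,\xi})\le\liminf_n D_f(S_n\|T_n)$. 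Taking the supremum over $(\cA,\pi,\xi)$ on the left yields the claim; the BS case is identical.

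For item (2), the data processing inequality, the key observation is that composing on the right with a channel $T:\cR\to\cN$ restricts the class of test functions over which one optimizes. Precisely, given an admissible triple $(\cA,\pi,\xi)$ for the pair $S_1\circ T,S_2\circ T:\cR\to\cM$, I would use the same $(\cA,\pi,\xi)$ as a triple for $S_1,S_2:\cN\to\cM$, and compare the two variational problems of proposition \ref{varprop}. A step function $x$ valued in $\cR\odot\cA^{op}$ is pushed forward to $(T\otimes\mathrm{id}_\cA)(x)$ valued in $\cN\odot\cA^{op}$; since $T\otimes\mathrm{id}_\cA$ is completely positive and unital it satisfies the Schwarz inequality $(T\otimes\mathrm{id})(x x^\ast)\ge (T\otimes\mathrm{id})(x)\,(T\otimes\mathrm{id})(x)^\ast$, so each term $\varphi_{S_1\circ T,\pi,\xi}(x x^\ast)=\varphi_{S_1,\pi,\xi}\big((T\otimes\mathrm{id})(x x^\ast)\big)$ dominates the corresponding term built from the pushed-forward test function. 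Because $-\log$ is decreasing, this shows the variational infimum for $S_1\circ T,S_2\circ T$ at $x$ is bounded by that for $S_1,S_2$ at $(T\otimes\mathrm{id})(x)$, giving $D_f(\varphi_{S_1\circ T,\pi,\xi}\|\varphi_{S_2\circ T,\pi,\xi})\le D_f(\varphi_{S_1,\pi,\xi}\|\varphi_{S_2,\pi,\xi})\le D_f(S_1\|S_2)$; taking the supremum over triples finishes it.

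Item (3), joint convexity, I would deduce from the concavity of the map $x\mapsto\log x$ together with the fact that the objects inside the $-\log$ in proposition \ref{varprop} are affine in the channel. For a fixed triple and fixed test function $x$, the quantity $a\varphi_{\,\cdot\,}(1)+b\,(\cdots)+\int(1+t)[\cdots]\,d\mu$ is linear in $(S,T)$ since $\varphi_{\sum p_iS_i,\pi,\xi}=\sum p_i\varphi_{S_i,\pi,\xi}$; convexity of $-\log$ then gives joint convexity of $D_f(\varphi_{\sum p_iS_i,\pi,\xi}\|\varphi_{\sum q_jT_j,\pi,\xi})$ in the pairs, exactly as in the joint convexity of $D_f$ on states recorded above, and the supremum of a family of jointly convex functions is jointly convex. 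Finally item (4), dilation, is a stability statement: tensoring $S,T$ with $\mathrm{id}_{B(\ell^2(\mathbb{N}))}$ cannot decrease the divergence (just enlarge the ancilla $\cA\rightsquigarrow\cA\otimes B(\ell^2(\mathbb{N}))$ to see ``$\ge$''), and it cannot increase it either, since the ancillary algebra in definition \ref{defchdiv} already ranges over all von Neumann algebras and absorbs the extra $B(\ell^2(\mathbb{N}))$ factor; I would make this precise by exhibiting a bijection between admissible triples for $S\otimes\mathrm{id}$ and for $S$ that preserves the value of the state-level divergence.

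The step I expect to be most delicate is item (2): one must check carefully that pushing forward a finite-range step function by $T\otimes\mathrm{id}_\cA$ again yields an \emph{admissible} test function (finite range, equal to $1$ for small $t$ and $0$ for large $t$, with $\hat y_t=1-\hat x_t$), and that the Schwarz inequality applies termwise under the functional $\varphi$ despite $\cN\odot\cA^{op}$ not being a von Neumann algebra — this is precisely the subtlety flagged in the paragraph preceding the theorem, and it is where the variational formulation, rather than the operator-theoretic definition, is essential.
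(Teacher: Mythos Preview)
Your approach matches the paper's for all four items: the variational formulas of propositions \ref{varprop} and \ref{BS} do the work, item (2) goes via the Schwarz inequality applied to pushed-forward step functions (and the paper's proof is exactly the computation you outline), item (3) via affinity in $(S,T)$ of the integrand, and item (4) via reshuffling the ancillary factor between the bimodule and the algebra $\cA$.

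There is one small but genuine gap in your treatment of item (1). You invoke ``pointwise lower semi-continuity of the divergence on states'' as a ready-made result. But as the paragraph immediately preceding the theorem warns, the quantity $D_f(\varphi_{S,\pi,\xi}\|\varphi_{T,\pi,\xi})$ in definition \ref{defchdiv} is \emph{not} the standard maximal $f$-divergence between states on a von Neumann algebra: it is the variational expression of proposition \ref{varprop} with step functions restricted to the algebraic tensor product $\cN\odot\cA^{op}$, which is not a von Neumann algebra. The state-level lower semi-continuity listed after proposition \ref{BS} therefore does not literally apply. The fix (which is what the paper does) is to unwind the black box: choose a near-optimal finite-range step function $x$ in $\cN\odot\cA^{op}$, note that $\varphi_{S_n,\pi,\xi}(x_t x_t^\ast)\to\varphi_{S,\pi,\xi}(x_t x_t^\ast)$ by weak convergence of $S_n$, and pass to the limit in the finite sum directly. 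This is precisely the argument that \emph{proves} state-level lower semi-continuity from the variational formula, so your instinct is correct---it just has to be rerun in this restricted setting rather than quoted.
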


\begin{proof}
1) Consider an $\cM - \cA$ bimodule $\pi$ and unit vector $\xi \in \sH_\pi$ 
that achieves the supremum in the variational definition \eqref{channeldiv}
up to a tolerance $\varepsilon$. Then 
\begin{equation}
\begin{split}
D_f(S \| T) - 2\varepsilon =& D_f(\varphi_{S,\pi,\xi} \| \varphi_{T,\pi,\xi}) - \varepsilon\\
\le & \liminf_n D_f(\varphi_{S_n,\pi,\xi} \| \varphi_{T_n,\pi,\xi})\\
\le & \liminf_n D_f(S_n \| T_n).
\end{split}
\end{equation}
The first inequality is proven as follows. Consider an admissible step function $x: (0,\infty) \to \cM \odot \cA$ in proposition \ref{varprop} 
such that $D_f$ is achieved up to the tolerance $\varepsilon$, we see 
\begin{equation}
\begin{split}
& D_f(\varphi_{S,\pi,\xi} \| \varphi_{T,\pi,\xi}) -\varepsilon \\
=&-\log \left( a + b + \int_{(0,\infty)}(1+t)\{\varphi_{S,\pi,\xi}(x_t^{}x_t^\ast) + \frac{1}{t}\varphi_{T,\pi,\xi}(y_t^{}y^\ast_t)\}d\mu(t) \right) \\
=& -\log \left( a+b +\lim_n \int_{(0,\infty)}(1+t)\{\varphi_{S_n,\pi,\xi}(x_t^{}x_t^\ast) + \frac{1}{t}\varphi_{T_n,\pi,\xi}(y_t^{}y^\ast_t)\}d\mu(t) \right) \\
\le & \liminf_n D_f(\varphi_{S_n,\pi,\xi} \| \varphi_{T_n,\pi,\xi}),
\end{split}
\end{equation}
using again the variational principle in the last line. The statement for the BS divergence likewise follows from the corresponding variational
formula, see lemma \ref{BS}.

2) Let $(\xi, \pi, \cA)$ be a nearly optimal triple as in definition \eqref{channeldiv} for $D_f(S_1 \circ T\| S_2 \circ T)$ up to tolerance $\varepsilon$. 
Consider a step function $x_t \in \cN \otimes \cA^{op}$ as in proposition \ref{varprop} that is nearly optimal in the variational characterization 
of $D_f(\varphi_{S_1 \circ T,\pi,\xi} \| \varphi_{S_2 \circ T,\pi,\xi})$ up to a tolerance $\varepsilon > 0$. Then clearly $(T \otimes id)(x_t) \in \cM \odot \cA^{op}$ is 
 an admissible step function in the variational characterization 
of $D_f(\varphi_{S_1,\pi,\xi} \| \varphi_{S_2,\pi,\xi})$, and so we have, using Kadison's theorem 
$(T \otimes id)(n^*n) \ge (T \otimes id)(n)^*(T \otimes id)(n)$ and the unital property $T(1)=1$,
\begin{equation}
\begin{split}
& D_f(S_1 \circ T\| S_2 \circ T)- 2\varepsilon \le D_f(\varphi_{S_1 \circ T,\pi,\xi} \| \varphi_{S_2 \circ T,\pi,\xi}) - \varepsilon \\
\le& \ -\log \left( a + b + \int_{(0,\infty)}(1+t)\{\varphi_{S_1,\pi,\xi} \circ T(x_t^{}x_t^\ast) + \frac{1}{t}\varphi_{S_2,\pi,\xi} \circ T(y_t^{}y^\ast_t)\}d\mu(t) \right) \\
\le& \ -\log \left( a + b +  \int_{(0,\infty)}(1+t)\{\varphi_{S_1,\pi,\xi}(T(x_t^{})T(x_t^\ast)) + \frac{1}{t}\varphi_{S_2,\pi,\xi}(T(y_t^{})T(y^\ast_t))\}d\mu(t) \right)\\
\le& \ -\log \left( a + b + \inf\limits_{(0,\infty)\xrightarrow{x} \cM} \int_{(0,\infty)}(1+t)\{\varphi_{S_1,\pi,\xi}(x_t^{}x_t^\ast) 
+ \frac{1}{t}\varphi_{S_2,\pi,\xi}(y_t^{}y^\ast_t)\}d\mu(t) \right) \\
=& \ D_f(\varphi_{S_1,\pi,\xi} \| \varphi_{S_2,\pi,\xi}) 
\le \ D_f(S_1\|S_2),
\end{split}
\end{equation}  
and therefore the statement follows because $\varepsilon$ was arbitrary.
The proof for the BS divergence is similar and now based on proposition \ref{BS}.

3) The variational principles expressed in definition  \eqref{channeldiv} and proposition \ref{varprop} display $D_f(S\|T)$ as a double supremum of 
affine functionals of $S,T$. Joint convexity follows. The details are similar to 2).

4) Let $\tilde \sH$ be a $\cM \otimes B(\ell^2(\mathbb{N})) - \cA$ bimodule. Since a left representation is a right representation of the opposite algebra, 
this is also a $\cM - B(\ell^2(\mathbb{N}))^{op} \otimes \cA$ bimodule, hence included in the maximization in definition \ref{defchdiv} of $D_f(S\|T)$.
Thus, we have $D_f(S\|T) \ge D_f(S\otimes id_{B(\ell^2(\mathbb{N}))}\|T\otimes id_{B(\ell^2(\mathbb{N}))})$. On the other hand, 
let $(\pi,\cA,\xi)$ be a nearly optimal triple in the definition \ref{defchdiv} of $D_f(S\|T)$, up to tolerance $\varepsilon$, where $\pi$ corresponds to 
some $\cM-\cA$ bimodule $\sH$. Then $\tilde \sH = \ell^2(\mathbb{N}) \otimes \sH$ is a $\cM \otimes B(\ell^2(\mathbb{N}))-\cA$ bimodule. Let 
$\eta$ be any unit vector in $\ell^2(\mathbb{N})$ and set $\tilde \xi = \xi \otimes \eta$
as well as $\tilde S = S\otimes id_{B(\ell^2(\mathbb{N}))}$. If $x_t$ is a step function 
achieving the supremum the variational formula (proposition \ref{varprop}) of $D_f(\varphi_{\pi,S,\xi} \| \varphi_{\pi,S,\xi})$ up to tolerance $\varepsilon$, 
it follows that $\tilde x_t := x_t \otimes 1_{\ell^2(\mathbb{C})}$ is a valid step function in 
the variational definition of $D_f(\varphi_{\tilde \pi, \tilde S, \tilde \xi} \| \varphi_{\tilde \pi,\tilde S,\tilde \xi})$ achieving the same value. So we have 
$D_f(S\|T) -2\varepsilon \le D_f(S\otimes id_{B(\ell^2(\mathbb{N}))}\|T\otimes id_{B(\ell^2(\mathbb{N}))})$, and 4) is proven for $D_f$. In the case of $D_{BS}$
we use instead proposition \ref{BS}.
 \end{proof}
 
Our basic proof strategy to prove certain more profound properties of the channel divergence below 
will be to reduce the statements to those for finite dimensional matrix algebras obtained recently in \cite{Fan21}, and to do this 
we will need to restrict attention from now on to hyperfinite von Neumann algebras. 
[A von Neumann algebra $\cN$ on a Hilbert space $\sH$ is said to be 
hyperfinite if there exists a sequence (``filtration'') $\{\cN_n\}_{n\in \mathbb{N}}\subset \cN$ 
of finite dimensional subalgebras, increasing, i.e. $\cN_n \subset \cN_{n+1}$, such that $\cN=\left(\bigcup_n \cN_n \right)''$.] 

Examples of hyperfinite factors are all type I factors, i.e. $B(\sH), M_n(\mathbb{C})$, but there also exist hyperfinite factors of types II and III.
Let $\cN, \cM$ be hyperfinite factors, let $S,T: \cN \to \cM$ be channels, and let $\cN_n, \cM_n$ be filtrations of $\cN$, and $\cM$, respectively. Following \cite{Acc82} one 
can construct for each $n$ a ``generalized conditional expectation'' $E_n:\cM\rightarrow \cM_n$ as follows. Let $\omega_n = \omega \restriction \cM_n$
be the restriction of the faithful normal state $\omega$ on $\cM$, with GNS representation $\pi_n$, GNS Hilbert space $\sH_n$ 
and GNS vector $\Omega_n$. Let $J_n$ be the modular conjugation associated with $\Omega_n$ and define a partial isometry $V_n: \sH_n \to \sH$ by 
$V_n \pi_n(x)\Omega_n := x\Omega$ where $x \in \cM_n$. One can check that $V_n^* \cM' V_n \subset \pi_n(\cM_n)'$. It is therefore consistent to 
define $E_n(m)$ to be the unique element $x_n \in \cM_n$ such that
\begin{equation}
\pi_n(x_n) = J_n V_n^* JmJ V_n J_n, 
\end{equation}
where $J$ is the modular conjugation associated with $\Omega$ and $\cM$. By construction, each $E_n$ is a channel. 
 
 \begin{lemma}\label{weakconv} (see \cite{hiai1984strong})
  $E_n(m) \rightarrow m$ strongly as $n \to \infty$  for all $m \in \cM$.
 \end{lemma}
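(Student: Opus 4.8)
The plan is to run a noncommutative martingale argument in two stages: first establish convergence in the GNS (Hilbert space) topology, $E_n(m)\Omega\to m\Omega$, and then upgrade this to strong operator convergence $E_n(m)\to m$ by playing off the commutant $\cM'$ against the uniform bound $\|E_n(m)\|\le\|m\|$.

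First I would record two structural properties of the maps $E_n$. Since $E_n(m)\in\cM_n$ and $V_n\pi_n(y)\Omega_n=y\Omega$ for $y\in\cM_n$, the definition of $E_n$ unwinds to the compact formula $E_n(m)\Omega=V_nJ_nV_n^\ast Jm\Omega$, where one uses $J_n\Omega_n=\Omega_n$, $V_n\Omega_n=\Omega$ and $J\Omega=\Omega$. From this a direct computation, using $V_n^\ast\Omega=\Omega_n$ together with the antiunitarity of $J$ and $J_n$, gives $\omega(E_n(m))=\langle\Omega,V_nJ_nV_n^\ast Jm\Omega\rangle=\langle Jm\Omega,\Omega\rangle=\omega(m)$, so each $E_n$ preserves $\omega$. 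The second property is that $E_n$ restricts to the identity on $\cM_n$; this is the standard behaviour of the Accardi--Cecchini generalized conditional expectation \cite{Acc82}, and it is where the genuine modular input resides.

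Granting these, the $L^2$-statement is soft. By the Kadison--Schwarz inequality $E_n(m)^\ast E_n(m)\le E_n(m^\ast m)$ and $\omega\circ E_n=\omega$ we get $\|E_n(m)\Omega\|^2=\omega(E_n(m)^\ast E_n(m))\le\omega(m^\ast m)=\|m\Omega\|^2$, so $\hat E_n\colon m\Omega\mapsto E_n(m)\Omega$ extends to a contraction on $\sH$. The subspace $\bigcup_k\cM_k\Omega$ is dense in $\sH$: by Kaplansky's density theorem the unit ball of $\bigcup_k\cM_k$ is strongly dense in that of $\cM=(\bigcup_k\cM_k)''$, and $\Omega$ is cyclic. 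On this dense set $\hat E_n$ is eventually the identity, since for $x\in\cM_k$ and $n\ge k$ one has $x\in\cM_n$, hence $\hat E_n(x\Omega)=E_n(x)\Omega=x\Omega$. Pointwise convergence on a dense set together with the uniform bound $\|\hat E_n\|\le1$ then yields $\hat E_n\to\mathrm{id}$ strongly, i.e. $E_n(m)\Omega\to m\Omega$ for every $m\in\cM$.

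Finally I would bootstrap to the strong operator topology. For $a'\in\cM'$ the operators $E_n(m),m$ commute with $a'$, so $E_n(m)a'\Omega=a'E_n(m)\Omega\to a'm\Omega=ma'\Omega$. Since $\Omega$ is cyclic for $\cM'$, the set $\cM'\Omega$ is dense, and $\|E_n(m)\|\le\|m\|$ uniformly in $n$ because $E_n$ is positive and unital; approximating an arbitrary $\xi\in\sH$ by an element of $\cM'\Omega$ and controlling the remainder by this uniform bound gives $E_n(m)\xi\to m\xi$ for all $\xi$, that is $E_n(m)\to m$ strongly. The main obstacle in this scheme is the pair of structural facts in the first step; among them the identity $E_n|_{\cM_n}=\mathrm{id}$ is the crux, as it is the statement encoding the modular compatibility of $J,J_n,\Delta,\Delta_n$. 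Once it is cited from the Accardi--Cecchini framework, the convergence — equivalently the strong convergence $V_nJ_nV_n^\ast\to J$ of the transported modular conjugations of \cite{hiai1984strong} — drops out of the martingale argument, everything else being elementary functional analysis.
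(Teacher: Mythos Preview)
The paper does not give a proof of this lemma; it is cited from \cite{hiai1984strong}. Your proposed argument, however, contains a genuine error: the claim that $E_n|_{\cM_n}=\mathrm{id}$ is \emph{false} in general for the Accardi--Cecchini map. If a unital completely positive map $E:\cM\to\cM_n$ satisfied $E|_{\cM_n}=\mathrm{id}$, it would be a norm-one projection and hence, by Tomiyama's theorem, a genuine conditional expectation; but by Takesaki's theorem an $\omega$-preserving conditional expectation onto $\cM_n$ exists only when $\cM_n$ is invariant under the modular automorphism group $\sigma^\omega_t$. This is not assumed for an arbitrary filtration and typically fails. Consequently your step ``for $x\in\cM_k$ and $n\ge k$, $E_n(x)\Omega=x\Omega$'' is unjustified, and the $L^2$-martingale argument built on it collapses.

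The actual content of Hiai's result is precisely the strong convergence $V_nJ_nV_n^\ast\to J$ of the transported modular conjugations, which you mention at the end but treat as a \emph{consequence} of $E_n|_{\cM_n}=\mathrm{id}$ rather than as the real input. That convergence is obtained from martingale-type convergence of the modular data for the increasing net $\{\cM_n\}$, not from any idempotency of $E_n$. Once $V_nJ_nV_n^\ast\to J$ strongly is available, your (correct) formula $E_n(m)\Omega=V_nJ_nV_n^\ast Jm\Omega$ immediately gives $E_n(m)\Omega\to m\Omega$, and your final bootstrap through $\cM'\Omega$ together with the uniform bound $\|E_n\|\le 1$ then correctly upgrades this to strong operator convergence.
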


\noindent
We get: 
 
 \begin{proposition}[Martingale property] 
 We have $D_f(S\| T)=\sup_{n}  D_f(S \restriction_{\cN_n}\| T\restriction_{\cN_n})$. Similarly, 
 $D_{BS}(S\| T)=\sup_{n}  D_{BS}(S \restriction_{\cN_n}\| T\restriction_{\cN_n})$
 \end{proposition}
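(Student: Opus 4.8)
The plan is to prove the two inequalities separately. The direction ``$\geq$'' will follow directly from the data processing inequality, while ``$\leq$'' will be reduced to the martingale property \eqref{invadingnet} of the \emph{state} divergence, applied to a filtration of the tensored algebra $\cN\otimes\cA^{op}$.

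For the easy direction, I would note that for each $n$ the inclusion $\iota_n:\cN_n\hookrightarrow\cN$ is a unital normal $\ast$-homomorphism, hence a channel, and that $S\restriction_{\cN_n}=S\circ\iota_n$ and $T\restriction_{\cN_n}=T\circ\iota_n$. Applying the data processing inequality (theorem \ref{thm1-4}, item 2) with $\iota_n$ as the inner channel gives $D_f(S\|T)\geq D_f(S\circ\iota_n\|T\circ\iota_n)=D_f(S\restriction_{\cN_n}\|T\restriction_{\cN_n})$ for every $n$; taking the supremum over $n$ yields $D_f(S\|T)\geq\sup_n D_f(S\restriction_{\cN_n}\|T\restriction_{\cN_n})$. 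The BS case is identical.

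For the reverse inequality I would fix a triple $(\cA,\pi,\xi)$ nearly optimal for $D_f(S\|T)$ up to tolerance $\varepsilon$, so that $D_f(S\|T)-\varepsilon\leq D_f(\varphi_{S,\pi,\xi}\|\varphi_{T,\pi,\xi})$. The key observation is that $\varphi_{S,\pi,\xi}$ and $\varphi_{T,\pi,\xi}$ are normal states on the von Neumann algebra $\cN\otimes\cA^{op}$, and that the subalgebras $\cN_n\otimes\cA^{op}$ form an increasing filtration whose union is weakly dense in $\cN\otimes\cA^{op}$, since $\bigcup_n\cN_n$ is weakly dense in $\cN$. The state martingale property \eqref{invadingnet} then gives
\begin{equation}
D_f(\varphi_{S,\pi,\xi}\|\varphi_{T,\pi,\xi})=\sup_n D_f\big(\varphi_{S,\pi,\xi}\restriction_{\cN_n\otimes\cA^{op}}\,\big\|\,\varphi_{T,\pi,\xi}\restriction_{\cN_n\otimes\cA^{op}}\big).
\end{equation}
Because $(S\otimes\mathrm{id}_\cA)\restriction_{\cN_n\odot\cA^{op}}=(S\restriction_{\cN_n})\otimes\mathrm{id}_\cA$, the restricted state $\varphi_{S,\pi,\xi}\restriction_{\cN_n\otimes\cA^{op}}$ is exactly the state $\varphi_{S\restriction_{\cN_n},\pi,\xi}$ built from the restricted channel, and likewise for $T$. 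Since $(\cA,\pi,\xi)$ is an admissible triple in definition \ref{defchdiv} for the restricted channels $S\restriction_{\cN_n},T\restriction_{\cN_n}:\cN_n\to\cM$ (the codomain $\cM$ and the ancillary data are unchanged), each summand on the right is $\leq D_f(S\restriction_{\cN_n}\|T\restriction_{\cN_n})$. Hence $D_f(S\|T)-\varepsilon\leq\sup_n D_f(S\restriction_{\cN_n}\|T\restriction_{\cN_n})$, and letting $\varepsilon\downarrow 0$ completes the argument; the BS case runs identically with proposition \ref{BS} and the $D_{BS}$ martingale property.

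I expect the main obstacle to lie in the two bookkeeping steps of the ``$\leq$'' direction: verifying that $\bigcup_n(\cN_n\otimes\cA^{op})$ is weakly dense in $\cN\otimes\cA^{op}$ (a standard but nontrivial density fact for von Neumann tensor products, where normality of $\pi\circ(S\otimes\mathrm{id}_\cA)$ ensures the relevant functionals stay normal), and the clean identification of $\varphi_{S,\pi,\xi}\restriction_{\cN_n\otimes\cA^{op}}$ with $\varphi_{S\restriction_{\cN_n},\pi,\xi}$. Once these are in place the result is immediate from \eqref{invadingnet}, with no new variational estimate needed. It is worth noting that this argument uses only the domain filtration $\cN_n$; the generalized conditional expectations $E_n$ on the codomain $\cM$ and lemma \ref{weakconv} play no role here and are presumably reserved for the finite-dimensional reductions of the subsequent results.
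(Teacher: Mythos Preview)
Your argument for the direction ``$\geq$'' via data processing is fine and matches the paper. The problem lies in the ``$\leq$'' direction.

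The quantity $D_f(\varphi_{S,\pi,\xi}\|\varphi_{T,\pi,\xi})$ appearing in definition \ref{defchdiv} is \emph{not} the state divergence of definition \ref{hiaidef}/\ref{epsdef} on a von Neumann algebra: it is declared there to be the variational expression of proposition \ref{varprop} (resp.\ \ref{BS}) with step functions ranging over the \emph{algebraic} tensor product $\cN\odot\cA^{op}$. The paper stresses this point explicitly just before theorem \ref{thm1-4}: ``we cannot use directly the proofs \dots\ because our definition of $D_f(T\|S)$ is based fundamentally on a variational principle for testfunctions valued in $\cM\odot\cA^{op}$, which is not a von Neumann algebra.'' Since \eqref{invadingnet} is a statement about the state divergence on a von Neumann algebra with a filtration of von Neumann subalgebras, you cannot invoke it for the object that actually enters definition \ref{defchdiv}. (Nor is it immediate that $\varphi_{S,\pi,\xi}$ extends to a normal functional on $\cN\bar\otimes\cA^{op}$: $\pi$ is only assumed binormal, and $S\otimes id_\cA$ is defined on $\cN\odot\cA^{op}$.) The same issue recurs at the restricted level, since even for finite-dimensional $\cN_n$ the algebra $\cN_n\odot\cA^{op}$ is not weakly closed when $\cA$ is infinite-dimensional.

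The paper therefore argues directly at the level of the variational formula: take a near-optimal finite-range step function $x_t\in\cN\odot\cA^{op}$, use strong density of $\bigcup_n\cN_n$ in $\cN$ to approximate each value of $x_t$ by elements of $\cN_n\odot\cA^{op}$, and feed the approximants back into the variational expression for the restricted channels. Your route could be salvaged, but only by first proving that the variational $D_f$ over $\cN\odot\cA^{op}$ coincides with the state $D_f$ on $\cN\bar\otimes\cA^{op}$; establishing that is essentially the same strong-density approximation the paper carries out, so nothing is saved.
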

 
 \begin{proof}
Consider first the case $D_f$. 
Let $(\xi, \pi, \cA)$ be a nearly optimal triple as in the definition \eqref{channeldiv} of $D_f(S\|T)$. 
We let $p_n$ be the abstract units in $\cM_n$ which from an increasing net of projections in $\cM$. 
By the von Neumann density theorem, $p_n \to 1$ strongly. 
By monotonicity, i.e. by the data processing inequality applied to the inclusion channel $\cM_n \subset \cM$, we have
 $\limsup_n D_f(S\restriction_{\cM_n}\|T\restriction_{\cM_n})\leq D_f(S\|T)$.
 
Let $\cB := \bigcup_n \cM_n$ which is a $*$-subalgebra of $\cM$ whose weak closure is $\cB''=\cM$. By the von Neumann density theorem, $\cB$ is strongly dense on $\cM$. 
Let $x_t$ be an admissible step function as in proposition \ref{varprop} valued in $\cM \odot \cA^{op}$
which approximates $D_f(\varphi_{S,\pi,\xi} \| \varphi_{T,\pi,\xi})$ up to an arbitrary chosen tolerance $\varepsilon$. Because $x_t$
has finite range and because $\cB$ is strongly dense in $\cM$, we can construct a sequence  
the step functions $x_{n,t}$ in $\cM_n \odot \cA$ such that $x_{n,t}$ is constant on each interval where $x_t$ is constant, such that $x_{n,t} = p_n$ for any  $t$  which is so small that 
$x_t = 1$, and such that moreover $x_{n,t} \to x_t$ strongly on each such interval as $n \to \infty$. Then 
$\varphi_{S,\pi,\xi}(x_{n,t}^{}x_{n,t}^\ast) \to \varphi_{S,\pi,\xi}(x_{t}^{}x_{t}^\ast)$ and, letting $y_{n,t} = p_n - x_{n,t} \in \cM_n$, 
$\varphi_{T,\pi,\xi}(y_{n,t}^{}y_{n,t}^\ast) \to \varphi_{T,\pi,\xi}(y_{t}^{}y_{t}^\ast)$ as $n \to \infty$, uniformly in $t$. 
We insert the step functions $x_{n,t}, y_{n,t}$ and the unit $p_n$ instead of $x_t, y_t$ and $1$ 
into the right side of the variational formula proposition \ref{varprop}. 

The convergence properties of the step functions $x_{n,t}, y_{n,t}$ and the unit $p_n$ mean that
the right side converges to $D_f(\varphi_{S,\pi,\xi} \| \varphi_{T,\pi,\xi}) -\varepsilon$ as $n \to \infty$, 
thus $\liminf_n D_f(S\restriction_{\cM_n}\|T\restriction_{\cM_n})\ge D_f(\varphi_{S,\pi,\xi} \| \varphi_{T,\pi,\xi}) - \varepsilon \ge D_f(S\|T)-2\varepsilon$. 
We can take $\varepsilon$ smaller and smaller, proving that $\liminf_n D_f(S\restriction_{\cM_n}\|T\restriction_{\cM_n})\ge D_f(S\|T)$, which demonstrates the 
proposition. 

For the BS divergence, we proceed in a similar way now using variational principle proposition \ref{BS}. 
\end{proof}
 
 Combining the previous lemma and the martingale property we get:
  \begin{lemma}\label{sup1}
We have $D_f(S\| T)=\lim_m \sup_{n} D_f(E_m \circ S \restriction_{\cN_n}\| E_m \circ T\restriction_{\cN_n})$, and similarly for the 
BS divergence. 
 \end{lemma}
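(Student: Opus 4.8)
The plan is to prove the two bounds $\limsup_m \sup_n D_f(E_m \circ S\restriction_{\cN_n}\| E_m \circ T\restriction_{\cN_n}) \le D_f(S\|T)$ and $D_f(S\|T) \le \liminf_m \sup_n D_f(E_m \circ S\restriction_{\cN_n}\| E_m \circ T\restriction_{\cN_n})$ separately and then sandwich them, so that the limit exists and equals $D_f(S\|T)$. The approximants are channels $E_m \circ S\restriction_{\cN_n}:\cN_n \to \cM_m$ into the finite-dimensional algebra $\cM_m$, but via the inclusion channel $\iota:\cM_m\hookrightarrow\cM$ I would also view them as channels into $\cM$. A preliminary observation I would record is that enlarging the codomain never increases the divergence: for channels $A,B:\cN_n\to\cM_m$ one has $D_f(\iota\circ A\|\iota\circ B)\le D_f(A\|B)$, since each admissible triple $(\cA,\pi,\xi)$ for codomain $\cM$ produces, through the restricted representation $\pi\circ(\iota\otimes\mathrm{id}_{\cA^{op}})$ of $\cM_m\odot\cA^{op}$, an admissible triple for codomain $\cM_m$ with $\varphi_{\iota\circ A,\pi,\xi}=\varphi_{A,\pi\circ(\iota\otimes\mathrm{id}),\xi}$, and similarly for $B$.

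For the lower bound, I would fix $n$ and use Lemma \ref{weakconv} to note that $(E_m\circ S)(x)=E_m(S(x))\to S(x)$ strongly, hence weakly, for every $x\in\cN_n$, and likewise for $T$. Lower semi-continuity (theorem \ref{thm1-4}, item 1), applied with the fixed codomain $\cM$, then gives $D_f(S\restriction_{\cN_n}\|T\restriction_{\cN_n})\le\liminf_m D_f(\iota\circ E_m\circ S\restriction_{\cN_n}\|\iota\circ E_m\circ T\restriction_{\cN_n})$, and the codomain observation above turns the right-hand side into $\liminf_m D_f(E_m\circ S\restriction_{\cN_n}\|E_m\circ T\restriction_{\cN_n})$. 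Taking the supremum over $n$, using the elementary inequality $\sup_n\liminf_m\le\liminf_m\sup_n$, and invoking the martingale property $D_f(S\|T)=\sup_n D_f(S\restriction_{\cN_n}\|T\restriction_{\cN_n})$ yields the lower bound.

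For the upper bound I would show each approximant is at most $D_f(S\|T)$. Writing $E_m\circ S\restriction_{\cN_n}=(E_m\circ S)\circ\iota_n$ with $\iota_n:\cN_n\hookrightarrow\cN$ the inclusion channel, the data processing inequality (theorem \ref{thm1-4}, item 2) gives $D_f(E_m\circ S\restriction_{\cN_n}\|E_m\circ T\restriction_{\cN_n})\le D_f(E_m\circ S\|E_m\circ T)$. It then remains to prove monotonicity under \emph{post}-composition, $D_f(E_m\circ S\|E_m\circ T)\le D_f(S\|T)$, which is not literally item 2. For this, given an admissible triple $(\cA,\pi,\xi)$ for $D_f(E_m\circ S\|E_m\circ T)$, I would form the state $\Psi:=\varphi_\xi\circ\pi\circ(E_m\otimes\mathrm{id}_{\cA^{op}})$ on $\cM\odot\cA^{op}$ (binormal and normalized because $E_m$ and $\pi$ are normal and unital), pass to its GNS triple $(\rho_\Psi,\xi_\Psi)$, and observe that $\varphi_{S,\rho_\Psi,\xi_\Psi}=\Psi\circ(S\otimes\mathrm{id})=\varphi_{E_m\circ S,\pi,\xi}$ and similarly for $T$; hence $D_f(\varphi_{E_m\circ S,\pi,\xi}\|\varphi_{E_m\circ T,\pi,\xi})=D_f(\varphi_{S,\rho_\Psi,\xi_\Psi}\|\varphi_{T,\rho_\Psi,\xi_\Psi})\le D_f(S\|T)$, and the supremum over $(\cA,\pi,\xi)$ gives post-composition monotonicity. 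Therefore $\sup_n D_f(E_m\circ S\restriction_{\cN_n}\|E_m\circ T\restriction_{\cN_n})\le D_f(S\|T)$ for every $m$, which is the upper bound.

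The step I expect to be the main obstacle is this post-composition monotonicity, since only the pre-composition data processing inequality is recorded earlier; the GNS argument above is the required substitute, and its delicate point is checking that $\Psi$ is genuinely binormal so that its GNS representation qualifies as an admissible $\cM-\cA$ bimodule in definition \ref{defchdiv}. For the BS divergence the entire argument carries over verbatim, with the variational formula of proposition \ref{varprop} replaced by that of proposition \ref{BS} and with the BS versions of lower semi-continuity, data processing and the martingale property.
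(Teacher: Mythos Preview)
Your proof is correct and follows essentially the same route as the paper's: the upper bound via post-composition monotonicity $D_f(E_m\circ S\|E_m\circ T)\le D_f(S\|T)$, and the lower bound via lower semi-continuity together with the martingale property. Your GNS construction of the bimodule from $\Psi=\varphi_\xi\circ\pi\circ(E_m\otimes\mathrm{id}_{\cA^{op}})$ is precisely the content of proposition~\ref{lprop}, which the paper invokes for the same purpose (after a preliminary reduction to the standard $\cM_m$--$\cM_m$ bimodule via proposition~\ref{sbimodule} that your more direct argument does not need); the paper also organizes the two limits slightly differently, first establishing $D_f(S\|T)=\lim_m D_f(E_m\circ S\|E_m\circ T)$ and then applying the martingale property to introduce $\sup_n$, whereas you handle both indices simultaneously.
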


 \begin{proof}
 Consider the channels $E_m \circ S, E_m \circ T: \cN \to \cM_m$ and an $\cM_m - \cA$ bimodule $\sH_m$, representation $\pi_m$, and vector $\xi_m \in \sH_m$
as in the definition of $D_f(E_m \circ T\|E_m \circ S)$ such that the supremum \eqref{channeldiv} is achieved. 
By lemma \ref{sbimodule}, we may assume the bimodule in question to be the standard $\cM_m - \cM_m$ bimodule $L^2(\cM_m)$. 
From the channel $E_m: \cM \to \cM_m$ and the functional $\langle \xi_m, \ . \ \xi_m\rangle$, we then get an induced 
$\cM - \cM_m$ bimodule in view of proposition \ref{lprop}. It immediately follows that 
$D_f(S \| T) \ge D_f(E_m \circ S \| E_m \circ T)$ because in the variational definition \eqref{channeldiv}  of 
$D_f(S \| T)$, we take the supremum over the larger set of all bimodules whereas $D_f(E_m \circ S \| E_m \circ T)$ corresponds precisely 
to the induced bimodule $\cM - \cM_m$ just described. Thus, we see that $D_f(S \| T) \ge \limsup_m D_f(E_m \circ S \| E_m \circ T)$, whereas
$D_f(S \| T) \le \liminf_m D_f(E_m \circ S \| E_m \circ T)$ follows in view of the lower semi-continuity of the channel divergence because 
$E_m$ is pointwise strongly -- hence weakly -- convergent by lemma \ref{weakconv}. 
Therefore, we see that, simply $D_f(S \| T)=  \lim_m D_f(E_m \circ S \| E_m \circ T)$. The statement now follows from the martingale property.
The proof for the BS divergence is similar and based instead on proposition \ref{BS}.
\end{proof}
 
The following property, observed and proven first in \cite{Fan21} for matrix algebras, is crucial for this work.
 
 \begin{proposition}[Internal subadditivity]
 Let $S_2, T_2: \cN \to \cR, S_1, T_1: \cR \to \cM$ be channels between hyperfinite von Neumann algebras. Then we have
\begin{equation}
D_{BS}(S_2 \circ S_1 \| T_2 \circ T_1) \leq \sum\limits_{i=1,2} D_{BS}(S_i\|T_i).
\end{equation}
\end{proposition}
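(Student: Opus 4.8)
The plan is to reduce the statement to the chain rule for the BS divergence established in \cite{Fan21} for matrix algebras, exploiting that all three algebras are hyperfinite. The engine of the reduction is lemma \ref{sup1}, which, viewing the composite as a single channel $\cN\to\cM$, already expresses
\begin{equation*}
D_{BS}(S_2\circ S_1\|T_2\circ T_1)=\lim_m\sup_n D_{BS}\big(E_m\circ(S_2\circ S_1)\restriction_{\cN_n}\,\big\|\,E_m\circ(T_2\circ T_1)\restriction_{\cN_n}\big),
\end{equation*}
where $\{\cN_n\}$, $\{\cM_m\}$ are filtrations of $\cN$, $\cM$ and $E_m:\cM\to\cM_m$ are the generalized conditional expectations of lemma \ref{weakconv}. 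It therefore suffices to bound, for each fixed $m,n$, the inner divergence by $D_{BS}(S_1\|T_1)+D_{BS}(S_2\|T_2)$. The genuinely new ingredient is that one must \emph{also} discretize the intermediate algebra $\cR$, since the chain rule of \cite{Fan21} requires the source, target \emph{and} the intermediate algebra to be finite-dimensional.

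To this end I would fix in addition a filtration $\{\cR_k\}$ of $\cR$ with generalized conditional expectations $F_k:\cR\to\cR_k$, and introduce the finite-dimensional channels $B_{n,k}:\cN_n\to\cR_k$, the restriction to $\cN_n$ of the channel $F_k\circ S_2:\cN\to\cR_k$, and $A_{m,k}:\cR_k\to\cM_m$, the restriction to $\cR_k$ of the channel $E_m\circ S_1:\cR\to\cM_m$; write $B'_{n,k}, A'_{m,k}$ for the analogues built from $T_2,T_1$. Their composite channel $\cN_n\to\cM_m$, which I denote $A_{m,k}\circ B_{n,k}$, sends $x\in\cN_n$ to $E_m\big(S_1\big(F_k\big(S_2(x)\big)\big)\big)$, differing from the corresponding value $E_m\big(S_1\big(S_2(x)\big)\big)$ of the approximant $E_m\circ(S_2\circ S_1)\restriction_{\cN_n}$ only through the inserted $F_k$. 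Since $F_k\to\mathrm{id}_\cR$ pointwise strongly (lemma \ref{weakconv}), a standard normality argument for $S_1$ and $E_m$ shows that $A_{m,k}\circ B_{n,k}$ converges pointwise weakly to $E_m\circ(S_2\circ S_1)\restriction_{\cN_n}$ as $k\to\infty$, and likewise for the $T$-channels. Lower semicontinuity of the channel divergence (theorem \ref{thm1-4}, item 1) then gives
\begin{equation*}
D_{BS}\big(E_m\circ(S_2\circ S_1)\restriction_{\cN_n}\,\big\|\,E_m\circ(T_2\circ T_1)\restriction_{\cN_n}\big)\le\liminf_k D_{BS}\big(A_{m,k}\circ B_{n,k}\,\big\|\,A'_{m,k}\circ B'_{n,k}\big).
\end{equation*}

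Now all algebras in sight are finite-dimensional, so I would invoke the chain rule of \cite{Fan21} — which agrees with our definition by proposition \ref{sbimodule} and the remark following definition \ref{defchdiv}, after the routine passage from matrix algebras to finite-dimensional direct sums — to get, for every $k$,
\begin{equation*}
D_{BS}\big(A_{m,k}\circ B_{n,k}\,\big\|\,A'_{m,k}\circ B'_{n,k}\big)\le D_{BS}(A_{m,k}\|A'_{m,k})+D_{BS}(B_{n,k}\|B'_{n,k}).
\end{equation*}
Each factor is controlled uniformly in the indices: since $A_{m,k}$ is the precomposition of $E_m\circ S_1$ with the inclusion channel $\cR_k\hookrightarrow\cR$, the data processing inequality (theorem \ref{thm1-4}, item 2) together with the bound $D_{BS}(E_m\circ S_1\|E_m\circ T_1)\le D_{BS}(S_1\|T_1)$ from the proof of lemma \ref{sup1} yields $D_{BS}(A_{m,k}\|A'_{m,k})\le D_{BS}(S_1\|T_1)$, and symmetrically $D_{BS}(B_{n,k}\|B'_{n,k})\le D_{BS}(S_2\|T_2)$. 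Combining the last three displays bounds the inner divergence by $D_{BS}(S_1\|T_1)+D_{BS}(S_2\|T_2)$ for all $m,n$, and taking $\sup_n$ and then $\lim_m$ via lemma \ref{sup1} completes the proof.

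The main obstacle is exactly the intermediate algebra $\cR$: the chain rule is unavailable until $\cR$ is replaced by a finite-dimensional $\cR_k$, yet inserting $F_k$ genuinely alters the composite, so one must argue that no divergence is lost as $k\to\infty$. The clean way around this is to let lower semicontinuity absorb the insertion — bounding the true approximant by the $\liminf$ of the fully discretized composites — while controlling the right-hand side termwise by data processing, so that the three limits $k\to\infty$, $\sup_n$ and $\lim_m$ never need to be interchanged delicately. A secondary point to verify is that our bimodule definition of $D_{BS}$ between finite-dimensional algebras coincides with that of \cite{Fan21}, which is precisely what proposition \ref{sbimodule} and the remark after definition \ref{defchdiv} guarantee.
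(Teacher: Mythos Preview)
Your proposal is correct and follows essentially the same strategy as the paper: discretize all three algebras to reduce to the finite-dimensional chain rule of \cite{Fan21}, using lower semicontinuity to insert the generalized conditional expectation on the intermediate algebra and data processing/martingale to control the resulting pieces. The only difference is organizational: the paper first inserts the intermediate-algebra expectation $E_m:\cR\to\cR_m$ and the target-algebra expectation $F_k:\cM\to\cM_k$ via two applications of lower semicontinuity, then restricts to $\cN_n$ via the martingale property, and finally reassembles both summands using lemma \ref{sup1}; you instead invoke lemma \ref{sup1} up front to handle source and target in one stroke, insert the intermediate $F_k$ afterwards, and bound each summand directly by data processing plus the inequality $D_{BS}(E_m\circ S\|E_m\circ T)\le D_{BS}(S\|T)$ from the proof of lemma \ref{sup1}. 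Your ordering is arguably slightly tidier since the uniform bounds on the two summands make the limiting procedure trivial, whereas the paper must track how the nested $\liminf_m\liminf_k\sup_n$ interacts with the two terms.
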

\begin{proof}
Let $E_m: \cR \to \cR_m, F_k:\cM \to \cM_k$ be sequences of generalized conditional expectations as described above. 
\begin{equation}
\begin{split}
& D_{BS}(S_1 \circ S_2 \| T_1 \circ T_2) \\
\le& \liminf_m D_{BS}(S_1 \circ E_m \circ S_2 \| T_1 \circ E_m \circ T_2)\\
\le& \liminf_m \liminf_k D_{BS}(F_k \circ S_1 \circ E_m \circ S_2 \| F_k \circ T_1 \circ E_m \circ T_2)\\
=& \liminf_m \liminf_k \sup_n D_{BS}(F_k \circ S_1 \circ E_m \circ S_2 \restriction \cN_n \| F_k \circ T_1 \circ E_m \circ T_2 \restriction \cN_n )\\
\le& \liminf_m \liminf_k \sup_n \Big( 
D_{BS}(F_k \circ S_1  \restriction \cR_n \| F_k \circ T_1 \restriction \cR_n ) \ + \\
& \hspace{4cm}
D_{BS}(E_m \circ S_2 \restriction \cN_n \|E_m \circ T_2 \restriction \cN_n ) 
\Big)
\\
=& D_{BS}(S_1\| T_1) + D_{BS}(S_2\|T_2)
\end{split}
\end{equation}
In the first two lines we used lower semi-continuity, in the third line we used the martingale property, in the fourth line we used 
the result by \cite{Fan21} in the context of finite-dimensional von Neumann algebras, and in the last step we used 
the martingale property and lemma \ref{sup1}. 
\end{proof}

\begin{remark}
A noteworthy special case of the proposition arises when $\cM = {\mathbb C}$, i.e. $S_1, T_1$ are states. In this case 
the subadditivity corresponds to the ``chain rule'' of \cite{Fan21}.
\end{remark}

Consider channels $S_i, T_i: \cN_i \to \cM_i$ between hyperfinite von Neumann algebras represented on Hilbert spaces $\sH_i$, where $i=1,2$. We can form 
the weak closure of $\cN_1 \odot \cN_2$ in $\cB(\sH_1 \otimes \sH_2)$ and denote this (hyperfinite) von Neumann algebra by $\cN_1 \bar \otimes \cN_2$, 
and we proceed similarly for $\cM_i$.  Then it follows that $S_1 \otimes S_2: \cN_1 \odot \cN_2 \to \cM_1 \otimes \cM_2$ can be extended to 
a channel $S_1 \otimes S_2$ from $\cN_1 \bar \otimes \cN_2 \to \cM_1 \bar \otimes \cM_2$, and similarly for $T_1 \otimes T_2$. Then we have: 

 \begin{proposition}[External additivity] Let $S_i, T_i: \cN_i \to \cM_i$ be channels between the hyperfinite von Neumann algebras $\cN_i, \cM_i, i=1,2$. Then
 $D_{BS}(S_1\otimes S_2\| T_1\otimes T_2) = \sum\limits_{i=1,2} D_{BS}(S_i\|T_i)$.
 \end{proposition}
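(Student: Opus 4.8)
The plan is to prove the two inequalities separately: the lower bound $D_{BS}(S_1\otimes S_2\|T_1\otimes T_2)\ge D_{BS}(S_1\|T_1)+D_{BS}(S_2\|T_2)$ by a product ansatz in the variational problem, and the reverse inequality by factorizing the tensor-product channel into a composition and applying the internal subadditivity proposition just established. For the lower bound I would fix, for each $i=1,2$, a triple $(\cA_i,\pi_i,\xi_i)$ that is optimal up to tolerance $\varepsilon$ in the variational definition \eqref{channeldiv} of $D_{BS}(S_i\|T_i)$, and feed the product triple $\cA=\cA_1\bar\otimes\cA_2$, $\pi=\pi_1\otimes\pi_2$, $\xi=\xi_1\otimes\xi_2$ (an admissible binormal $(\cM_1\bar\otimes\cM_2)-\cA$ bimodule with normalized vector) into the definition of $D_{BS}(S_1\otimes S_2\|T_1\otimes T_2)$. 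Because the bimodule factorizes, the resulting functionals factorize as product states, $\varphi_{S_1\otimes S_2,\pi,\xi}=\varphi_{S_1,\pi_1,\xi_1}\otimes\varphi_{S_2,\pi_2,\xi_2}$ and similarly for $T$. It then remains to invoke additivity of the BS divergence on product states, $D_{BS}(\varphi_1\otimes\varphi_2\|\psi_1\otimes\psi_2)=D_{BS}(\varphi_1\|\psi_1)+D_{BS}(\varphi_2\|\psi_2)$, which follows from the factorization of the Connes cocycle over tensor products, giving $T^{\psi_1\otimes\psi_2}_{\varphi_1\otimes\varphi_2}=T^{\psi_1}_{\varphi_1}\otimes T^{\psi_2}_{\varphi_2}$, together with $\log(A\otimes B)=\log A\otimes 1+1\otimes\log B$ and the normalizations $\varphi_{S_i,\pi_i,\xi_i}(1)=\langle\xi_i,\xi_i\rangle=1$ (this is where the logarithm in $D_{BS}$ is essential; when $\varphi\sim\psi$ fails one passes through the regularized definition). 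Admissibility of the product triple then bounds the left-hand side below by $D_{BS}(S_1\|T_1)+D_{BS}(S_2\|T_2)-2\varepsilon$, and $\varepsilon\downarrow 0$ gives the lower bound.

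For the upper bound I would write the tensor-product channel as the composition
\[
S_1\otimes S_2=(S_1\otimes\text{id}_{\cM_2})\circ(\text{id}_{\cN_1}\otimes S_2),
\]
with intermediate algebra $\cN_1\bar\otimes\cM_2$, which is again hyperfinite, and likewise for $T$. Since all algebras in sight are hyperfinite, the internal subadditivity proposition applies and yields
\[
D_{BS}(S_1\otimes S_2\|T_1\otimes T_2)\le D_{BS}(S_1\otimes\text{id}_{\cM_2}\|T_1\otimes\text{id}_{\cM_2})+D_{BS}(\text{id}_{\cN_1}\otimes S_2\|\text{id}_{\cN_1}\otimes T_2).
\]
To close the argument I would establish that tensoring with an identity channel never increases the divergence, i.e. $D_{BS}(S\otimes\text{id}_\cP\|T\otimes\text{id}_\cP)\le D_{BS}(S\|T)$ for any hyperfinite $\cP$, and apply it to each summand.

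The key remaining step, and the main obstacle, is this tensoring-with-identity estimate, which is exactly the easy half of the dilation property (item 4 of Theorem~\ref{thm1-4}) with $B(\ell^2(\mathbb N))$ replaced by an arbitrary $\cP$. The point is that an $\cM\bar\otimes\cP-\cA$ bimodule is simultaneously an $\cM-\cP^{op}\bar\otimes\cA$ bimodule (the left action of $\cP$ becomes a right action of $\cP^{op}$), so every triple entering the supremum defining $D_{BS}(S\otimes\text{id}_\cP\|T\otimes\text{id}_\cP)$ sits among the triples entering the larger supremum defining $D_{BS}(S\|T)$ with ancilla $\cP^{op}\bar\otimes\cA$. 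The real work is the bookkeeping of this identification: one must check that under it the states $\varphi_{S\otimes\text{id}_\cP,\pi,\xi}$ and the admissible step functions valued in $\cN\bar\otimes\cP\odot\cA^{op}$ match verbatim their counterparts in the variational problem for $D_{BS}(S\|T)$, which they do precisely because $\text{id}_\cP$ acts trivially on the $\cP$-factor. Granting this, each summand is bounded by $D_{BS}(S_i\|T_i)$, giving the upper bound; combined with the lower bound this yields equality. Everything else—hyperfiniteness of the tensor products and exactness of the state-level additivity—is routine, with the logarithm in $D_{BS}$ being what makes the product ansatz in the lower bound tight.
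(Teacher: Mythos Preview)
Your approach is correct but genuinely different from the paper's. The paper does not split into two inequalities; it argues ``similar to the proof of internal subadditivity'', i.e.\ it sandwiches $D_{BS}(S_1\otimes S_2\|T_1\otimes T_2)$ via lemma~\ref{sup1} (filtrations, generalized conditional expectations, lower semi-continuity, martingale) to reduce to channels between finite-dimensional algebras, and then cites the finite-dimensional tensor additivity of \cite{Fan21}. Your route avoids this reduction entirely: the upper bound comes from internal subadditivity plus the bimodule reinterpretation $D_{BS}(S\otimes id_\cP\|T\otimes id_\cP)\le D_{BS}(S\|T)$ (the easy half of Theorem~\ref{thm1-4}(4), which indeed works for arbitrary $\cP$), and the lower bound from a product-triple ansatz together with state-level tensor additivity of $D_{BS}$. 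What your route buys is conceptual clarity and independence from the martingale/conditional-expectation machinery; what the paper's route buys is uniformity with the proof of internal subadditivity and no need to invoke Connes-cocycle factorization.

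One point to tighten in your lower bound. The additivity $D_{BS}(\varphi_1\otimes\varphi_2\|\psi_1\otimes\psi_2)=\sum_i D_{BS}(\varphi_i\|\psi_i)$ via $T^{\psi_1\otimes\psi_2}_{\varphi_1\otimes\varphi_2}=T^{\psi_1}_{\varphi_1}\otimes T^{\psi_2}_{\varphi_2}$ is a statement about normal states on von Neumann algebras, whereas the quantity $D_{BS}(\varphi_{S,\pi,\xi}\|\varphi_{T,\pi,\xi})$ entering definition~\ref{defchdiv} is the variational expression of proposition~\ref{BS} with step functions restricted to the \emph{algebraic} tensor product $\cN\odot\cA^{op}$ (and similarly for the product system). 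To make the Connes-cocycle argument bite you need that this restricted variational formula coincides with the von Neumann $D_{BS}$ on the binormal closure. This is true---$\cN\odot\cA^{op}$ is strongly dense in its closure and the integrand in proposition~\ref{BS} is strongly continuous on bounded sets, so the approximation argument from the proof of the martingale property applies---but you should say so explicitly, since otherwise the two sides of your identity live on different algebras. The same density issue is implicit in your upper bound when matching the step-function spaces $(\cN\bar\otimes\cP)\odot\cA^{op}$ and $\cN\odot(\cP\bar\otimes\cA^{op})$; again it is harmless once the density argument is in place.
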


\begin{proof}
Similar to the proof of internal subadditivity using again lemma \ref{sup1} and that $D_{BS}$ is additive under the tensor product in the finite dimensional case by 
results of  \cite{Fan21}.
\end{proof}
\subsection{Channel divergences for Kraus channels}

Let $\cM$ be a von Neumann algebra in standard form acting on the Hilbert space $\sH$ with cyclic and separating vector $\Omega$. 
We consider a class of channels $T,S: \cM \to \cM$ of so-called ``Kraus type'' investigated in the context of general von Neumann algebras 
by \cite{stormer2015analogue}. By definition, these are of the form 
\begin{equation}
\label{Krausch}
\begin{split}
&S(m) = \sum_{i=1}^N a_i^* m a_i, \quad  \sum_{i=1}^N a_i^*a_i^{} = 1,  \\
&T(m) = \sum_{i=1}^M b_i^* m b_i, \quad  \sum_{i=1}^M b_i^*b_i^{} = 1,  
\end{split}
\end{equation}
with $m,a_i,b_j \in \cM$ and $N,M \in {\mathbb N}$. Our aim is to give a formula for 
$D_{BS}(S \|T)$ for the channel divergence of two Kraus channels in terms of their ``Choi operators'' also introduced in this context by \cite{stormer2015analogue}. 
To this end, we define the Choi operators $C_S, C_T \in B(\sH)$ for such channels as, respectively
\begin{equation}
C_S = \sum_{i=1}^N a_i^{}|\Omega \rangle \langle \Omega | a_i^*, \quad 
C_T = \sum_{i=1}^M b_i^{}|\Omega \rangle \langle \Omega | b_i^*
\end{equation}
By construction $C_S \in B(\sH)$ is a non-negative operator of finite rank such that ${\rm Tr}_\sH C_S = \sum_{i=1}^N \| a_i \Omega \|^2 = 1$, and similarly for $C_T$.

Let $\cC \subset B(\sH)$ be the $*$-subalgebra of all operators of the form $\sum_{i=1}^{N} c_i |\Omega \rangle \langle \Omega | d_i$ for some $N \in {\mathbb N}, c_j, d_j \in \cM$. 
By \cite[Theorem 4]{stormer2015analogue}, the spectral projections of the operators $C_S, C_T$ are in $\cC$ and consequently this algebra is closed under the spectral calculus. Now suppose $\sigma$ is the Kubo-Ando connection associated with a non-negative operator monotone function $f:[0,\infty) \to [0,\infty)$ with $a,b<\infty$ in \eqref{frep}. It follows from \eqref{AsigmafB1} that the expressions under the limit in
\begin{equation}
C_S \sigma C_T = \lim_{\varepsilon \to 0} \Big(C_S + \varepsilon (C_S+C_T) \Big)\sigma\Big(C_T + \varepsilon (C_S+C_T) \Big) 
\end{equation}
are non-negative elements from $\cC$. Since the arguments of the mean $\sigma$ are decreasing and strongly convergent as $\varepsilon \to 0$, the limit not only exists 
by the properties of the Kubo-Ando connections, but is also in $C_S \sigma C_T \in \cC$, see the proof of \cite[Theorem 4]{stormer2015analogue}. Hence $C_S \sigma C_T$ is 
in particular a non-negative finite rank operator in $B(\sH)$. 

For the operator monotone function $f(t) = \log t$ on $(0,\infty)$, similar arguments show that the operators $[C_S + \varepsilon (C_S+C_T)]\sigma[C_T + \varepsilon (C_S+C_T)]$
are still in $\cC$ for $\varepsilon>0$. The limit $\varepsilon \to 0$ of this decreasing sequence exists but possibly only in the sense of an unbounded quadratic form.  
In fact, as long as $\varepsilon>0$, one can see e.g. from \eqref{AsigmafB1}, $\|C_S\|, \|C_T\| \le 1$ together with
$V^* f(A) V \le f(V^*AV)$ for contractions $V$, positive $A \in B(\sH)_+$ and operator monotone functions $f: (0,\infty) \to \mathbb R$ (see e.g. \cite{Pet2}), that
\begin{equation}
\begin{split}
& [C_S + \varepsilon (C_S+C_T)]\sigma[C_T + \varepsilon (C_S+C_T)] \\
\le & \max\{1, (1+\varepsilon)\| C_S \| + \varepsilon \|C_T\|\} \log[C_T + \varepsilon (C_S+C_T)] \\
\le & (1+2\varepsilon) \log (1+2\varepsilon) \ 
\xrightarrow{\varepsilon \to 0}  \ 0. 
\end{split}
\end{equation}
Thus, for $f(t)=\log t$, the corresponding Kubo-Ando mean $C_S \sigma C_T$ 
defines a negative possibly unbounded quadratic form given by a finite rank operator in $\cC$ on its domain. 
Assume that $C_S \sigma C_T$ is bounded, hence in $\cC$. Then the positive finite rank operator $-C_S \sigma C_T$ may be written as a linear combination of its eigenprojections as 
$\sum_{j=1}^K c_j |\Omega\rangle \langle \Omega |c_j^*$ for some $c_j \in \cM, K \in \mathbb{N}_0$, 
which gives, for $m' \in \cM'$ 
\begin{equation}
\label{OmegaST}
\langle \Omega_{S,T}, m'{}^*m'\Omega_{S,T}\rangle = \sum_{j=1}^K \langle \Omega, m'{}^*c_j c_j^* m'\Omega \rangle \le 
\left( \sum_{j=1}^K \| c_j\|^2 \right) \langle \Omega, m'{}^*m'\Omega\rangle.
\end{equation}

\begin{definition}
Let $\sigma$ be the Kubo-Ando mean for $f(t) = \log t$ and assume that $C_S \sigma C_T$ is bounded (hence in $\cC$). 
Then we define $\Omega_{S,T} \in L^2(\cM,\Omega)_+$ as the unique representer of the positive normal functional on $\cM'$ associated with the non-negative 
finite rank operator $-C_S \sigma C_T$ for the operator mean associated with $f(t) = \log t$,
\begin{equation}
\label{OmST}
\langle \Omega_{S,T}, m'\Omega_{S,T}\rangle = -{\rm Tr}_\sH \Big[ m'(C_S \sigma C_T) \Big], \quad m' \in \cM'.
\end{equation}
\end{definition}

\begin{remark}
By the Connes-Radon-Nikodym theorem and \eqref{OmegaST}, there must  be $m \in \cM$ such that $\Omega_{S,T} = m\Omega \in \cM\Omega$. We therefore have
$\Omega_{S,T} \in L^\infty(\cM,\Omega) \cong \cM\Omega$ by the well-known characterization of this space. 
\end{remark}

\begin{proposition}
\label{Krausprop}
For two Kraus channels $S,T$ on the finite dimensional or properly infinite hyperfinite von Neumann algebra $\cM$ standardly
represented on $L^2(\cM, \Omega)$ we have 
\begin{equation}
D_{BS}(S \| T) =  \Big\| \Omega_{S,T} \Big\|_{L^\infty(\cM, \Omega)}^2
\end{equation}
with the convention that the right side is $+\infty$ if $C_S \sigma C_T$ is unbounded.
\end{proposition}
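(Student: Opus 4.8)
The plan is to reduce $D_{BS}(S\|T)$ to a Belavkin--Staszewski divergence between two states on the type~I factor $B(\sH)$, evaluate it explicitly through the Choi operators, and read off the $L^\infty$-norm at the end. By proposition~\ref{sbimodule} (applicable since $\cM$ is finite-dimensional or properly infinite) I may restrict to the standard bimodule, $D_{BS}(S\|T)=\sup_{\xi\in L^2(\cM)_+}D_{BS}(\varphi_{S,\pi,\xi}\|\varphi_{T,\pi,\xi})$, with $\ell_\sH(n)=n$ and $r_\sH(m)=Jm^*J$ on $\sH=L^2(\cM,\Omega)$. Using $S(n)=\sum_i a_i^*na_i$ and that $a_i$ commutes with $Jm^*J$, a short computation gives
\[
\varphi_{S,\pi,\xi}(n\otimes m^{op})=\langle\xi,S(n)Jm^*J\xi\rangle={\rm Tr}_\sH\big[\ell_\sH(n)r_\sH(m)\,C^\xi_S\big],\qquad C^\xi_S:=\sum_{i=1}^N a_i|\xi\rangle\langle\xi|a_i^*,
\]
so that, viewed through $\pi$, the functional $\varphi_{S,\pi,\xi}$ is exactly the normal state on $B(\sH)$ with finite-rank, trace-one density operator $C^\xi_S$ (the channel has been absorbed into the Choi data), and likewise for $T$ with $C^\xi_T$. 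Since $C^{u'\xi}_S=u'C^\xi_S u'^*$ for a unitary $u'\in\cM'$, since two vectors inducing the same state on $\cM$ differ by a partial isometry in $\cM'$, and since the divergence is unitarily invariant, the number $F(\xi):=D_{BS}(\varphi_{S,\pi,\xi}\|\varphi_{T,\pi,\xi})$ depends only on the state $\varphi_\xi=\langle\xi,\,\cdot\,\xi\rangle\restriction\cM$.

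Next I would identify $F(\xi)$ with the genuine $B(\sH)$-divergence. The variational principle of proposition~\ref{BS} defining $F(\xi)$ runs over step functions valued in $\cM\odot\cM^{op}$, whose image under $\pi$ is weak-operator dense in $\cM\vee\cM'=B(\sH)$; by normality of the two states and Kaplansky density (cf.\ \cite{Kos86}) the supremum is unchanged if these step functions are allowed to range over all of $B(\sH)$. Hence $F(\xi)$ is the Belavkin--Staszewski divergence of the density operators $C^\xi_S,C^\xi_T$ on the type~I factor $B(\sH)$, and the matrix formula together with $\log(\rho^{1/2}\tau^{-1}\rho^{1/2})=-\log(\rho^{-1/2}\tau\rho^{-1/2})$ rewrites it as $F(\xi)=-{\rm Tr}_\sH[C^\xi_S\,\sigma\,C^\xi_T]$, where $\sigma$ is the operator mean for $f(t)=\log t$.

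I would then evaluate $F$ on a convenient representative. For a faithful normal state $\varphi\le\lambda\omega$ the spatial Radon--Nikodym theorem furnishes an invertible $h'\in\cM'_+$ with $\varphi(x)=\langle h'\Omega,x\,h'\Omega\rangle$, and by the invariance noted above I may compute with $\xi=h'\Omega$, for which $C^\xi_S=h'C_Sh'$ and $C^\xi_T=h'C_Th'$. The transformer equality for the invertible congruence $h'$ gives $C^\xi_S\,\sigma\,C^\xi_T=h'(C_S\sigma C_T)h'$, hence, writing $-C_S\sigma C_T=\sum_{j}c_j|\Omega\rangle\langle\Omega|c_j^*$,
\[
F(\xi)={\rm Tr}_\sH\big[h'(-C_S\sigma C_T)h'\big]=\sum_j\langle\Omega,h'^2 c_j^*c_j\,\Omega\rangle=\big\langle\xi,\big(\textstyle\sum_j c_j^*c_j\big)\xi\big\rangle=\varphi\big(\textstyle\sum_j c_j^*c_j\big).
\]
The defining relation $\langle\Omega_{S,T},m'\Omega_{S,T}\rangle=-{\rm Tr}_\sH[m'(C_S\sigma C_T)]$ combined with $\Omega_{S,T}=m\Omega$ and the separating property of $\Omega$ forces $\sum_j c_j^*c_j=m^*m$. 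As the natural cone is in bijection with the normal states and the faithful dominated states are dense among them, taking the supremum over this family yields $D_{BS}(S\|T)=\sup_\varphi\varphi(m^*m)=\|m^*m\|=\|m\|^2=\|\Omega_{S,T}\|_{L^\infty(\cM,\Omega)}^2$.

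For the unbounded case I would rerun this with the operator-monotone approximants $f_n(t)=\log(t+\tfrac1n)$: each mean $C_S\sigma_n C_T\in\cC$ is bounded, and $-C_S\sigma_n C_T$ increases to the unbounded form $-C_S\sigma C_T$, so monotone convergence in the variational formula forces $\sup_\xi F(\xi)=+\infty$, in agreement with the convention $\|\Omega_{S,T}\|_{L^\infty}=+\infty$. I expect the main obstacle to be the density step of the second paragraph --- showing rigorously that restricting the step functions to the non-closed algebra $\cM\odot\cM^{op}$ does not lower the supremum below the full $B(\sH)$-divergence --- together with making this estimate uniform in $n$ so as to control the limit in the unbounded case.
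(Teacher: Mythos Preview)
Your proposal is correct and follows essentially the same route as the paper: reduce to the standard bimodule via proposition~\ref{sbimodule}, identify $\varphi_{S,\pi,\xi}$ and $\varphi_{T,\pi,\xi}$ with the finite-rank densities $C^\xi_S,C^\xi_T$ on $B(\sH)$, pass through the density of $\pi(\cM\odot\cM^{op})$ in $B(\sH)$ to obtain $F(\xi)=-{\rm Tr}_\sH[C^\xi_S\sigma C^\xi_T]$, apply the transformer equality for invertible elements of $\cM'$, and take the supremum. The only cosmetic differences are that the paper works with general invertible $x'\in\cM'$ (justifying density of invertibles explicitly via hyperfiniteness) rather than positive $h'$ coming from dominated states, and handles the unbounded case by $\varepsilon$-regularizing the channels $S,T$ rather than the function $\log$; one small caution is that faithfulness of $\varphi$ with $\varphi\le\lambda\omega$ alone does not give a \emph{bounded} inverse of $h'$ --- you need the two-sided condition $\varphi\sim\omega$, but such states are still dense so the argument goes through unchanged.
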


\begin{proof}
First assume that $C_S \sigma C_T$ is bounded, so $\Omega_{S,T} \in L^\infty(\cM,\Omega) \cong \cM\Omega$ by the preceding remark.
By proposition \ref{sbimodule} we can restrict attention to the standard bimodule $\sH=L^2(\cM, \Omega)$ in the variational definition \eqref{channeldiv} of channel divergence. 
Furthermore, since $\cM'\Omega$ is strongly dense in $\sH$ as $\Omega$ is standard, it is sufficient to restrict to vectors $\xi \in \sH$ of the form $\xi = x'\Omega, x' \in \cM'$ in the variational 
definition. We get using the definitions and the notations $m' = Jm^{op *} J \in \cM'$, $x' = Jx^{op *} J$ and $X:=\pi(1 \otimes x^{op})$,
\begin{equation}
\label{41}
\begin{split}
    \varphi_{S,\pi,\xi}(m \otimes m^{op}) &= 
    \langle \xi, \pi (S(m) \otimes m^{op})\xi\rangle \\
    &= \sum_{i=1}^N \langle \xi , \ell_{\sH}(a_i^* m a_i^{}) r_{\sH}(m^{op})  \xi \rangle \\
    &= \sum_{i=1}^N \langle x'\Omega , a_i^* m a_i^{} m' x'  \Omega \rangle \\
    &= \sum_{i=1}^N \langle  x' a_i\Omega ,   m m' x' a_i \Omega \rangle \\
    &= {\rm Tr}_{\sH}\Big[XC_SX^*
    \pi(m \otimes m^{op})\Big]
\end{split}
\end{equation}
We also have a similar formula replacing $S$ by $T$ and $a_j$ by $b_j$. The variational principle for the maximal $BS$-divergence (proposition \ref{BS})
thereby gives us 
\begin{equation}
\begin{split}
&D_{BS}(\varphi_{S,\pi,\xi} \| \varphi_{T,\pi,\xi}) \\
=& \sup\sup \left( \log n
- \int_{1/n}^\infty
\{
\varphi_{S,\pi,\xi}(v_t^{}v_t^*) + t^{-1} 
\varphi_{T,\pi,\xi}(w_t^{}w_t^*)
\} \frac{dt}{t} \right) \\
=& \sup\sup \left( \log n
- \int_{1/n}^\infty
\{
{\rm Tr}_\sH(V_t^*XC_SX^*V_t^{}) + t^{-1} 
{\rm Tr}_\sH(W_t^*XC_TX^*W_t^{})
\} \frac{dt}{t} \right)
\end{split}
\end{equation}
by \eqref{41}, where the first supremum is over $n \in \mathbb N$, 
the second supremum is over the finite range step functions $(1/n,\infty) \xrightarrow{v} \cM\odot\cM^{op}$ such $v_t = 0$ for sufficiently large $t$, and 
where we use the abbreviations $V_t = \pi(v_t), w_t = 1-v_t, W_t = \pi(v_t)$. Since the strong closure of $\pi(\cM \odot \cM^{op})$ is strongly dense in $B(\sH)$, the 
step functions $V_t$ can be used to approximate in the strong topology any given finite range step function  
$(0,\infty) \to B(\sH)$ which is zero for sufficiently large $t$ and $1$ for sufficiently small $t$. Let $P$ be any orthogonal projection onto 
a finite dimensional subspace of $\sH$ containing the (finite dimensional) ranges of $XC_SX^*$ and $XC_TX^*$. Then it follows that we may further replace 
$V_t$ by $PV_tP$ and $W_t$ by $PW_tP$ and the variational formula \cite[Remark 9.2]{hiai2022pusz} (or our proposition \ref{BS})
 therefore tells us that 
\begin{equation}\label{previous}
\begin{split}
D_{BS}(\varphi_{S,\pi,\xi} \| \varphi_{T,\pi,\xi})
=& - {\rm Tr}_\sH \Big[ (XC_SX^*) \sigma (XC_TX^*) P \Big]\\
=& - {\rm Tr}_\sH \Big[ (XC_SX^*) \sigma (XC_TX^*) \Big]\\
=& - {\rm Tr}_\sH \Big[ X(C_S \sigma C_T)X^* \Big] \\
=& - {\rm Tr}_\sH \Big[ \pi(1 \otimes x^{op}) (C_S \sigma C_T)\pi(1 \otimes x^{op})^* \Big] \\
=& - {\rm Tr}_\sH \Big[ x'{}^* x' (C_S \sigma C_T) \Big] \\
=& \Big\| x' \Omega_{S,T} \Big\|^2,
\end{split}
\end{equation}
where we used that $P$ was arbitrary so long as its range ranges of $XC_SX^*$ and $XC_TX^*$ to go the third line, and
where we used the transformer equality (see e.g. \cite[Lemma D.3]{Hia21}) to go to the fourth line. The last step is admissible if we assume that $x'$, hence $X$, is invertible, which we assume momentarily is the case. 
Since we know that $\Omega_{S,T} \in L^\infty(\cM,\Omega)$, there is $m \in \cM$ such that $\Omega_{S,T} = m\Omega$, therefore 
\begin{equation}
\label{DBS}
D_{BS}(\varphi_{S,\pi,\xi} \| \varphi_{T,\pi,\xi}) = \Big\| x' m \Omega \Big\|^2 =\Big\| m\xi \Big\|^2.
\end{equation}
If we could show that $x'\Omega$ with $x' \in \cM'$ ranging over the invertible elements is dense in $\sH$, 
then this formula would hold on for all $\xi \in \sH$. This follows, in fact, from the hyperfinite 
property because invertible elements are norm dense in a finite-dimensional von Neumann algebra, and $\cM$ is the strong closure of hyperfinite algebras. Thus, we 
get a strongly convergent sequence $x_n \to x$ with $x_n$ invertible for any $x \in \cM$. Applying this to $x:= Jx'J$ and choosing $x_n' = Jx_nJ$ gives the statement. 
Taking the supremum over our strongly dense set of vectors $\xi$ with unit norm now gives the statement of the proposition because $\|m\| = \| \Omega_{S,T}\|_{L^\infty(\cM,\Omega)}$.

Let us now assume  that $C_S \sigma C_T$ is not bounded. The completely positive maps $T_\varepsilon:= T+\varepsilon(S+T), S_\varepsilon := S+\varepsilon(S+T)$ do not 
suffer from this problem for $\varepsilon >0$ and are (non-normalized) increasing (as $\varepsilon \to 0$) sequences 
of Kraus channels. By monotonicity of the operator mean $\sigma$,  $C_{S_\varepsilon} \sigma C_{T_{\varepsilon}}$ is an increasing sequence of 
self-adjoint operators in $\cC$ whose range remains in a fixed finite dimensional subspace of $\cH$. Hence it is convergent to the unbounded 
operator $C_S \sigma C_T$ in norm from which we can see that there must be $x' \in \cM$ such that $- {\rm Tr}_\sH [ x'{}^* x' (C_{S_\varepsilon} \sigma C_{T_{\varepsilon}}) ]$
diverges to $+\infty$, hence so does $D_{BS}(S_\varepsilon \| T_\varepsilon)$ by \eqref{previous}. 
However, since $T_\varepsilon, S_\varepsilon$ are decreasing sequences of channels, by monotonicity $D_{BS}(S \| T) \ge D_{BS}(S_\varepsilon \| T_\varepsilon) \to \infty$. 
\end{proof}

\subsection{Examples}

As a simple special case of Kraus channels we consider $S,T$ in \eqref{Krausch} of the form
\begin{equation}
\label{Cuntz}
\sum_{i=1}^N a_i^{} a_i^* = 1, \quad 
a_i^* a_j = \delta_{i,j}1, \quad \sum_{i=1}^M b_i^{} b_i^* = 1, \quad  b_i^* b_j = \delta_{i,j}1
\end{equation}
where $\{a_j\}$ respectively $\{b_j\}$ each generate algebras isomorphic to the Cuntz algebras on $N$ respectively $M$ isometries. 

\begin{corollary}
\label{cor:1}
Let $\cM$ be a finite dimensional or properly infinite von Neumann algebra and let $S,T$ be 
Kraus channels such that \eqref{Cuntz} holds for some $N,M \in \mathbb N$. 
Then either $N=M$ and $T=S$, or we have $D_{BS}(S\|T) = \infty$, or $N<M$ and $D_{BS}(S\|T)=0$.
\end{corollary}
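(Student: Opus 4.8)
The plan is to feed the Cuntz relations into Proposition \ref{Krausprop} and to reduce the whole statement to a computation of the logarithmic Kubo--Ando mean $C_S\sigma C_T$ of the two Choi operators.

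First I would note that the relations $a_i^*a_j=\delta_{ij}1$ force the Choi operators to be finite-rank orthogonal projections. Indeed $\langle a_i\Omega,a_j\Omega\rangle=\omega(a_i^*a_j)=\delta_{ij}$, so $\{a_i\Omega\}_{i=1}^N$ is an orthonormal system and $C_S=\sum_i a_i|\Omega\rangle\langle\Omega|a_i^*=\sum_i |a_i\Omega\rangle\langle a_i\Omega|$ is the orthogonal projection $P_S$ onto $V_S:=\mathrm{span}\{a_i\Omega\}$, of rank $N$ (one checks $C_S^2=C_S$ directly from orthonormality). Likewise $C_T=P_T$ is the projection onto $V_T$, of rank $M$. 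Thus, via Proposition \ref{Krausprop}, computing $D_{BS}(S\|T)$ is reduced to understanding $C_S\sigma C_T$ for two finite-rank projections, where $\sigma$ is the mean of $f(t)=\log t$.

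The heart of the argument is the following dichotomy for the logarithmic mean of projections: $C_S\sigma C_T$ is bounded, in fact equal to $0$, precisely when $C_S\le C_T$ (equivalently $V_S\subseteq V_T$), and is unbounded otherwise. To establish it I would use that $-C_S\sigma C_T\ge0$ (as already observed when defining $\Omega_{S,T}$) and that its trace is the matrix BS divergence $\mathrm{Tr}\,[-C_S\sigma C_T]=\mathrm{Tr}\,[C_S\log(C_S^{1/2}C_T^{-1}C_S^{1/2})]=:D_{BS}(C_S\|C_T)$. For positive operators this is finite iff $\supp C_S\subseteq\supp C_T$, i.e. iff $V_S\subseteq V_T$. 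When $V_S\subseteq V_T$ one has $C_T=1$ on $\mathrm{range}\,C_S$, so $C_S^{1/2}C_T^{-1}C_S^{1/2}=C_S$ and $D_{BS}(C_S\|C_T)=\mathrm{Tr}(C_S\log C_S)=0$; since a non-negative operator with vanishing trace is zero, $C_S\sigma C_T=0$, hence $\Omega_{S,T}=0$ and $D_{BS}(S\|T)=\|\Omega_{S,T}\|^2=0$. (Equivalently, in the reducing decomposition $\sH=V_S\oplus(V_T\ominus V_S)\oplus V_T^\perp$ the connection is blockwise $1\sigma 1=0$, $0\sigma 1=0$, $0\sigma 0=0$.) When $V_S\not\subseteq V_T$ the trace above is $+\infty$; as $-C_S\sigma C_T$ is supported on a fixed finite-dimensional subspace, an infinite trace forces it to be unbounded, and Proposition \ref{Krausprop} then gives $D_{BS}(S\|T)=+\infty$ by its stated convention.

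Finally I would sort this into the three alternatives by rank bookkeeping. If $C_S\not\le C_T$ we land in the case $D_{BS}(S\|T)=\infty$; note this automatically covers every instance with $N>M$, since a rank-$N$ projection cannot be dominated by a rank-$M$ one for $N>M$. If $C_S\le C_T$ then $N=\mathrm{rank}\,C_S\le\mathrm{rank}\,C_T=M$ and $D_{BS}(S\|T)=0$; for $N<M$ this is the third alternative, while for $N=M$ the domination $C_S\le C_T$ of equal-rank projections forces $C_S=C_T$, and since the Choi operator determines the channel (cf. Proposition \ref{lprop} and \cite{stormer2015analogue}), $S=T$, the first alternative. These exhaust all possibilities. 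The main obstacle is establishing the projection dichotomy cleanly, in particular ruling out a finite nonzero value of $D_{BS}(C_S\|C_T)$: I expect to handle the unordered case uniformly through the support criterion for the BS divergence rather than by a case analysis of the Halmos two-projection normal form (whose ``generic-angle'' blocks would otherwise need a separate $2\times2$ regularization argument), and to upgrade finiteness to exact vanishing via positivity together with the vanishing of the trace.
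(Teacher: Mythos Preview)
Your argument is correct and reaches the same conclusion, but the route differs from the paper's. The paper computes $P\sigma Q$ directly for two projections by invoking the Kubo--Ando identity $(tP):Q=\tfrac{t}{t+1}(P\wedge Q)$ and integrating the approximants $f_n(t)=\log(t+1/n)$, obtaining the explicit formula $P\sigma_n Q=[(P\wedge Q)-P]\log n+(P\wedge Q)\log(1+\tfrac{1}{n})$, from which the dichotomy (bounded iff $P\le Q$, in which case the limit is $0$) is read off at once. You instead bypass this computation by identifying $\mathrm{Tr}[-C_S\sigma C_T]$ with the matrix BS divergence $D_{BS}(C_S\|C_T)$, invoking the support criterion for finiteness, and using positivity plus vanishing trace to force $C_S\sigma C_T=0$ in the ordered case; unboundedness in the unordered case then follows because the approximants live in a fixed finite-dimensional subspace. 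This is a clean and conceptually economical alternative; the paper's approach is more self-contained (needing only the parallel-sum formula for projections) and yields the exact form of $P\sigma Q$, not just its trace.

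One small point: your appeal to ``the Choi operator determines the channel'' is correct, but Proposition \ref{lprop} is not quite the right pointer (it gives uniqueness of the bimodule from $(T,\varphi)$, not the converse). The paper argues this step more directly: from $C_S=C_T$ one has $a_i\Omega=\sum_j R_{ij}b_j\Omega$ for some matrix $R$, the separating property of $\Omega$ gives $a_i=\sum_j R_{ij}b_j$, the Cuntz relations force $R$ to be unitary, and then $S=T$ follows immediately. You may wish to replace your citation with this short explicit argument.
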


\begin{remark}
In particular, note that if $T(m) = u m u^*$ with $u \in \cM$ unitary 
and $S=id$, we have $D_{BS}(id \| T) = D_{BS}(T \| id) = \infty$
unless $u=\lambda 1$. 
\end{remark}

\begin{proof}
It follows from the Cuntz algebra relations that the corresponding Choi operators are $C_T = Q$ and $C_S = P$ are orthogonal projections of rank $N$ respectively $M$ on 
$\sH$. Denote by $P \wedge Q$ the orthogonal projection onto the intersection of the ranges of $P$ and $Q$. Consider the operator monotone functions 
$f_n(t):=\log(t+\frac{1}{n}), t\geq 0$, which have integral representations $f_n(t)=-\log n + \int_{1/n}^{\infty}\frac{s}{t+s}\frac{dt}{t}$, 
and let $\sigma_n$ be the corresponding operator means. By the proof of \cite[Theorem 3.7]{KA80}, 
we have $(tP):Q = t(t+1)^{-1} (P \wedge Q)$. By the integral representation \eqref{AsigmafB} for this mean, we therefore get 
\begin{equation}
\begin{split}
P \sigma_n Q 
=& P \log \tfrac{1}{n} + \int_{1/n}^\infty [(tP):Q] \frac{dt}{t^2}\\
=& P \log \tfrac{1}{n} + (P \wedge Q) \int_{1/n}^\infty \frac{dt}{t(t+1)}\\
=& [(P \wedge Q)-P]\log n - (P \wedge Q)\Bigg[ \log n - (\log t - \log(1+t))_{1/n}^\infty \Bigg]\\
=&  [(P \wedge Q)-P]\log n + (P \wedge Q) \log (1+\tfrac{1}{n}) 
\end{split}
\end{equation}
As $n \to \infty$, the operator means $P\sigma_n Q$ are decreasing (hence convergent) to the potentially unbounded quadratic form $P\sigma Q = [(P \wedge Q)-P]\infty$, 
where $\sigma$ corresponds to the operator monotone function $f(t) = \log t$. 
Therefore, if $P \sigma Q$ is to be bounded, we must have $(P \wedge Q)-P = 0$, so $P$ must be a subprojection of $Q$, or otherwise $D_{BS}(S\| T) = \infty$ by  proposition \ref{Krausprop}. 
If $P=Q$, then $N=M$, and there must be $R_{ij} \in \mathbb C$
such that $a_i \Omega = \sum_{j=1}^N R_{ij} b_j \Omega$, and since $\Omega$ is separating, we must 
have $a_i = \sum_{j=1}^N R_{ij} b_j$. The Cuntz algebra relations then show that $(R_{ij})$ is a unitary matrix and 
then clearly $S=T$. If $P<Q$, then clearly $N<M$ and it follows that $P \sigma_n Q$ is decreasing (hence convergent) to $0$ and $D_{BS}(S\|T) = 0$.
\end{proof}

Another very simple but conceptually relevant example is:

\begin{proposition}
Let $\cM$ be a finite dimensional or properly infinite, hyperfinite von Neumann algebra 
and let $e_j \in \cM$ be $N$ mutually orthogonal projections such that $\sum_i e_i=1, 0<e_i<1$. We consider the Kraus channel 
\begin{equation}
M(m) = e_1me_1 + \dots + e_N m e_N
\end{equation}
corresponding to an $N$-ary measurement. Then 
\begin{equation}
D_{BS}(id\| M) = \log N
\end{equation}
\end{proposition}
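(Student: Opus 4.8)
The plan is to observe that both channels are of Kraus type \eqref{Krausch} and to apply Proposition \ref{Krausprop} directly. Indeed $id(m)=1\cdot m\cdot 1$ has the single Kraus operator $a_1=1$, while $M(m)=\sum_i e_i m e_i$ has Kraus operators $b_i=e_i$ (self-adjoint, with $\sum_i e_i^2=\sum_i e_i=1$), so the associated Choi operators are
\begin{equation}
C_{id}=|\Omega\rangle\langle\Omega|,\qquad C_M=\sum_{i=1}^N e_i|\Omega\rangle\langle\Omega|e_i=\sum_{i=1}^N|e_i\Omega\rangle\langle e_i\Omega|.
\end{equation}
Setting $p_i:=\langle\Omega,e_i\Omega\rangle=\|e_i\Omega\|^2$, which is strictly positive because $\Omega$ is separating and $0<e_i$, the vectors $e_i\Omega$ are pairwise orthogonal, so both Choi operators are supported on the $N$-dimensional subspace $V:=\mathrm{span}\{e_i\Omega\}$ and may be analysed there as $N\times N$ matrices. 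In the orthonormal basis $u_i:=p_i^{-1/2}e_i\Omega$ of $V$, the operator $C_M$ is $\mathrm{diag}(p_1,\dots,p_N)$, while $C_{id}$ is the rank-one projection onto $\Omega=\sum_i\sqrt{p_i}\,u_i$.

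Next I would compute the Kubo--Ando mean $C_{id}\sigma C_M$ for $f(t)=\log t$. Exactly as in the proof of Corollary \ref{cor:1} I use the approximants $f_n(t)=\log(t+\tfrac1n)$, for which $C_{id}\sigma_n C_M=-\log n\,C_{id}+\int_{1/n}^\infty[(sC_{id}):C_M]\,\tfrac{ds}{s^2}$. Writing $P=C_{id}=|\Omega\rangle\langle\Omega|$ and $Q=C_M$ (invertible on $V$), the identity $A:B=A-A(A+B)^{-1}A$ for the parallel sum together with the Sherman--Morrison formula for the rank-one perturbation $Q+sP$ gives $(sP):Q=\tfrac{s}{1+sN}P$. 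The crucial point is that the integer $N$ enters through the single computation
\begin{equation}
\langle\Omega,Q^{-1}\Omega\rangle=\sum_{i=1}^N\frac{p_i}{p_i}=N.
\end{equation}
The elementary integral $\int_{1/n}^\infty\frac{ds}{s(1+sN)}=\log\frac{n+N}{N}$ then yields $C_{id}\sigma_n C_M=\left(\log\tfrac{n+N}{nN}\right)P$, so letting $n\to\infty$ produces the \emph{bounded} operator
\begin{equation}
C_{id}\sigma C_M=-(\log N)\,|\Omega\rangle\langle\Omega|.
\end{equation}

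Since $C_{id}\sigma C_M$ is bounded, Proposition \ref{Krausprop} applies, and it remains to read off $\Omega_{id,M}$ from its defining relation \eqref{OmST}: for every $m'\in\cM'$,
\begin{equation}
\langle\Omega_{id,M},m'\Omega_{id,M}\rangle=-{\rm Tr}_\sH[m'(C_{id}\sigma C_M)]=(\log N)\langle\Omega,m'\Omega\rangle.
\end{equation}
By uniqueness of the representer in the natural cone this forces $\Omega_{id,M}=\sqrt{\log N}\,\Omega=(\sqrt{\log N}\,1)\Omega\in\cM\Omega$, whence
\begin{equation}
D_{BS}(id\|M)=\big\|\Omega_{id,M}\big\|_{L^\infty(\cM,\Omega)}^2=\big\|\sqrt{\log N}\,1\big\|^2=\log N.
\end{equation}
I expect the only delicate point to be the presence of the singular rank-one operator $C_{id}$ inside the operator mean; this is precisely what the regularisation through $f_n$ and the Sherman--Morrison identity dispose of cleanly, the parallel sum $(sC_{id}):C_M$ remaining proportional to $|\Omega\rangle\langle\Omega|$ at every stage. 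The pleasant feature is that the answer $\log N$ emerges entirely from $\langle\Omega,C_M^{-1}\Omega\rangle=N$, independently of the state $\Omega$ and of the individual outcome weights $p_i$.
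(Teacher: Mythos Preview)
Your proof is correct and follows essentially the same strategy as the paper: identify the Choi operators, compute the Kubo--Ando mean $C_{id}\sigma C_M$ through the parallel-sum integral representation, and invoke Proposition~\ref{Krausprop}. The only difference is in how you evaluate $(sC_{id}):C_M$: the paper works directly from the variational definition \eqref{C:D}, writing down the Euler--Lagrange condition and solving for the minimizer with an explicit ansatz, whereas you use the closed form $A:B=A-A(A+B)^{-1}A$ together with Sherman--Morrison on the finite-dimensional space $V$ to extract $\langle\Omega,C_M^{-1}\Omega\rangle=N$; both routes produce $(sC_{id}):C_M=\tfrac{s}{1+sN}|\Omega\rangle\langle\Omega|$ and from there the computations coincide.
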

\begin{proof}
The Choi operator associated with $M$ is $C_M = \sum e_i|\Omega\rangle \langle \Omega| e_i$ and that 
for the identity channel $id$ is $C_{id} = |\Omega\rangle \langle \Omega|$. We begin by working out the parallel sum 
$\langle \xi, [(tC_{id}):C_M] \xi \rangle$ using the variational definition \eqref{C:D}. A minimizer $\zeta_0$ in that definition has to satisfy 
\begin{equation}
t\langle \Omega, \zeta_0\rangle \Omega - \sum_i \langle e_i\Omega, \xi-\zeta_0\rangle e_i\Omega  = 0.
\end{equation}
The vectors $e_i\Omega$ are non-zero and linearly independent because $\Omega$ is separating and because the $e_i$'s are orthogonal and non-trivial. We therefore 
see that 
\begin{equation}
t\langle \Omega, \zeta_0\rangle = \langle e_i\Omega, \xi-\zeta_0\rangle 
\end{equation}
for all $i=1, \dots, N$ and any solution $\zeta_0$ is a minimizer for the variational problem \eqref{C:D}.
To find a solution we consider the ansatz
$\zeta_0 = \sum_i a_i \| e_i \Omega\|^{-2 }e_i\Omega$, leading to a linear system for the unknown complex coefficients $a_i$. 
A solution is 
\begin{equation}
a_i = \langle e_i \Omega, \xi\rangle - \frac{t}{1+Nt} \langle \Omega, \xi\rangle.
\end{equation}
Substituting the corresponding $\zeta_0$ into the variational definition \eqref{C:D} yields
\begin{equation}
\langle \xi, [(tC_{id}):C_M] \xi \rangle  = \frac{t}{Nt+1} | \langle \xi, \Omega \rangle |^2, 
\end{equation}
noting that the dependence upon $e_i$ has cancelled. In other words $[(tC_{id}):C_M] = \tfrac{t}{Nt+1}|\Omega \rangle \langle \Omega|$. 
Next we use the integral representation \eqref{AsigmafB} for the Kubo-Ando means $\sigma_n$ associated with the functions $f_n(t) = \log(\tfrac{1}{n}+t)$. 
The corresponding measures $d\mu_n$ are read off from the  integral representations $f_n(t)=-\log n + \int_{1/n}^{\infty}\frac{s}{t+s}\frac{dt}{t}$.
This gives for the Kubo-Ando mean $C_{id} \sigma C_M$ associated with $f(t) = \log t$ as required for the BS divergence,
\begin{equation}
\begin{split}
-C_{id} \sigma C_M =& \lim_n -C_{id} \sigma_n C_M \\
=&\lim_n \left( (\log n)C_{id} - \int_{(1/n,\infty)} [(tC_{id}):C_M] \frac{dt}{t^2} \right)\\
=& \lim_n \lim_K \left( (\log n) |\Omega \rangle \langle \Omega| - \int_{(1/n,K)}\left( \frac{t}{Nt+1}  |\Omega \rangle \langle \Omega| \right) \frac{dt}{t^2} \right) \\
=&\lim_n \lim_K \left( \log n - \int_{(1/n,K)} \frac{dt}{t(Nt+1)}   \right)  |\Omega \rangle \langle \Omega| \\
=& \lim_n \lim_K \left( \log n - \log (t) \bigg|_{1/n}^K + \log(Nt+1) \bigg|_{1/n}^K \right)  |\Omega \rangle \langle \Omega| \\
=&(\log N)|\Omega \rangle \langle \Omega|.
\end{split}
\end{equation}
Next we use the definition \eqref{OmST} for $S=id, T=M$, giving 
\begin{equation}
\Omega_{id,M} = (\log N)^{1/2} \Omega. 
\end{equation}
By proposition \ref{Krausprop}, we therefore have 
$
D_{BS}(id\|M) = \log N
$ 
as we wanted to show.
\end{proof}

Our final example concerns finite index inclusions of von Neumann factors. 

\begin{proposition}
\label{prop:E}
Let $\cN \subset \cM$ be a finite index inclusion of von Neumann factors with associated minimal conditional expectation 
$E: \cM \to \cN$. Then $D_{BS}(id\|E) = \log [\cM:\cN]$.
\end{proposition}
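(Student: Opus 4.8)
The plan is to read off $D_{BS}(id\|E)$ from proposition \ref{Krausprop}, proceeding in close analogy with the $N$-ary measurement computation above but with the index $\lambda := [\cM:\cN]$ taking over the role of $N$. Since $\cN\subset\cM$ is a finite-index inclusion of factors, the minimal conditional expectation possesses a finite Pimsner--Popa basis $\{w_i\}_{i=1}^{k}\subset\cM$ normalised so that $\sum_i w_i w_i^* = \lambda$, and from such a basis one obtains a Kraus presentation $E(m)=\sum_i a_i^* m a_i$ with $a_i\in\cM$ and $\sum_i a_i^* a_i = E(1) = 1$; thus (composed with the inclusion) $E$ is a Kraus channel of the form \eqref{Krausch}, to which proposition \ref{Krausprop} applies. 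I would take the standard vector $\Omega$ to be the GNS vector of a faithful normal state $\omega$ with $\omega\circ E = \omega$. The relevant Choi operators are then $C_{id} = |\Omega \rangle \langle \Omega|$ and $C_E = \sum_i a_i |\Omega \rangle \langle \Omega| a_i^*$.

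The crux is the evaluation of the Kubo--Ando mean $C_{id}\,\sigma\,C_E$ for $f(t)=\log t$. Because $C_{id}$ is the rank-one projection onto $\Omega$, the parallel sums entering \eqref{AsigmafB} collapse to multiples of that projection: from the variational description \eqref{C:D}, or equivalently from $A:B = A - A(A+B)^{-1}A$ together with the Sherman--Morrison identity, one gets
\begin{equation*}
(tC_{id}):C_E = \frac{t}{1+t\beta}\,|\Omega \rangle \langle \Omega|, \qquad \beta := \langle\Omega,\, C_E^{-1}\Omega\rangle,
\end{equation*}
with the inverse taken on the range of $C_E$ (and the usual $\varepsilon$-regularisation understood). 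The decisive identity is $\beta = \lambda$, equivalent to $C_E\Omega = \lambda^{-1}\Omega$: the standard vector is an eigenvector of the Choi operator with eigenvalue the reciprocal of the index. I would derive this from the defining relations of the \emph{minimal} conditional expectation, namely the Pimsner--Popa reconstruction $m = \sum_i w_i E(w_i^* m)$, the normalisation $\sum_i w_i w_i^* = \lambda$, and the invariance $\omega\circ E = \omega$, which together reduce it to the algebraic relation $\sum_i \omega(a_i^*)\,a_i = \lambda^{-1}1$. As a check, for $\cN = \mathbb{C} \subset M_n(\mathbb{C}) = \cM$ one finds $C_E = n^{-2}\,I$ and hence $\beta = n^2 = \lambda = d^2$.

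Granting $(tC_{id}):C_E = \frac{t}{\lambda t + 1}|\Omega \rangle \langle \Omega|$, the remaining steps are verbatim the measurement computation. Inserting this expression into the integral representation \eqref{AsigmafB} for the approximants $f_n(t) = \log(t+\tfrac1n)$ and passing to the limit produces $-C_{id}\,\sigma\,C_E = (\log\lambda)\,|\Omega \rangle \langle \Omega|$, so that by \eqref{OmST} the representing vector is $\Omega_{S,T} = \Omega_{id,E} = (\log\lambda)^{1/2}\,\Omega \in L^\infty(\cM,\Omega)$. Proposition \ref{Krausprop} then gives $D_{BS}(id\|E) = \big\|\Omega_{id,E}\big\|_{L^\infty(\cM,\Omega)}^2 = \log\lambda = \log[\cM:\cN]$.

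The one genuinely substantial step is the identity $C_E\Omega = \lambda^{-1}\Omega$: realising $E$ as a Kraus channel is standard, and the parallel-sum and integration steps are routine once the measurement example is in hand, but pinning the single scalar $\beta = \langle\Omega, C_E^{-1}\Omega\rangle$ to the Jones--Kosaki index is precisely where minimality of the conditional expectation (via the Pimsner--Popa relations and the index--statistics theorem \cite{Lon89}) has to be used. Proposition \ref{Krausprop} is available here because the factors of interest are finite dimensional or properly infinite; for type II$_1$ factors one would first dilate by $B(\ell^2(\mathbb{N}))$ using the dilation property of theorem \ref{thm1-4}, the index entering through the conjugate-morphism data with $\lambda = d_\rho^2$.
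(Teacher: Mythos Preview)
Your approach is genuinely different from the paper's, and it has real gaps that you yourself flag but do not close.

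The paper does \emph{not} pass through proposition~\ref{Krausprop} at all. Instead it proves the two inequalities separately and directly from the variational formula of proposition~\ref{BS}. For the lower bound $D_{BS}(id\|E)\ge \log d^2$ it works in the standard bimodule, takes the unit vector $\xi=e\Omega/\|e\Omega\|$ built from the Jones projection $e$ (so that $E(e)=d^{-2}1$), and inserts the explicit trial family
\[
x_t=\Big(1-\tfrac{t}{t+d^{-2}}\,e\Big)\otimes 1
\]
into the variational expression; a short computation gives the bound in the limit $n\to\infty$. For the upper bound it simply uses the Pimsner--Popa inequality $E\ge d^{-2}\,id$, which in the variational integrand replaces $\varphi_{\xi,E,\pi}(y_t^*y_t)$ by $d^{-2}\varphi_{\xi,id,\pi}(y_t^*y_t)$; a change of variable $t\to d^2 t$ then reduces the expression to the variational formula for $D_{BS}(id\|id)=0$. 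No Choi operators, no Kraus form, no hyperfiniteness are invoked, and the argument works for any finite-index inclusion of factors.

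Your route, by contrast, rests on two assertions that you leave unproved. First, that the minimal conditional expectation $E:\cM\to\cN$ can be written as a Kraus channel $E(m)=\sum_i a_i^* m a_i$ with \emph{finitely many} $a_i\in\cM$: a Pimsner--Popa basis $\{w_i\}$ gives the reconstruction $m=\sum_i w_i E(w_i^* m)$, but this is a decomposition of $id$, not of $E$, and it is not clear how to extract a Kraus form for $E$ with Kraus operators inside $\cM$ from it in the generality needed. Second, the eigenvector identity $C_E\Omega=\lambda^{-1}\Omega$ (equivalently $\sum_i\omega(a_i^*)a_i=\lambda^{-1}1$) is exactly the point where the value $[\cM:\cN]$ must enter, and you only gesture at how minimality and the Pimsner--Popa relations would produce it. Finally, proposition~\ref{Krausprop} carries the hypotheses ``finite dimensional or properly infinite, hyperfinite'', none of which are assumed in proposition~\ref{prop:E}; your suggested dilation handles the type~II$_1$ case but hyperfiniteness is still an extra assumption. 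The paper's argument sidesteps all of this: the Jones projection and the Pimsner--Popa bound are available for any finite-index inclusion, and the variational formula does the rest.
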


\begin{proof}
a) We let $d^2 = [\cM:\cN]$ and we first show $D_{BS}(id \|E)  \ge \log d^2$ using the variational definition \eqref{channeldiv} for the channel divergence 
in the case of the BS divergence. We let $e$ be the Jones projection for the inclusion, i.e. $\cM$ is generated by $\cN$ and $e$. Then $E(e) = d^{-2}1$. 
Let $\pi$ be the representation of $\cM \odot \cM^{op}$
coming from the standard bimodule $L^2(\cM)$ with underlying Hilbert space $\sH$. Recall that for $\xi \in \sH$ we have by definition
$\varphi_{\xi,E,\pi}(m \otimes m^{op}) = \langle \xi, E(m)J(m^{op})^*J\xi\rangle$ for the quantity appearing in \eqref{channeldiv} for the channel $E:\cM \to \cN$.
We use this bimodule in the variational characterization of proposition \ref{BS} involving a supremum over $n \in \mathbb N$ 
and admissible step functions $(1/n,\infty)\xrightarrow{x} \cM \odot \cM^{op}$ (as well as $y_t:=1-x_t$). We obtain a lower bound by constructing a specific step function 
$x_n$ for each $n \in \mathbb N$ and show that the limit $n \to \infty$ of the variational expression in proposition \ref{BS} tends to a quantity that is at least $\log d^2$. 

For this, we  choose a standard vector $\Omega \in \sH$ for $\cM$ and let $\xi:=e\Omega/\|e\Omega\|$. We also let 
\begin{equation}
x_{t} := 
\begin{cases}
1 - \tfrac{t}{t+d^{-2}}e \otimes 1 & \text{$1/n \le t \le n$,}\\
0 & \text{$t>n$}
\end{cases}
\end{equation}
and we let $y_{t} = 1-x_{t}$. Since $e\xi=\xi$ and $E(e) = d^{-2}1$, we get
\begin{equation}
\varphi_{\xi,E,\pi}(y_{t}^* y_{t}^{}) = \frac{t^2 d^{-2}}{(t+d^{-2})^2}, \quad
\varphi_{\xi,id,\pi}(x_{t}^* x_{t}^{}) = \frac{d^{-4}}{(t+d^{-2})^2}
\end{equation}
in the range $t \le n$. This gives us
\begin{equation}
\begin{split}
&
\int_{1/n}^\infty \{ \varphi_{\xi,id,\pi}(x_{t}^* x_{t}^{}) + \frac{1}{t}  \varphi_{\xi,E,\pi}(y_{t}^* y_{t}^{}) \} \frac{dt}{t}\\
&= \int_{1/n}^{n} \bigg( \frac{d^{-4}}{(t+d^{-2})^2}  + \frac{d^{-2} t}{(t+d^{-2})^2} \bigg) \frac{dt}{t} + \frac{1}{n}\\
&\le \frac{1}{n} + d^{-2} \int_{1/n}^\infty \frac{dt}{t(t+d^{-2})} \\
&=\log n + \log(n^{-1} + d^{-2}) + \frac{1}{n}.
\end{split}
\end{equation}
Then it follows from the variational characterization of the BS divergence (proposition \ref{BS}) that
\begin{equation}
\begin{split}
&D_{BS}(\varphi_{\xi,id,\pi}\|\varphi_{\xi,E,\pi})\\
&\ge \sup_n  \left( \log n -
\int_{1/n}^\infty \{ \varphi_{\xi,id,\pi}(x_{t}^* x_{t}^{}) + \frac{1}{t}  \varphi_{\xi,E,\pi}(y_{t}^* y_{t}^{}) \} \frac{dt}{t} 
\right) \\
&\ge \lim_n  \left( \log n -
\int_{1/n}^\infty \{ \varphi_{\xi,id,\pi}(x_{t}^* x_{t}^{}) + \frac{1}{t}  \varphi_{\xi,E,\pi}(y_{t}^* y_{t}^{}) \} \frac{dt}{t} 
\right) \\
&\ge \lim_n \bigg( \log n -(\log n + \log(n^{-1} + d^{-2}) + \frac{1}{n}) \bigg) = \log d^2.
\end{split}
\end{equation}
By the variational characterization of the channel divergence as a supremum of $D_{BS}(\varphi_{\xi,id,\pi}\|\varphi_{\xi,E,\pi})$ over triples $(\cA,\pi,\xi)$
we therefore have $D_{BS}(id\|E) \ge \log d^2$. 

b) 
The conditional expectation satisfies the Pimsner-Popa bound $E \ge d^{-2} id$ \cite{Hia88,PiPo86}. Let $\varepsilon >0$. Then we can choose a triple 
$(\pi,\cA,\xi)$  (consisting of a von Neumann algebra $\cA$, binormal representation $\pi$ on $\sH$ of $\cM \odot \cA^{op}$, and unit vector $\xi$ in $\sH$)  
an $n \in \mathbb N$,
and an admissible step function $(1/n,\infty)\xrightarrow{x} \cM \odot \cA^{op}$ such that 
the supremum in the variational definition \eqref{channeldiv} is saturated up to tolerance $\varepsilon$:
\begin{equation}
\begin{split}
& D_{BS}(id\|E)-\varepsilon \\
&\le \log n - \int_{1/n}^\infty 
\{ \varphi_{\xi,id,\pi}(x_t^{*}x_t^{})+\frac{1}{t}\varphi_{\xi,E,\pi}(y_t^*y_t^{})) \} \frac{dt}{t} \\
&\le \log n - \int_{1/n}^\infty 
\{ \varphi_{\xi,id,\pi}(x_t^{*}x_t^{})+\frac{1}{td^2}\varphi_{\xi,id,\pi}(y_t^*y_t^{}) \} \frac{dt}{t} \\
&= \log d^2 + \log (nd^{-2}) - \int_{1/(nd^{-2})}^\infty 
\{ \varphi_{\xi,id,\pi}(x_t^{*}x_t^{})+\frac{1}{t}\varphi_{\xi,id,\pi}(y_t^*y_t^{}) \} \frac{dt}{t} 
\end{split}
\end{equation}
The right side is $\le \log d^2 + D_{BS}(id\| id)  = \log d^2$ 
using the variational definition again. Since $\varepsilon > 0$ can be as small as we like we have shown $D_{BS}(id\| E) \le \log d^2$.
We have already shown $D_{BS}(id\|E) \ge \log d^2$ in a) so the proof  is complete.
\end{proof}

\section{Applications to QFT}

\subsection{Algebraic QFT}

\noindent
{\bf Axioms:} See \cite{Haa12} as a general reference. In the preceding sections we have described properties of the channel divergence $D_{BS}$ in the general context of 
von Neumann algebras. In the context of local QFT, one has additional structure due to spacetime localization, and it turns out that 
this structure plays very nicely with the notion of channel divergence.
We restrict to the setting of Minkowski spacetime $({\mathbb R}^n,\eta)$ for $n \ge 2$. 

A causal diamond $O$ is 
the causal completion of an open, simply connected subset $U$ with compact closure of a Cauchy surface, where the 
causal structure is induced by the Minkowski metric.
A QFT in the algebraic setting is an assignment of simply connected causal diamonds to von Neumann factors $O \mapsto \cA(O)$ represented on the 
same Hilbert space $\sH$, subject to the following conditions:

\begin{enumerate}
\item[a1)] (Isotony) $\cA(O_1) \subset \cA(O_2)$ if $O_1 \subset O_2$. We write $\cA = \overline{\bigcup_O \cA(O)}$ with completion in the operator norm.
\item[a2)] (Causality) $[\cA(O_1),\cA(O_2)]=\{0\}$ if $O_1$ is space-like related to $O_2$. 
\item[a3)] (Relativistic covariance) For each $g \in \widetilde{{\rm P}}$ 
covering\footnote{The covering group is needed to describe non-integer spin.} a Poincar\'e transformation 
$(\Lambda,a) \in {\rm P} = {\rm SO}_+(n-1,1) \ltimes \mathbb{R}^{n}$, there is an automorphism $\alpha_g$ on $\cA$ 
such that $\alpha_g \cA(O) = \cA(\Lambda O+a)$ for all causal diamonds $O$ and such that 
$\alpha_g \alpha_{g'} = \alpha_{gg'}$ and $\alpha_{(1,0)}=id$ is the identity.
\item[a4)] (Vacuum) There is a strongly continuous positive energy representation $g \mapsto U(g)$ on $\sH$ implementing 
$\alpha_g(a) = U(g) a U(g)^*$ for all $a \in \cA$. There is a vector $\Omega$ (the vacuum) which is cyclic for $\cA$ and 
such that $U(g)\Omega = \Omega$ for all $g \in \widetilde{{\rm P}}$. 
Positive energy means that if $x \in {\mathbb R}^n \subset {\rm P}$ is a translation by $x$, we can write
\begin{equation}
U(x) = \exp(-i \eta(P,x)),
\end{equation}
and the vector generator $P=(P^0,P^1,\dots,P^{n-1})$ has spectral values $p$ in the forward lightcone
 $p \in \bar{V}^+ = \{ p \in \mathbb{R}^n \mid \eta(p,p) \ge 0, p^0>0\}$.
 \item[a5)] (Additivity) Let $O_i$ be a family of causal diamonds such that $O = \cup_i O_i$. Then $(\cup_i \cA(O_i))'' = \cA(O)$.
\end{enumerate}

For technical purposes, we also impose a ``nuclearity condition.'' The main purpose of that condition is to ensure a certain regularity on the theory, and several closely related 
versions of such a condition have been proposed. In so far as we can see, many of these 
would more or less all be equally good for our purposes. For definiteness, we impose~\cite{buchholz1986causal}:

\begin{enumerate}
\item[a6)] (BW-nuclearity) Let $A$ be a ball of radius $r$ in Cauchy surface, and let $O_r$ be the corresponding causal diamond. Consider the map
\begin{equation}
\Theta_{\beta,r}: \cA(O_r) \to \sH \ , \quad a \mapsto e^{-\beta H}a\Omega \ ,
\end{equation}
where $\beta>0$ and where $H=P^0$ is the Hamiltonian, i.e. the time-component of $P$ in item a4). 
It is required that there exist positive constants $s>0$ and $c = c(r)>0$ such that for $r>0, \beta>0$ we have
$
\| \Theta_{\beta, r} \|_1 \le e^{(c/\beta)^s} \ .
$
Here we use the nuclear 1-norm discussed further e.g. in~\cite{pietsch2022nuclear}. 
\end{enumerate}

We now comment on two well-known important consequences of these results for our analysis, see \cite{Haa12} for further details and references. 
First, by the Reeh-Schlieder theorem, $\Omega$ is cyclic and separating for each $\cA(O)$, so the vacuum automatically provides a standard 
form for each local von Neumann algebra. Secondly, each $\cA(O)$ is a hyperfinite factor of type III$_1$ \cite{buchholz1987universal} 
which is a unique object up to von Neumann isomorphism by 
\cite{haagerup1987conne}. As a consequence, we can apply all of our results on the channel divergences $D_{BS}$ to the local algebras $\cA(O)$.

It is important to stress that a priori, $\cA(O)$ is defined only for causal diamonds associated with simply connected subsets of a Cauchy surface. If $K$ is any open, causally 
complete subset of $\mathbb{R}^n$, we could define either
\begin{equation}
\label{ABdef}
\cA(K) = (\vee_{O \subset K} \cA(O))'', \quad \text{or} \quad 
\cB(K) = (\vee_{O' \subset K'} \cA(O'))'.
\end{equation}
In either case, a prime on a region $O$ or $K$ means the causal complement. For topologically trivial causal diamonds $O$ with compact closure 
it is a result that $\cA(O') = \cA(O')$ (Haag duality), so by a5), $\cA(O) = \cB(O)$ for topologically trivial causal diamonds. 
Either $\cA$ or $\cB$ gives a net in the above sense with the possible exception of condition a5) in the case of $\cB$. 
$\cB(K)$ is in general strictly bigger for topologically non-trivial regions $K$ than $\cA(K)$.

\medskip
\noindent
{\bf DHR-Representations:} See \cite{buchholz1982locality, doplicher1971local, doplicher1969fields, Haa12}. The Hilbert space $\sH$ may be considered as the defining (vacuum) representation of the net, but it is physically relevant to also consider other representations. 
We shall consider representation $\pi$ of $\cA$ on a Hilbert space $\sH_\pi$ which are ultraweakly continuous when restricted to 
any $\cA(O)$ and which satisfy:
\begin{itemize}
\item (DHR-selection criterion) \cite{doplicher1971local,doplicher1969fields}
$\pi |_{\cA(O)' \cap \cA}$ is unitarily equivalent to the vacuum representation for some $O$.
\item (BF-selection criterion) \cite{buchholz1982locality}
The automorphisms $\alpha_g$ in a3) are unitarily implemented in $\pi$, i.e. there exists a strongly continuous positive energy representation 
$U_\pi(g)$ such that $\pi(\alpha_g(a)) = U_\pi(g) \pi(a) U_\pi(g)^*$ such that the generator $P_\pi$ of translations $U_\pi(x) = \exp(-i \eta(P_\pi,x))$ on $\sH_\pi$
has an isolated mass shell in its spectrum, i.e. $spec(P_\pi) \subset \{ p: \eta(p,p) = M^2, p^0>0\} \cup  \{ p: \eta(p,p) \ge m^2, p^0>0\}$ for some $m^2 > M^2 >0$. 
\end{itemize}

If we let $V$ be a unitary implementing the unitary equivalence in the first item, then $\rho(a):= V^* \pi(a) V$ is an endomorphism of $\cA$ such that 
\begin{equation}
\rho|_{\cA(O)' \cap \cA} = id. 
\end{equation}
One says that $\rho$ is a {\em localized endomorphism (in $O$)} for this reason. Furthermore, $\rho$ is {\em transportable} in the following 
sense. Let $O_1:=O$, $\rho_1:= \rho$ and let $O_2$ be another causal diamond. Then there exists a unitary $u_{21} \in \cA(O_1) \vee \cA(O_2)$ such that 
$Ad u_{21} \circ \rho_1 =: \rho_2$ is an endomorphism satisfying the DHR- and BF- selection criteria that is localized in $O_2$. We will refer to the endomorphisms
arising from the selection criteria above as a localized, transportable endomorphism. 

Let $\rho$ be a transportable irreducible endomorphism localized in some $O$. As is known, the selection criteria imply a considerable amount 
of further algebraic structure associated with $\rho$. First, we have a so-called conjugate 
transportable endomorphism $\bar \rho$  together with solutions $r,\bar r \in \cA(O)$ and $d_\rho \ge 1$ to the intertwining
\begin{equation}
\rho\bar \rho(a) \bar r = \bar ra, 
\quad 
\bar \rho \rho(a)r = ra \quad (a \in \cA)
\end{equation} 
and conjugacy relations
\begin{equation}
\label{conjugacy}
r^* r = d_\rho1, \quad \bar r^* \bar r = d_\rho1, \quad r^* \bar \rho(\bar r) = 1 = \bar r^* \rho(r).
\end{equation}
A left inverse of $\rho$ is given by $\Psi_\rho(a):=d^{-1}_\rho r^* \bar \rho(a) r$. The Jones projection for the extension $\cA(O)$ of $\rho(\cA(O))$ 
is given by $e_\rho=d^{-1}_\rho \bar r \bar r^*$ and the minimal conditional expectation is $E_\rho: \cA(O) \to \rho(\cA(O))$ is given by $E_\rho = \rho \circ \Psi_\rho$.
$d_\rho$ is referred to as the ``statistical dimension'' of $\rho$. By the index-statistics theorem \cite{Lon89}, $d_\rho=[\cA(O):\rho(\cA(O))]^{1/2}<\infty$.
Similar constructions apply to reducible endomorphisms/representations. 

For a variant of this theory for conformal field theories in $n=2$ spacetime dimensions 
see \cite{fredenhagen1989superselection,Lon89}. 

\subsection{Complexity of channels in AQFT}

Let $T$ be a completely positive map of the quasi-local algebra $\cA$,\footnote{Note that $\cA$ is a $C^*$- and not a von Neumann algebra, but the notion of 
completely positive map is still defined.} such that, for some sufficiently large causal diamond $O$, 
it restricts to a channel of $\cA(O)$. By \cite[Theorem 2.10]{Lon18}, we may write 
\begin{equation}
\label{Ta}
T(a) = v^*\theta(a)v, \quad a \in \cA(O)
\end{equation}
where $v$ is an isometry of $\cA(O)$ and $\theta$ is an endomorphism of $\cA(O)$. This motivates the following definition. 

\begin{definition}
A channel $T: \cA \to \cA$ is called localized and transportable if it is of the form \eqref{Ta} for some localized (in some causal diamond $O$) 
transportable endomorphism $\theta$ and some isometry $v \in \cA(O)$.
\end{definition}

\begin{remark}
1) Note that by definition, $T|_{\cA(O)' \cap \cA} = id$, i.e. $T$ is the identity in the causal complement of $O$. 

2) It is easy to see that the set of localized and 
transportable channels is stable under composition, i.e. the composition is again of the form \eqref{Ta}. 
It is also closed under convex combinations: Let $T_i$ be localized, transportable channels of the form \eqref{Ta}
with $v_i, \theta_i$, and $p_i$ a 
probability distribution on a finite set. Since $\cA(O)$ is type III \cite{Buc87}, 
there are isometries $a_i$ in $\cA(O)$ satisfying the Cuntz algebra relations \eqref{Cuntz}. 
Then set $v = \sum \sqrt{p_i}  a_iv_i$ and $\theta(m) = \sum a_i \theta_i(m) a_i^*, m \in \cA(O)$. It 
follows that $\theta$ is a localized, transportable endomorphism of $\cA$, that $v$ is an isometry of $\cA(O)$, and that $\sum p_iT_i$ is of the form  \eqref{Ta}.

3) One may generalize the definition to channels between two nets $\cA, \cB$.
\end{remark}

Let $x \in \mathbb{R}^n$, let $O+x$ be the translate of $O$
and let $\alpha_x(a) = U(x)^* a U(x)$ be the translate of an element  $a \in \cA(O)$ to $\cA(O+x)$ as in a3). We consider $T_x = \alpha_x \circ T \circ \alpha_{-x}$
as a channel of $\cA(O+x)$. Then 
\begin{equation}
T_x(a) = \alpha_x(v)^* U(x)\theta(\alpha_{-x}(a))U(x)^* \alpha_x(v)
\end{equation}
Since $\theta$ is by assumption an endomorphism satisfying the DHR- and BF selection criteria, translations are implemented in the sector $\theta$ by 
a strongly continuous group of unitaries $U_\theta(x), x \in \mathbb{R}^n$ so we have $\theta(\alpha_{-x}(a)) = U_\theta(x)^* \theta(a) U_\theta(x)$. 
Furthermore $u(x) = U(x) U_\theta(x)^*$ is an element of $\cA(O) \vee \cA(O+x)$ transporting $\theta$ to an endomorphism $\theta_x = Ad_{u(x)} \circ \theta$ localized in 
in $O+x$, and we have, with $v_x = \alpha_x(v) \in \cA(O+x)$, 
\begin{equation}
\label{Tax}
T_x(a) = v_x^* \theta_x^{} (a) v_x^{}, a \in \cA(O+x),
\end{equation}
i.e. it has the same form as \eqref{Ta} but with $\theta_x, v_x$ now localized in $O+x$. 

We now make a proposal for the complexity of a channel in algebraic quantum field theory. 

\begin{definition}
The complexity of a localizable and transportable channel $T$ is defined as 
\begin{equation}
\label{eqO}
c(T) = D_{BS}(id|_{\cA(O)}\|T|_{\cA(O)}),
\end{equation}
where $O$ is any sufficiently large causal diamond such that $T|_{\cA(O)' \cap \cA} = id$.
\end{definition}

To be precise, we should demonstrate:

\begin{lemma} 
The definition of $c(T)$ does not depend on the 
 sufficiently large causal diamond $O$ chosen in \eqref{eqO}. 
\end{lemma}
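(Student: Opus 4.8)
The plan is to combine the split property, which is a consequence of the BW-nuclearity condition a6), with the external additivity of $D_{BS}$ proved above. Writing $T(a)=v^*\theta(a)v$ as in \eqref{Ta}, with $v$ and the localized transportable endomorphism $\theta$ supported in some causal diamond $O_0$, one has $T|_{\cA(O)'\cap\cA}=id$ exactly when $O\supseteq O_0$, so the ``sufficiently large'' diamonds are precisely those containing $O_0$. Abbreviating $c_O:=D_{BS}(id|_{\cA(O)}\|T|_{\cA(O)})$, and noting that any two such diamonds are contained in a common larger one, it suffices to show that $c_O$ takes the same value for all $O\supseteq O_0$.

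The key step is a reduction formula. Fix a sufficiently large $O$ and let $\cN$ be a type I factor with $\cA(O_0)\subseteq\cN\subseteq\cA(O)$; such $\cN$ exists by the split property applied to $\cA(O_0)\subset\subset\cA(O)$. Since $\cN$ is a type I factor one has the spatial factorization $\cA(O)\cong\cN\,\bar\otimes\,\cN^c$ with $\cN^c:=\cN'\cap\cA(O)$, and I claim that under this isomorphism $T|_{\cA(O)}$ becomes $T|_\cN\otimes id_{\cN^c}$. Indeed $\cN^c\subseteq\cN'\subseteq\cA(O_0)'$, so $\theta|_{\cN^c}=id$; since moreover $v\in\cA(O_0)\subseteq\cN$ commutes with every $c\in\cN^c$, a short computation with $\theta(c)=c$ gives $T(ac)=T(a)c$ and $T(ca)=cT(a)$ for $a\in\cN$, $c\in\cN^c$. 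Using $ca=ac$ this shows both $T|_{\cN^c}=id$ and that $T(a)$ commutes with $\cN^c$, hence $T(a)$ lies in the relative commutant of the factor $\cN^c$ inside $\cA(O)$, which is exactly $\cN$; thus $T$ preserves $\cN$ and restricts to a channel $T|_\cN$. External additivity then yields
\begin{equation*}
c_O=D_{BS}(id_{\cA(O)}\|T|_{\cA(O)})=D_{BS}(id_\cN\|T|_\cN)+D_{BS}(id_{\cN^c}\|id_{\cN^c})=D_{BS}(id_\cN\|T|_\cN),
\end{equation*}
using $D_{BS}(id\|id)=0$; this is legitimate because $\cN\cong B(\cK)$ and $\cN^c$ (type III$_1$) are both hyperfinite, as required by that proposition.

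To remove the dependence on $O$, given two sufficiently large diamonds $O',O''$ I would pick $O\supseteq O'\cup O''$ and a single type I factor $\cN$ with $\cA(O_0)\subseteq\cN\subseteq\cA(O')$ (again by split). Then $\cN\subseteq\cA(O')\subseteq\cA(O)$ and $\cA(O_0)\subseteq\cN$, so the reduction formula applies to $\cN$ both inside $\cA(O')$ and inside $\cA(O)$, giving $c_{O'}=D_{BS}(id_\cN\|T|_\cN)=c_O$; the same argument with a factor inside $\cA(O'')$ gives $c_{O''}=c_O$, whence $c_{O'}=c_{O''}$. The main obstacle is the verification that $T$ respects the factorization $\cN\,\bar\otimes\,\cN^c$, i.e. that it simultaneously preserves $\cN$ and acts trivially on $\cN^c$: this is exactly where Einstein locality enters, since it is the triviality of $T$ outside $O_0$ (encoded in $\cN^c\subseteq\cA(O_0)'$) that forces $T=T|_\cN\otimes id_{\cN^c}$ and lets external additivity collapse the ancillary factor $\cN^c$ to a zero contribution; a secondary technical point is merely ensuring the intermediate type I factor with the stated inclusions exists, which is guaranteed by the split property obtained from a6).
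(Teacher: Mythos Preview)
Your argument is essentially correct and takes a genuinely different route from the paper. The paper works entirely through the martingale property: it picks increasing nets of \emph{finite-dimensional} subalgebras $\cM_n\nearrow\cA(O_1)$ and $\cM_n^c\nearrow\cA(O_1)'\cap\cA(O)$, uses the martingale property to pass to $\cM_n\vee\cM_n^c$, invokes $\cM_n\vee\cM_n^c\cong\cM_n\otimes\cM_n^c$ together with external additivity at the finite-dimensional level, and then martingales back up to $\cA(O_1)$. You instead use the split property once to produce an \emph{infinite} type~I interpolating factor $\cN$, get the exact spatial tensor factorization $\cA(O)\cong\cN\bar\otimes\cN^c$, and apply external additivity directly in the von Neumann algebra setting. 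Your route is cleaner and more conceptual (one split instead of a double limit), while the paper's route stays closer to the finite-dimensional toolbox already developed.

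There is, however, a small gap you should close. The split property for $\cA(O_0)\subset\cA(O')$ needs $\overline{O_0}\subset O'$ strictly; your claim that the admissible diamonds are ``precisely those containing $O_0$'' includes $O'=O_0$ itself, for which no interpolating type~I factor is available. (The ``exactly when'' is also too strong---$T$ could be localizable in something smaller---but that is harmless.) Your argument as written therefore shows $c_{O'}=c_{O''}$ only for diamonds \emph{strictly} containing $O_0$. One way to finish is to run your reduction with $\cN$ chosen between $\cA(O_1)$ and $\cA(O)$ (rather than between $\cA(O_0)$ and $\cA(O_1)$): since $T|_{\cA(O_1)'\cap\cA}=id$ by hypothesis, the same commutant argument gives $T|_{\cN^c}=id$ and $T(\cN)\subseteq\cN$, so $c_O=D_{BS}(id_\cN\|T|_\cN)$; but you then still need to identify this with $c_{O_1}$, which requires an extra step (e.g.\ the dilation property, item~4 of theorem~\ref{thm1-4}, together with the fact that a type~I$_\infty$ factor containing the type~III factor $\cA(O_1)$ looks like $\cA(O_1)\bar\otimes B(\ell^2)$). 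Alternatively, you could simply invoke the martingale property once at the very end to bridge $\cA(O_0)$ to a slightly larger diamond---which is, in effect, borrowing the paper's mechanism for that boundary case.
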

\begin{proof}
Let $O_1,O_2$ be causal diamonds such that 
$T|_{\cA(O_i)' \cap \cA} = id$ and let $O$ be the causal completion of $O_1 \cup O_2$.
 Let $\cA(O_1)^c := \cA(O_1)' \cap \cA(O)$, let $\cM_n^c$ be a net 
of finite dimensional type I algebras exhausting $\cA(O_1)^c$, and let $\cM_n$ be a net of finite-dimensional type I algebras 
exhausting $\cA(O_1)$, which exist as a consequence of requirement a6), see 
\cite{Buc87}. Then $(\cup_n \cM_n^c \vee \cM_n^{})''=\cA(O)$ and by the martingale property for $D_{BS}$
we have 
\begin{equation}
\begin{split}
D_{BS}(id|_{\cA(O)} \| T|_{\cA(O)}) 
=& \lim_n D_{BS}(id |_{\cM_n^c \vee \cM_n^{}} \| T|_{\cM_n^c \vee \cM_n^{}})\\
=& \lim_n D_{BS}(id |_{\cM_n^{}} \otimes id |_{\cM^c_n} \| T|_{\cM_n^{}} \otimes id |_{\cM^c_n})\\
=& \lim_n D_{BS}(id |_{\cM_n^{}} \| T|_{\cM_n^{}})\\
=&D_{BS}(id |_{\cA(O_1)} \| T|_{\cA(O_1)}).
\end{split}
\end{equation} 
In the second equality, we used that $\cM_n^{} \vee \cM_n^c \cong \cM_n^{} \otimes \cM_n^c$ as 
von Neumann algebras because $\cM_n$ and $\cM^c$ are finite-dimensional and that 
$T$ acts trivially on $\cM_n^c$ by locality. In the third step we used external additivity of $D_{BS}$. In the 
last step we used again the martingale property. The same could be shown for $O_1 \to O_2$. 
Thus the definition of $c(T)$ is independent of whether we take $O_1$ or $O_2$
in \eqref{eqO}.
\end{proof}

\begin{theorem}
\label{mainthm}
The complexity $c$ has the following properties ($T,T_i$ localized, transportable channels): 
\begin{enumerate}
\item  (Identity) $c(id)=0$.

\item (Internal subadditivity) $c(T_1 \circ T_2) \le c(T_1) + c(T_2)$.

\item (Convexity) Let $\{p_i\}$ be a probability distribution on a finite set. Then 
$c(\sum p_i T_i) \le \sum p_i c(T_i)$.

\item (Locality) Let $T_1$ and $T_2$ be channels localized in spacelike related causal diamonds with strictly positive distance. 
Then $c(T_1 \circ T_2) = c(T_1) + c(T_2)$.

\item ($N$-ary local measurement) Let $M(a) = \sum_i e_i a e_i$ be the channel describing an $N$-ary 
local measurement associated with the $N$ mutually orthogonal non-trivial projections $e_i \in \cA(O)$, $\sum e_i=1$. Then
$M$ is localized and transportable and $c(M) = \log N$. 

\item (Net extensions) Let $\cB$ be a net extending $\cA$ \cite{longo1995nets}
with corresponding conditional expectation $E$. Then 
\begin{equation}
c(E) = \log [\cB(O):\cA(O)].
\end{equation}

\item (Localized transportable endomorphisms I) 
Let $\rho$ be a transportable localized transportable endomorphism with conditional expectation $E_\rho$
and statistical dimension $d_\rho$. Then $E_\rho$ is a localized transportable channel and 
\begin{equation}
c(E_\rho) = \log d_\rho^2.  
\end{equation}

\item (Translations) If $T_x$ is the translate of $T$ by $x \in \mathbb{R}^n$ as in \eqref{Tax}, then $c(T_x) = c(T)$.

\item (Localized transportale endomorphisms II) If $\rho \neq id$ is a localized  transportable endomorphism of $\cA(O)$, then $c(\rho) =\infty$.

\item (Local unitaries) Let $u \in \cA(O)$ be a unitary and $U(a) = u^*au$ be the corresponding channel on $\cA(O)$. Then if $U \neq id$, we have
$c(U)=\infty$.
\end{enumerate}
\end{theorem}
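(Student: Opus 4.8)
The plan is to treat the ten items as short deductions from the structural results already established, concentrating the genuinely new work in items 4, 8 and 9. First I would dispatch the three abstract properties. Item 1 is immediate since $D_{BS}(\varphi\|\varphi)=0$, whence $c(\mathrm{id})=D_{BS}(\mathrm{id}\|\mathrm{id})=0$. For item 2 I would apply internal subadditivity with both channels in the first slot equal to $\mathrm{id}$: using $\mathrm{id}\circ\mathrm{id}=\mathrm{id}$ gives $c(T_1\circ T_2)\le D_{BS}(\mathrm{id}\|T_1)+D_{BS}(\mathrm{id}\|T_2)=c(T_1)+c(T_2)$, where one uses that $\cA(O)$ is hyperfinite of type III$_1$ so the proposition applies, and that the composite is again localized and transportable by the remark after the definition. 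Item 3 follows the same way from joint convexity (Theorem \ref{thm1-4}(3)) applied to the constant family $S_i=\mathrm{id}$, since $\sum_i p_i\,\mathrm{id}=\mathrm{id}$ and $\sum_{i,j}p_iq_j D_{BS}(\mathrm{id}\|T_j)=\sum_j q_j c(T_j)$, again noting that $\sum_i p_iT_i$ is localized and transportable.

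For item 4 I would produce the equality directly by reduction to finite dimensions. Spacelike separation with strictly positive distance gives, via BW-nuclearity (a6), the split property, hence a tensor factorization $\cA(O_1)\vee\cA(O_2)\cong\cA(O_1)\bar\otimes\cA(O_2)$ and, on a large diamond $O\supset O_1\cup O_2$, a compatible factorization $\cA(O)\cong\cA(O_1)\bar\otimes\cA(O_2)\bar\otimes\cC$ with $\cC$ the relative commutant. Choosing filtrations by finite-dimensional type I algebras adapted to this splitting and invoking the martingale property, I would pass to the finite-dimensional tensor factors, on which $T_1\circ T_2$ acts as $T_1\otimes T_2\otimes\mathrm{id}$ (by locality of each $T_i$) and $\mathrm{id}$ as $\mathrm{id}\otimes\mathrm{id}\otimes\mathrm{id}$; external additivity then yields $c(T_1\circ T_2)=c(T_1)+c(T_2)+0$. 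Item 8 I would obtain from covariance: $\alpha_x$ is a vacuum-preserving isomorphism $\cA(O)\to\cA(O+x)$ implemented by $U(x)$ with $U(x)\Omega=\Omega$, and since $T_x=\alpha_x\circ T\circ\alpha_{-x}$ while $\mathrm{id}=\alpha_x\circ\mathrm{id}\circ\alpha_{-x}$, the manifest invariance of the variational definition \eqref{channeldiv} under $\alpha_x$ forces $c(T_x)=c(T)$.

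The value computations 5, 6, 7 reduce to earlier results once each channel is recognized as localized and transportable. For item 5 I would write $M(a)=v^*\theta(a)v$ with $\theta(a)=\sum_i w_i a w_i^*$ for Cuntz isometries $w_i\in\cA(O)$ (available since $\cA(O)$ is type III) and $v=\sum_i w_i e_i$; one checks $v^*v=1$, that $\theta$ is an inner, hence localized transportable, endomorphism, and that $v^*\theta(a)v=\sum_i e_i a e_i$, so $c(M)=\log N$ by the $N$-ary measurement proposition. For item 7 I would use the DHR data to put $E_\rho=\rho\circ\Psi_\rho$ into the form \eqref{Ta} with $\theta=\rho\bar\rho$ and $v=d_\rho^{-1/2}\rho(r)$, where $r^*r=d_\rho 1$ gives $v^*v=1$; as $\rho(\cA(O))\subset\cA(O)$ is a finite-index inclusion of factors with $[\cA(O):\rho(\cA(O))]=d_\rho^2$ by the index–statistics theorem, Proposition \ref{prop:E} gives $c(E_\rho)=\log d_\rho^2$. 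Item 6 is the same argument for $\cA(O)\subset\cB(O)$ with its minimal conditional expectation, yielding $c(E)=\log[\cB(O):\cA(O)]$.

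Finally the two infinite-complexity results. Item 10 is immediate from Corollary \ref{cor:1} and its remark: $U(a)=u^*au$ and $\mathrm{id}$ are both Cuntz-type Kraus channels with a single isometry, so since $\cA(O)$ is properly infinite, $D_{BS}(\mathrm{id}\|U)=\infty$ whenever $u\neq\lambda 1$, i.e. whenever $U\neq\mathrm{id}$. Item 9 is the hard part, and the step I expect to be the main obstacle. The key structural fact is that a nontrivial localized endomorphism $\rho$ admits no Kraus decomposition $\rho(a)=\sum_i\psi_i a\psi_i^*$ with intertwiners $\psi_i\in\cA(O)$, because the intertwiner space $(\mathrm{id},\rho)$ inside $\cA(O)$ is trivial for $\rho\neq\mathrm{id}$ — the charged multiplet realizing $\rho$ lives only in a field extension $\cF(O)\supsetneq\cA(O)$. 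I would make this quantitative through the variational principle (Proposition \ref{BS}) on the standard bimodule $L^2(\cA(O),\Omega)$: the mechanism that produced the finite value $\log d^2$ in the finite-index computation degenerates, since $\rho$ maps $\cA(O)$ isomorphically onto a proper subfactor rather than projecting onto it, so the relevant log-operator-mean of the associated Choi operators is unbounded and $c(\rho)=\infty$. The delicate point, and where I would spend the most care, is turning the heuristic $(\mathrm{id},\rho)=0$ into a genuine divergence of the variational expression uniformly in $n$, presumably by importing the Cuntz-relation analysis of Corollary \ref{cor:1} through the field net (where $\rho$ does become a Cuntz-type channel) while controlling non-integer statistical dimension and the possibly outer automorphism case $d_\rho=1$.
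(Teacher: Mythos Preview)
Your handling of items 1--3, 5--7, and 10 matches the paper. For item 4 the paper avoids your claimed triple factorization $\cA(O)\cong\cA(O_1)\bar\otimes\cA(O_2)\bar\otimes\cC$ (which is not a direct consequence of split) and instead works on $\cA(O_1)\vee\cA(O_2)$: the split unitary $W$ conjugates $T_1\circ T_2$ to $T_1\otimes T_2$, and two applications of internal subadditivity (composing both slots with $Ad_W$, then $Ad_{W^*}$) give $D_{BS}(\mathrm{id}\|T_1\circ T_2)=D_{BS}(\mathrm{id}\|T_1\otimes T_2)$, after which external additivity finishes. Your filtration route is plausible but less direct. For item 8 the paper likewise uses internal subadditivity twice (compose with $\alpha_x$ on the left and $\alpha_{-x}$ on the right) rather than arguing invariance of the variational formula; your approach would also work.

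The genuine gap is item 9. Your instinct that triviality of $(\mathrm{id},\rho)$ should force divergence is correct, but your proposed mechanism---passing to a field net where $\rho$ becomes a Cuntz-type Kraus channel and importing Corollary~\ref{cor:1}---is not how the paper proceeds and would run into real obstacles (non-integer $d_\rho$, existence and control of the field net, transferring the divergence back to $\cA(O)$). The concrete idea you are missing is to use the \emph{left inverse} $\Psi_\rho(a)=d_\rho^{-1}r^*\bar\rho(a)r$. From $\Psi_\rho\circ\rho=\mathrm{id}$ and internal subadditivity one gets
\[
D_{BS}(\mathrm{id}\|\rho)\ \ge\ D_{BS}(\Psi_\rho\|\Psi_\rho\circ\rho)=D_{BS}(\Psi_\rho\|\mathrm{id})\ \ge\ D_{BS}(\Psi_\rho\circ\rho\|\rho)=D_{BS}(\mathrm{id}\|\rho),
\]
so $c(\rho)=D_{BS}(\Psi_\rho\|\mathrm{id})$. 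This reduction is the crux: one now has the Jones projection $e_\rho=d_\rho^{-1}\bar r\bar r^*$ with $\Psi_\rho(e_\rho)=d_\rho^{-2}1$, and for $d_\rho>1$ the paper inserts the explicit trial function $x_t=\tfrac{t^{-1}}{t^{-1}+d_\rho^{-2}}\,e_\rho\otimes 1$ (with $\xi=e_\rho\Omega/\|e_\rho\Omega\|$) into Proposition~\ref{BS}; the resulting integral is controlled by $d_\rho^{-2}\log n$, so the variational expression grows like $(1-d_\rho^{-2})\log n\to\infty$. The case $d_\rho=1$ is handled separately: $\rho$ is then an automorphism of $\cA(O)$, and a Reeh--Schlieder/Haag-duality argument shows it is implemented by a unitary in $\cA(O)$, reducing to item 10. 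You anticipated the $d_\rho=1$ subtlety but not this resolution, and for $d_\rho>1$ you did not find either the left-inverse reduction or the explicit trial function, which together constitute the actual proof.
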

\begin{proof}
1)-3),6),10) are taken from section \ref{sec:chdiv}. 

4) Let $T_i$ be localized in $O_i$, where $O_1$ and $O_2$ are spacelike related with strictly positive distance. 
By locality $T_1 \circ T_2(a_1a_2) = T_1(a_1)T_2(a_2), a_i \in  \cA(O_i)$.
Then the BW-nucelarity assumption a6) implies the split property for the algebras $\cA(O_1)$ and $\cA(O_2)$, see \cite{Buc87} or 
\cite[Chapter V.5.2]{Haa12} as a general reference. So there is a unitary $W:\sH \to \sH \otimes \sH$ such that $W^*(a_1 \otimes a_2)W = a_1 a_2$
and consequently $T_1 \circ T_2|_{\cA(O_1)\vee \cA(O_2)} = Ad_W \circ (T_1 \otimes T_2) \circ Ad_{W^*}$, where $T_1 \otimes T_2$ is 
the tensor product channel on $\cA(O_1) \otimes \cA(O_2)$ and $Ad_W X = W^* X W$. In particular, the map $T_1 \circ T_2$ is 
normal on $\cA(O_1) \vee \cA(O_2)$. Then, by applying internal subadditivity twice
\begin{equation}
D_{BS}(id\| T_1 \circ T_2) \ge D_{BS}(Ad_W\| Ad_W \circ (T_1 \otimes T_2)) \ge D_{BS}(id\| T_1 \otimes T_2)
\end{equation}
Since $W$ is unitary, we have the reverse inequality by the same argument backwards, so $c(T_1 \circ T_2) = c(T_1 \otimes T_2) = c(T_1) + c(T_2)$, 
by external additivity. 

5) This follows from section \ref{sec:chdiv}; we only need to show that $H$ is localized and transportable. Since $\cA(O)$ is properly infinite, 
there are isometries $a_i, i=1, \dots, N$ in $\cA(O)$ satisfying the Cuntz algebra relations \eqref{Cuntz}. 
Then we set $v^* = \sum_i e_i a_i^*$ and $\theta(m):=\sum_j a_j m a_j^*, m \in \cA(O)$. It follows that $v^* v = 1$, that $\theta$ is 
a localized, transportable endomorphism, and that $M(m) = v^*\theta(m)v$, as desired. 

7) The formulas $E_\rho(a) = d_\rho^{-1} \rho(r)^* \rho\bar\rho(a) \rho(r)$ and $r^*r=d_\rho 1$ show that $E_\rho$ is localized and transportable
(with $\theta = \rho\bar \rho$, $v=d^{-1/2}_\rho \rho(r)$).
The formula follows from proposition \ref{prop:E} because $d_\rho = [\cA(O):\rho\cA(O)]^{1/2}$ by the index-statistics theorem
\cite{Lon89}.

8) We only need to show $c(T_x) = c(T)$. By applying internal subadditivity twice we see
$D_{BS}(id\| \alpha_x \circ T \circ \alpha_{-x}) \ge D_{BS}(\alpha_{-x}\|T \circ \alpha_{-x}) \ge D_{BS}(id\|T)$.
We can also get the reverse inequality by running this argument backwards, thereby proving the claim. 

9) We view $E_\rho, \Psi_\rho, \rho$ as maps on some $\cA(O)$ such that $\rho$ is localized within $O$.
By using subadditivity and the formula $\Psi_\rho \circ \rho = id$ twice:
\begin{equation}
D_{BS}(id\|\rho) \ge D_{BS}(\Psi_\rho \|\Psi_\rho \circ \rho) = D_{BS}(\Psi_\rho\|id) \ge D_{BS}(\Psi_\rho \circ \rho\|\rho) = D_{BS}(id\|\rho).
\end{equation}
So we must have equality in each step. 

a) We first assume $d_\rho>1$. To get a lower bound (actually $+\infty$) on $D_{BS}(\Psi_\rho \|id)$, we proceed as in the proof of part a) of proposition \ref{prop:E}, noting that the Jones projection is 
$e_\rho = d_\rho^{-1} \bar r \bar r^*$, and $\Psi_\rho(a) = d^{-1}_\rho r^* \bar \rho(a) r$, with $r,\bar r, \bar \rho$ as in the conjugacy relations \eqref{conjugacy}. 
Then, as is well-known, $\Psi_\rho(e_\rho) = d_{\rho}^{-2} 1$, again by the conjugacy relations. As our trial function, we now choose
\begin{equation}
x_{t} := 
\begin{cases}
\tfrac{t^{-1}}{t^{-1}+d^{-2}_\rho}e_\rho \otimes 1 & \text{$1/n \le t \le n$,}\\
0 & \text{$t>n$}
\end{cases}
\end{equation}
and we let $y_{t} = 1-x_{t}$. The rest is similar as in as in the proof of part a) of proposition \ref{prop:E}: We see that with this trial function
\begin{equation}
\varphi_{\xi,\Psi,\pi}(x_t^{} x_t^*) + \frac{1}{t} \varphi_{\xi,id,\pi}(y_t^{} y_t^*) = 
\begin{cases}
\frac{1}{d^2_\rho + t}, & \text{for $1/n \le t \le n$,}\\
\frac{1}{t^2} & \text{$t>n$}
\end{cases}
\end{equation}
and thereby:
\begin{equation}
\begin{split}
D_{BS}(id\|\rho)=&D_{BS}(\Psi_\rho\|id) \\
=& \sup_n \sup_{x,\xi} \left( \log n - \int_{1/n}^\infty \{ 
\varphi_{\xi,\Psi,\pi}(x_t^{} x_t^*) + \frac{1}{t} \varphi_{\xi,id,\pi}(y_t^{} y_t^*)
\} \frac{dt}{t}
\right) \\
\ge& \sup_n  \left( \log n - \int_{1/n}^n \frac{1}{d^2_\rho + t} \frac{dt}{t} - \int_n^\infty \frac{dt}{t^2}
\right)\\
=& \sup_n \frac{1}{d_\rho^2} \left( (d_\rho^2-1) \log n - \log d^2_\rho + \log(1 + \frac{d_\rho^2}{n}) - \log(1+\frac{1}{d_\rho^2 n}) - \frac{d_\rho^2}{n} \right) \\
=& \infty
\end{split}
\end{equation} 
since $d_\rho>1$. 

b) If $d_\rho=1$, then $\rho(\cA(O))=\cA(O)$ and $\rho$ is an automorphism. Viewed as an automorphism of $\cA$, we have 
$\rho(b)=b$ for any $b \in \cA(K)$ so long as the causal diamond $K$ is contained in $O'$. Define the state $\psi = \omega \circ \rho^{-1}$
on $\cA(O)$, where $\omega$ is the vacuum state, and let $\Psi$ be the representer of $\psi$ in the natural cone of $\Omega$. 
Then $Ua\Omega := \rho(a)\Psi$ defines a unitary $U$ and if $b \in \cA(K)$ then clearly $bUa\Omega = b\rho(a)\Psi = 
\rho(ba) \Psi = Uba\Omega$. It follows that $U \in (\cup_{K \subset O'} \cA(K))' = \cA(O')' = \cA(O)$ by Haag duality. 
Thus $\rho$ is inner when restricted to $\cA(O)$ and hence $c(\rho)=\infty$ unless $\rho = id$ by item 10). 
\end{proof}

\begin{remark}
The specific local structure of QFT enters in an indirect way in theorem \ref{mainthm} because it entails that the 
local algebras $\cA(O)$ are hyperfinite, satisfy the split property, are properly infinite (in fact type III$_1$), and have a cyclic and separating vector -- 
the vacuum by the Reeh-Schlieder theorem. This is used in various combinations in the proofs 
of these properties. Nevertheless the 
given specific forms of the axioms are probably not totally essential; in particular it seems unlikely 
that the specific properties of Minkowski spacetime 
are crucial. What is missing in theorem \ref{mainthm} is a property linking the complexity of a channel to notions of energy transfer/cost. 
\end{remark}

\begin{example}
The simplest setting for an inclusion of nets as in 6) in theorem \ref{mainthm} is when $\cA = \cB^G$ is the fixed point net under some finite ``internal gauge'' group $G$, say $\mathbb{Z}_N$ for definiteness. 
$\mathbb{Z}_N$ is acting by unitaries $U(g), g \in \mathbb \mathbb{Z}_N$ on a common Hilbert space $\sH$ for both nets and each $U(g)$ commutes with the 
unitaries implementing the spacetime symmetries. The conditional expectation is just the group average $E(b) = N^{-1} \sum_g U(g) b U(g)^*$. Let $\chi_k$ be 
a  character on $\mathbb{Z}_N$. Then $P_k = \sum_g \chi_k(g) U(g)$ is a projection and $\sH_k = P_k \sH$ is an invariant subspace for each $\cA(O)$ which can be 
seen as an irreducible representation $\pi_k$ of $\sH$. 
\end{example}

\begin{example}
A more complicated class of examples for 6) in theorem \ref{mainthm} arises in $n=2$ conformal CFTs. Start with a conformal net $\cV_c$ on circle $S^1$, e.g. the Virasoro net for some central charge $c<1$. 
The conformal net $\cA$ is obtained as $\cV_c \otimes \cV_c^{op}$ identifying causal diamonds on $\mathbb{R}^2$ with Cartesian products of intervals. Then one can 
obtain an extension for $\cA$ from the set of highest weight representations $\cV_c$ (which give transportable irreducible endomorphisms $\mu, \nu, \dots$ 
on the Virasoro net on the real line) by starting from the representation 
$\sum_{\mu,\nu} Z_{\mu,\nu} \mu \otimes \nu^{op}$ of $\cV_c \otimes \cV_c^{op}$, with $Z_{\mu,\nu}$ the multiplicities in the torus partition functions. 
If $\cB$ is the corresponding net extending $\cA$, the index is given by $[\cB : \cA] =\sum_\mu d_\mu^2$. For details, see \cite{rehren2000canonical}.
\end{example}

\begin{example}
A situation similar to to the previous example arises in local gauge theories of Yang-Mills type in $n=4$ dimensions based on a compact local gauge group $G$: 
Take $K$ to be the causal completion of a solid torus in a Cauchy-surface. Then we have $\cA(K), \cB(K)$ as in \eqref{ABdef}. As argued in \cite{casini2020entanglement}, 
we should have $[\cB(K):\cA(K)] = {\rm dim} Z(G)$ where $Z(G)$ is the center of the gauge group (e.g. $\mathbb{Z}_N$ in the case of $G=SU(N)$). Thus
we relate a property of the gauge group to the complexity of the conditional expectation $E:\cB(K) \to \cA(K)$. An intuitive reasoning for what $E$ does in 
terms of `t Hooft and Wilson loops is given in \cite{casini2020entanglement}.
\end{example}

\begin{example}
\label{DHRexp}
An example for localized endomorphisms $\rho$ as in 7) in theorem \ref{mainthm} in a free field theory is the following \cite[Section 4.7]{hollands2018entanglement}.
Consider a real $N$-component free complex Klein-Gordon quantum field $\phi_I(x), I=1, \dots, N$ in $n=4$ dimensions. We get a net $\cA$ 
of all observables that are invariant under the obvious action of the $SU(N)$-symmetry. 
Consider a tensor $T^{I_1 \dots I_k}, I_j=1, \dots, N$ whose symmetry properties under index permutations are
characterized by some Young-tableau  $\mathbf{\lambda}=(\lambda_1,\dots,\lambda_s)$ where $\lambda_1 \ge \dots \ge \lambda_s$
and where $\lambda_i$ is the number of boxes in the $i$-th row.
Next, take testfunctions $f_I$ with support in a causal diamond $O$. Define
\begin{equation}
\Psi = C \sum_{I_1, \dots, I_k=1}^N T^{I_1 \dots I_k} \phi_{I_1}(f_1) \dots \phi_{I_k}(f_k)\Omega  \ ,
\end{equation}
where $\phi_I(f)=\int \phi_I(x) f(x) d^4 x$ are the smeared KG quantum fields and $\Omega$ the vacuum vector and $C$ is a factor 
such that $\|\Psi\|=1$. 
Let $\dim(\mathbf{\lambda})$ be the dimension of the space of tensors with Young-tableau symmetry $\mathbf{\lambda}$.
By DHR theory \cite{doplicher1971local,doplicher1969fields}, there exist a localized (within $O$), transportable endomorphism $\rho$ of the net
such that 
\begin{equation}
\langle \Omega, \rho(a) \Omega \rangle = \langle \Psi, a \Psi \rangle\quad \text{for $a \in \cA$}
\end{equation}
and the 
statistical dimension $d_\rho$ of this $\rho$ equals the Young-tableau dimension $\dim(\mathbf{\lambda})$.
It is given by a standard formula in terms of the shape of the Young tableau, see e.g. \cite{fischler1981young}, so we obtain from 7) in theorem \ref{mainthm}
in this example,
\begin{equation}
c(E_\rho)= \log \dim (\mathbf{\lambda})^2 = 2 \log \frac{\prod_i (\lambda_i+N-i) \prod_{i<j} (\lambda_i-\lambda_j-i+j)}{(N-1)!} \ ,
\end{equation}
In the following example diagram $\mathbf{\lambda}:$ 
{\tiny
\begin{ytableau}
~&&&&&\\
~&&&&\none&\none\\
~&&\none&\none&\none&\none\\
~&\none&\none&\none&\none&\none
 \end{ytableau}
 }
with $k=13$ and $N=10$, the right side is $2 \log 135$.
\end{example}

\vspace{1cm}

{\bf Acknowledgements:} SH is grateful to Nima Lashkari and Mark Wilde for discussions and  
to the Max-Planck Society for supporting the collaboration between MPI-MiS and Leipzig U., grant Proj.~Bez.\ M.FE.A.MATN0003. 
AR thanks Hideki Kosaki for discussions. Part of this work was carried out while he was visiting ITP, Leipzig U. 
He is grateful financial support and hospitality. AR was partially supported by Department of Excellence MatMod$@$TOV by MIUR.

{\bf Declarations:}
Data sharing not applicable to this article as no datasets were generated or analyzed during the current study. 
The authors are not aware of a conflict of interest on their side related to this article. 

\bibliography{Complexity_AQFT_160223}
\bibliographystyle{ieeetr}

\end{document}